\documentclass[pdflatex,sn-nature]{sn-jnl}

\usepackage{setspace}

\usepackage{graphicx}%
\usepackage{multirow}%
\usepackage{amsmath,amssymb,amsfonts}%
\usepackage{amsthm}%
\usepackage{aliascnt}
\usepackage{mathrsfs}%
\usepackage[title]{appendix}%
\usepackage{xcolor}%
\usepackage{textcomp}%
\usepackage{manyfoot}%
\usepackage{booktabs}%
\usepackage{algorithm}%
\usepackage{algorithmicx}%
\usepackage{algpseudocode}%
\usepackage{listings}%
\usepackage{siunitx}
\usepackage[capitalise]{cleveref}
\usepackage{soul}
\usepackage{bm}

\DeclareMathOperator{\dif}{d}

\DeclareMathOperator{\erf}{erf}
\DeclareMathOperator{\erfc}{erfc}

\DeclareMathOperator{\sgn}{sgn}
\DeclareSIUnit\atom{atom}
\DeclareSIUnit\atoms{atoms}
\newcommand{\derivative}[2]{\frac{\dif #1}{\dif #2}}
\newcommand{\infint}[1]{\int_{-\infty}^{\infty} #1}

\theoremstyle{plain}

\newtheorem{theorem}{Theorem}
\newtheorem{lemma}{Lemma}

\theoremstyle{definition}
\newtheorem{definition}{Definition}

\theoremstyle{remark}

\crefname{theorem}{Theorem}{Theorems}
\Crefname{theorem}{Theorem}{Theorems}

\crefname{lemma}{Lemma}{Lemmas}
\Crefname{lemma}{Lemma}{Lemmas}

\crefname{proposition}{Proposition}{Propositions}
\Crefname{proposition}{Proposition}{Propositions}

\crefname{corollary}{Corollary}{Corollaries}
\Crefname{corollary}{Corollary}{Corollaries}
\begin{document}

\title[Article Title]{An extended scattering kernel formalism for multi-scale gas-surface dynamics}


\author*[1]{\fnm{Sabin-Viorel} \sur{Anton}}\email{s.v.anton@tudelft.nl}

\author[1]{\fnm{Bernardo} \sur{Sousa-Alves}}\email{b.s.alves@student.tudelft.nl}

\author[1]{\fnm{Christian} \sur{Siemes}}\email{c.siemes@tudelft.nl}


\author[1]{\fnm{Jose} \sur{van den IJssel}}\email{j.a.a.vandenijssel@tudelft.nl}

\author[1]{\fnm{Pieter N.A.M.} \sur{Visser}}\email{p.n.a.m.visser@tudelft.nl}

\affil*[1]{\orgdiv{Faculty of Aerospace Engineering}, \orgname{Delft University of Technology}, \orgaddress{\street{Kluyverweg 1}, \city{Delft}, \postcode{2629HS}, \country{The Netherlands}}}


\abstract{Gas-particle interactions with non-absorbing surfaces are commonly described using the scattering-kernel formalism. In this framework, an operator $\mathbf{K}$ maps incident velocity distributions to reflected velocity distributions. The operator is self-adjoint and has norm $\lVert \mathbf{K} \rVert = 1$ in an $L^2$ space weighted by the three-dimensional Maxwell-Boltzmann distribution, and must satisfy non-negativity, normalisation, and reciprocity. In standard formulations, $\mathbf{K}$ represents the aggregate effect of all gas-surface interaction mechanisms through a single operator, without distinguishing the physical scales at which these mechanisms occur. For gas scattering from a rough surface, however, it is advantageous to separate geometric effects associated with distinct roughness scales from the underlying thermochemical processes occurring at the atomic scale. We therefore introduce a roughness-based extension of the scattering-kernel formalism, in which a local kernel is successively lifted to larger scales via single- and multi-reflection operators associated with statistically defined surface morphologies. We derive sufficient conditions under which the resulting global kernels preserve reciprocity, normalisation, and non-negativity whenever these properties hold for the smallest-scale kernel. We further show that these constructions define operators on the space of scattering kernels, and establish the associated multi-scale composition laws that allow independent roughness contributions to be combined recursively. The resulting framework provides a general basis for modelling gas-surface scattering on rough surfaces with arbitrary scale decompositions.
 
}

\keywords{Gas-Surface Interactions, Scattering Kernels, Rarefied Gas Dynamics}

\maketitle

\section{Introduction}\label{sec:Introduction}

Scattering kernels provide a standard mathematical framework for modelling gas–surface interaction (GSI) in orbital aerodynamics \cite{Livadiotti2020}. In their most general formulation, they are defined as non-compact operators that map an incident particle state $(\mathbf{r_i}, \mathbf{v_i}, t_i)$ onto a reflected particle state $(\mathbf{r_r}, \mathbf{v_r}, t_r)$ \cite{Kuer1975, Livadiotti2020, Cercignani1971}, i.e.
\begin{equation} \label{eq:general_scattering_kernel}
    \left|\mathbf{v_r} \cdot \mathbf{n}\right| f_r(\mathbf{r_r}, \mathbf{v_r}, t_r) = \int_{\mathbb{R}^3}\int_{\mathbf{v_i}\cdot\mathbf{n} < 0}\int_{\mathbb{R}} \mathcal{K}\left(\mathbf{r_i} \rightarrow \mathbf{r_r}, \mathbf{v_i} \rightarrow \mathbf{v_r}, t_i \rightarrow t_r \right) \left|\mathbf{v_i} \cdot \mathbf{n}\right| f_i(\mathbf{r_i}, \mathbf{v_i}, t_i) \, \dif t_i \dif \mathbf{v_i} \dif \mathbf{r_i},
\end{equation}
where $\mathbf{r_i}$ and $\mathbf{v_i}$ are the position and velocity, respectively, of a gas particle colliding with the surface at time $t_i$, written in an arbitrary inertial coordinate system, and $\mathbf{r_r}$ and $\mathbf{v_r}$ are the position and velocity, respectively, of the particle exiting the surface at time $t_r$, defined in the same system. Furthermore, $f_i(\mathbf{r_i}, \mathbf{v_i}, t_i)$ and $f_r(\mathbf{r_r}, \mathbf{v_r}, t_r)$ represent the incident and reflected gas particle probability density functions (PDF), respectively. Many scattering kernels are formulated under the assumption that gas–surface collisions are instantaneous and spatially local, i.e. $t_i \approx t_r$ and $\mathbf{r_i} \approx \mathbf {r_r}$ \cite{Livadiotti2020, Cercignani1971, Sentman1961FREEMF}. Under these assumptions, the scattering process reduces to a mapping in velocity space, leading to the simplified definition of
\begin{equation} \label{eq:velocity_scattering_kernel}
    \left|\mathbf{v_r} \cdot \mathbf{n}\right| f_r(\mathbf{v_r}) = \int_{\mathbf{v_i}\cdot\mathbf{n} < 0} \mathcal{K}\left(\mathbf{v_i} \rightarrow \mathbf{v_r}\right) \left|\mathbf{v_i} \cdot \mathbf{n}\right| f_i(\mathbf{v_i}) \, \dif \mathbf{v_i}.
\end{equation}
In contrast, recent advances in physically resolved GSI modelling \cite{Anton2025, Jorge2025, Schtte2025, Endo2025} indicate that gas scattering is intrinsically multi-scale. The interaction is governed not only by atomic-scale thermochemical processes, but also by the geometric morphology of engineering surfaces, whose roughness spans length scales from nanometres to millimetres. In fact, it was shown in \cite{Anton2025} that GSI processes can be separated into two scale-based categories: local interactions, which comprise thermochemical and corrugation-induced processes occurring at the length scale of the surface potential well, and global interactions, which comprise scattering effects induced by geometric roughness. These two classes differ in both characteristic length and characteristic time scales. Consequently, the assumptions of spatial and temporal locality in \cref{eq:velocity_scattering_kernel} are no longer valid, and \cref{eq:general_scattering_kernel} must be used. Nevertheless, this formulation is not well suited for modelling scale-dependent interactions separately, as it does not explicitly isolate the contribution of each individual scale. Moreover, it was shown in \cite{Anton2025} that, in the classical limit, surface roughness acts as a rotational operator applied to its corresponding local kernel, subject to shadowing and masking constraints in both the incident and reflected variables. Consequently, \cref{eq:general_scattering_kernel} may be reformulated as a recursive relation, constrained by shadowing, that links a local kernel and the roughness statistics at a given length scale to the corresponding global kernel. Although such an approach was explored in \cite{Liang2018} for a specific choice of Gaussian roughness statistics, shadowing function and local kernel, a general multi-scale formulation, valid for arbitrary roughness statistics, shadowing functions and admissible local kernels, has not yet been established.

Herein, we propose a new class of scattering kernels that decomposes the GSI process into an arbitrary number of scale-resolved operators. Each operator represents the action of a statistically defined roughness scale on the composite kernel formed by all smaller scales, with an arbitrary local kernel prescribing the gas dynamics at the smallest scale. We then establish sufficient conditions under which this class of kernels satisfies reciprocity, non-negativity, and normalisation. Finally, we introduce a scattering operator based on this formalism and prove that it defines an action of the abelian group composed of all surface morphologies with the summation operation onto the set of all scattering kernels. We therefore show that this operator provides a systematic means of approximating scattering from surface morphologies with arbitrary power spectral densities.

The paper is organised as follows. In \cref{sec:Preliminaries}, we introduce the mathematical definitions used throughout this work, including the notions of a surface and a gas adopted here, as well as the properties required of a physically admissible scattering kernel. In \cref{sec:Roughness_kernel}, we formulate single-reflection and multi-reflection kernels for a rough surface with characteristic scale $R$, and derive sufficient conditions under which these kernels satisfy reciprocity, normalisation, and non-negativity. In \cref{sec:Multiscale_formalism}, we introduce the associated scattering operator $\circ$, which acts between a surface of a given scale and a kernel local to that scale. We then show that this operator induces a homomorphism between the summation of surfaces and the composition of their corresponding scattering kernels. Finally, in \cref{sec:Study_case}, we verify the proposed multi-scale formalism through a study case involving scattering from a two-scale poly-Gaussian surface, using the scattering model in \cite{Anton2025}. Throughout this work, key derived expressions are highlighted using boxed equations.

\section{Preliminaries}\label{sec:Preliminaries}

\subsection{Surface and Gas Definitions} \label{subsec:surface_gas_definitions}

We begin by formalising the notion of a rough surface. Classical gas-surface scattering kernels are generally formulated for smooth, planar surfaces, for which the geometry is fully characterised by a single surface normal vector, $\mathbf{n}$ \cite{Maxwell1879, Sentman1961FREEMF, Cercignani1971, Cercignani2001, Livadiotti2020}. Here, we extend this framework by considering any surface that can be represented by a continuous and differentiable height profile $\xi(x,y,t)$, where $x$ and $y$ denote the in-plane spatial coordinates and $t$ denotes time. To this height profile, we associate a surface object $\Psi$, together with a characteristic length scale $R(t)$ derived from the autocovariance function of $\xi$, $C(\mathbf{r},t)$. In this sense, $R(t)$ plays a role analogous to the Taylor microscale in turbulence modelling, as it extracts a characteristic spatial scale from a two-point correlation function \cite{Taylor1935}. We therefore define $\Psi$ as follows.

\begin{definition} \label{def:surface_general}
    Let $\Gamma : \mathbb{R}^3 \times \mathbb{R} \rightarrow \mathbb{R}$,
    $\Gamma = \Gamma(x,y,z,t)$, be a $\mathcal{C}^2$ continuous level-set function, where $x,y,z$ denote Cartesian spatial coordinates and $t$ denotes time. We further assume that
    \begin{equation}
        \nabla_{\mathbf{r}} \Gamma(x,y,z,t) \neq \mathbf{0},
        \qquad
        \forall \, (x,y,z,t) \in \mathbb{R}^3 \times \mathbb{R}
        \text{ such that } \Gamma(x,y,z,t)=0,
    \end{equation}
    where
    \begin{equation}
        \nabla_{\mathbf{r}} \Gamma
        =
        \begin{bmatrix}
            \partial_x \Gamma &
            \partial_y \Gamma &
            \partial_z \Gamma
        \end{bmatrix}^T.
    \end{equation}
    Then, for each fixed $t \in \mathbb{R}$, we define the surface
    $\Psi(t) \subset \mathbb{R}^3$ as the zero level set of $\Gamma$, i.e.
    \begin{equation}
        \Psi(t)
        =
        \left\{
        (x,y,z) \in \mathbb{R}^3
        \;\middle|\;
        \Gamma(x,y,z,t)=0
        \right\}.
    \end{equation}
    Equivalently, the space-time surface associated with $\Gamma$ may be written as
    \begin{equation}
        \Psi
        =
        \left\{
        (x,y,z,t) \in \mathbb{R}^4
        \;\middle|\;
        \Gamma(x,y,z,t)=0
        \right\}.
    \end{equation}
    We further define the local surface normal field
    $\mathbf{n_L} : \mathbb{R}^3 \times \mathbb{R} \rightarrow \mathbb{S}^2$,
    $\mathbf{n_L} = \mathbf{n_L}(x,y,z,t)$, on $\Psi$ by
    \begin{equation}
        \mathbf{n_L}(x,y,z,t)
        =
        \frac{\nabla_{\mathbf{r}} \Gamma(x,y,z,t)}
        {\left\lVert \nabla_{\mathbf{r}} \Gamma(x,y,z,t) \right\rVert},
        \qquad
        \forall \, (x,y,z,t) \in \Psi.
    \end{equation}
    If $\Psi$ is regarded as a random surface, or as a member of an ensemble of
    surfaces, we define the associated probability density of local normals as
    \begin{equation}
        p_n : \mathbb{S}^2 \rightarrow [0,\infty),
        \qquad
        p_n = p_n(\mathbf{n_L}),
    \end{equation}
    where $p_n(\mathbf{n_L}) \, \dif \mathbf{n_L}$ gives the probability of
    observing a local normal in an infinitesimal neighbourhood of
    $\mathbf{n_L} \in \mathbb{S}^2$. We assume this distribution
    to be stationary and homogeneous.
    
    Finally, we define $\mathcal{P}$ as the set of all admissible surfaces, i.e.,
    \begin{equation}
        \begin{aligned}
            \mathcal{P}
            =
            \biggl\{
            \Psi
            \;\biggm|\;
            &\exists \, \Gamma \in \mathcal{C}^2(\mathbb{R}^3 \times \mathbb{R})
            \text{ satisfying the conditions above,} \\
            & \text{such that }
            \Psi
            =
            \left\{
            (x,y,z,t) \in \mathbb{R}^4
            \;\middle|\;
            \Gamma(x,y,z,t)=0
            \right\} \biggr\}.
        \end{aligned}
    \end{equation}
\end{definition}

\begin{definition} \label{def:surface_height}
    Let $\Psi$ be a surface in the sense of \cref{def:surface_general}, with associated level-set function
    $\Gamma : \mathbb{R}^3 \times \mathbb{R} \rightarrow \mathbb{R}$,
    $\Gamma = \Gamma(x,y,z,t)$, assumed to be of class $\mathcal{C}^2$. Let the local surface normal
    $\mathbf{n_L} : \mathbb{R}^3 \times \mathbb{R} \rightarrow \mathbb{S}^2$ be defined as in
    \cref{def:surface_general}, and let $p_n : \mathbb{S}^2 \rightarrow [0,\infty)$,
    $p_n = p_n(\mathbf{n_L})$, denote the corresponding PDF of the local normals.

    We say that $\Psi$ admits a \textbf{height representation} if the function
    $z \mapsto \Gamma(x,y,z,t)$ satisfies
    \begin{equation}
        \partial_z \Gamma(x,y,z,t) \neq 0,
        \qquad
        \forall \, (x,y,z,t) \in \mathbb{R}^3 \times \mathbb{R},
    \end{equation}
    and
    \begin{equation}
        \lim_{z\to\infty}\Gamma(x,y,z,t) > 0 \text{ and }
        \lim_{z\to-\infty}\Gamma(x,y,z,t) < 0,
        \qquad
        \forall \, (x,y,t) \in \mathbb{R}^2 \times \mathbb{R}.
    \end{equation}
    Under these conditions, there exists a unique function
    $\xi : \mathbb{R}^2 \times \mathbb{R} \rightarrow \mathbb{R}$,
    $\xi = \xi(x,y,t)$, of class $\mathcal{C}^2$, such that
    \begin{equation}
        \Gamma(x,y,t) = 0,
        \qquad
        \forall \, (x,y,t) \in \mathbb{R}^2 \times \mathbb{R},
    \end{equation}
    and therefore
    \begin{equation}
        \Psi
        =
        \left\{
        (x,y,z,t) \in \mathbb{R}^3 \times \mathbb{R}
        \;\middle|\;
        z = \xi(x,y,t)
        \right\}.
    \end{equation}
    Then, the local normal may be written equivalently as
    \begin{equation}
        \mathbf{n_L}(x,y,t)
        =
        \frac{1}{\sqrt{1 + (\partial_x \xi)^2 + (\partial_y \xi)^2}}
        \begin{bmatrix}
            -\partial_x \xi &
            -\partial_y \xi &
            1
        \end{bmatrix}^T,
        \qquad
        \forall \, (x,y,t) \in \mathbb{R}^2 \times \mathbb{R}.
    \end{equation}
    Finally, we define $\mathcal{P}^h \subset \mathcal{P}$ as the set of all admissible surfaces admitting such a height representation, i.e.,
    \begin{equation}
        \begin{aligned}
            \mathcal{P}^h
            =
            \biggl\{
            \Psi
            \;\biggm|\;
            &\exists \, \Gamma \in \mathcal{C}^2(\mathbb{R}^3 \times \mathbb{R})
            \text{ satisfying the conditions above, and } \\
            &\exists \, \xi \in \mathcal{C}^2(\mathbb{R}^2 \times \mathbb{R})
            \text{ such that }
            \Psi =
            \left\{
            (x,y,z,t) \in \mathbb{R}^3 \times \mathbb{R}
            \;\middle|\;
            z = \xi(x,y,t)
            \right\}
            \biggr\}.
        \end{aligned}
    \end{equation}
\end{definition}

\begin{definition} \label{def:length_scale}
    Let $\Psi \in \mathcal{P}$ be a surface with a height representation in the sense of \cref{def:surface_height}, with associated height profile $\xi : \mathbb{R}^2 \times \mathbb{R} \rightarrow \mathbb{R}$, $\xi = \xi(x,y,t)$. Assume that for each fixed $t \in \mathbb{R}$, $\xi(\cdot,\cdot,t)$ is a real-valued, stationary, second-order random field on $\mathbb{R}^2$ with finite variance. Then, its autocovariance function $C : \mathbb{R}^2 \times \mathbb{R} \rightarrow \mathbb{R}$ is defined by
    \begin{equation}
        C(\mathbf{r},t)
        =
        \mathbb{E}\!\left[
        \left(\xi(\mathbf{x},t)-\mathbb{E}[\xi(\mathbf{x},t)]\right)
        \left(\xi(\mathbf{x}+\mathbf{r},t)-\mathbb{E}[\xi(\mathbf{x}+\mathbf{r},t)]\right)
        \right],
    \end{equation}
    where $\mathbf{x} = \begin{bmatrix}x & y\end{bmatrix}^T$ and $\mathbf{r} = \begin{bmatrix}r_x & r_y\end{bmatrix}^T \in \mathbb{R}^2$. By stationarity, $C$ depends only on the lag $\mathbf{r}$ and time $t$. The power spectral density $\Phi : \mathbb{R}^2 \times \mathbb{R} \rightarrow [0,\infty)$ associated with $\xi$ is defined as
    \begin{equation}
        \Phi(\pmb{\kappa},t)
        =
        \frac{1}{(2\pi)^2}
        \int_{\mathbb{R}^2}
        C(\mathbf{r},t)
        \exp\!\left[- i \pmb{\kappa}\cdot\mathbf{r}\right]
        \,\dif \mathbf{r},
    \end{equation}
    where $\pmb{\kappa} = \begin{bmatrix}\kappa_x & \kappa_y\end{bmatrix}^T \in \mathbb{R}^2$.
    Assuming that $\Phi$ admits finite zeroth and second spectral moments, we define \textbf{the characteristic length scale of $\Psi$}, $R : \mathbb{R} \rightarrow [0,\infty)$, $R = R(t)$, as
    \begin{equation}
        R(t)
        =
        \sqrt{
        \frac{
        2\int_{\mathbb{R}^2}\Phi(\pmb{\kappa},t)\,\dif\pmb{\kappa}
        }{
        \int_{\mathbb{R}^2}\|\pmb{\kappa}\|^2\,\Phi(\pmb{\kappa},t)\,\dif\pmb{\kappa}
        }
        }.
    \end{equation}
\end{definition}
An immediate consequence of the definition of $\Psi$ is that, unlike for a planar surface, not every point on the surface is necessarily exposed to an incident particle stream with velocity $\mathbf{v_i}$, except in the special case in which the stream is aligned with the global normal $\mathbf{n_G}$, i.e., when $\mathbf{v_i} \times \mathbf{n_G} = 0$, and when no other parts of the surface occlude the stream. Similarly, for particles reflected from a point $\mathbf{r_r}$ on the surface, not every reflection direction leads to escape from the surface. Here, escape is defined by the condition $\lim_{t\to\infty} \left\lVert \mathbf{r}(t)-\mathbf{r_r} \right\rVert = \infty$, where $\mathbf{r}(t)$ denotes the trajectory of the reflected particle. We therefore introduce the one-point masking and shadowing functions, $\mathcal{M}(\mathbf{r_r}, \mathbf{v_r}, t_r)$ and $\mathcal{S}(\mathbf{r_r}, \mathbf{v_r}, t_r)$, respectively, which describe the probabilities that a particle either reaches the surface or escapes from it under the prescribed kinematic conditions. In addition, we define the two-point shadowing function $\mathcal{S}(\mathbf{r_i}, \mathbf{v_i}, t_i , | , \mathbf{r_r}, \mathbf{v_r}, t_r)$, which describes the probability that a particle departing from $\mathbf{r_r}$ at time $t_r$ with velocity $\mathbf{v_r}$ reaches the point $\mathbf{r_i}$ at time $t_i$ with velocity $\mathbf{v_i}$. Finally, we introduce the visible particle flux $\mathcal{Q}\left(\mathbf{v_i} , | , \mathbf{n_G} \right)$, encoding the average fraction of incident particles with velocity $\mathbf{v_i}$ that impinge on surface elements whose local normals are visible from the incoming direction defined with respect to the global normal $\mathbf{n_G}$.

\begin{definition} \label{def:shadowing_single_scale}
Let $\Psi$ be a surface in the sense of \cref{def:surface_general}, and let $R : \mathbb{R} \to [0,\infty)$, $R = R(t)$, be its characteristic length scale in the sense of \cref{def:length_scale}. We define the normal-averaged two-point shadowing function
\begin{equation}
    \mathcal{S} :
    \left(\mathbb{R}^3 \times \mathbb{R}^3 \times \mathbb{R}\right)
    \times
    \left(\mathbb{R}^3 \times \mathbb{R}^3 \times \mathbb{R}\right)
    \to [0,1],
\end{equation}
written as $\mathcal{S}(\mathbf{r_i},\mathbf{v_i},t_i \mid \mathbf{r_r},\mathbf{v_r},t_r)$, such that $\mathcal{S}(\mathbf{r_i},\mathbf{v_i},t_i \mid \mathbf{r_r},\mathbf{v_r},t_r)$ is the probability that a particle departing from the surface at position $\mathbf{r_r}$, time $t_r$, and velocity $\mathbf{v_r}$ does not collide again with $\Psi$ before reaching position $\mathbf{r_i}$ at time $t_i$ with velocity $\mathbf{v_i}$. 

For fixed $(\mathbf{r_r},\mathbf{v_r},t_r)$, let $\Delta r = \lVert \mathbf{r_i} - \mathbf{r_r} \rVert$ and $\Delta t = t_i - t_r$. Whenever the limit exists, we define the one-point shadowing function $\mathcal{S} : \mathbb{R}^3 \times \mathbb{R}^3 \times \mathbb{R} \to [0,1]$
by 
\begin{equation}
    \mathcal{S}(\mathbf{r_r},\mathbf{v_r},t_r) = \lim_{\substack{\Delta r \to \infty \\ \Delta t \to \infty}} \mathcal{S}(\mathbf{r_i},\mathbf{v_i},t_i \mid \mathbf{r_r},\mathbf{v_r},t_r).
\end{equation}
 Thus, $\mathcal{S} (\mathbf{r_r},\mathbf{v_r},t_r)$ is the probability that a particle leaving the surface from $(\mathbf{r_r},\mathbf{v_r},t_r)$ escapes without any subsequent collision. Similarly, for fixed $(\mathbf{r_i},\mathbf{v_i},t_i)$, whenever the limit exists, we define the one-point masking function $\mathcal{M} : \mathbb{R}^3 \times \mathbb{R}^3 \times \mathbb{R} \to [0,1],$ by
\begin{equation}
    \mathcal{M}(\mathbf{r_i},\mathbf{v_i},t_i)
    =
    \lim_{\substack{\Delta r \to \infty \\ \Delta t \to \infty}}
    \mathcal{S}(\mathbf{r_r},-\mathbf{v_r},t_r \mid \mathbf{r_i},-\mathbf{v_i},t_i).
\end{equation}
Thus, $\mathcal{M}(\mathbf{r_i},\mathbf{v_i},t_i)$ is the probability that a particle arriving from far away with velocity $\mathbf{v_i}$ reaches the surface and collides with it at position $\mathbf{r_i}$ and time $t_i$. If the surface evolution is time-reversal symmetric, then the shadowing and masking functions satisfy
\begin{equation}
    \mathcal{S}(\mathbf{r},-\mathbf{v},-t)
    =
    \mathcal{M}(\mathbf{r},\mathbf{v},t),
    \qquad
    \forall\,(\mathbf{r},\mathbf{v},t)
    \in
    \mathbb{R}^3 \times \mathbb{R}^3 \times \mathbb{R}.
\end{equation}
Taking the derivative of the two-point shadowing function with respect to time, we obtain \textbf{the probability of visible states} $(\mathbf{r_i}, \mathbf{v_i}, t_i)$, given the initial state $(\mathbf{r_r}, \mathbf{v_r}, t_r)$, as
\begin{equation}
    \Dot{\mathcal{S}}(\mathbf{r_i},\mathbf{v_i},t_i \mid \mathbf{r_r},\mathbf{v_r},t_r) = \frac{\dif}{\dif t} \mathcal{S}(\mathbf{r_i},\mathbf{v_i},t_i \mid \mathbf{r_r},\mathbf{v_r},t_r) = \mathcal{S}(\mathbf{r_i},\mathbf{v_i},t_i \mid \mathbf{r_r},\mathbf{v_r},t_r) f_{\varphi}\left(\mathbf{r_i}, \mathbf{v_i}, t_i \right),
\end{equation}
where $f_{\varphi} : \mathbb{R}^3 \times \mathbb{R}^3 \times \mathbb{R} \rightarrow [0, \infty)$, $f_{\varphi} = f_{\varphi}(\mathbf{r_i}, \mathbf{v_i}, t_i)$ is \textbf{the hazard probability} of state $(\mathbf{r_i}, \mathbf{v_i}, t_i)$, and is given as
\begin{equation}
    f_{\varphi}(\mathbf{r_i}, \mathbf{v_i}, t_i) = \frac{\dif}{\dif t}\ln\left[\mathcal{S}(\mathbf{r_i},\mathbf{v_i},t_i \mid \mathbf{r_r},\mathbf{v_r},t_r)\right].
\end{equation}
If we assume the previous state $(\mathbf{r_r}, \mathbf{v_r}, t_r)$ to reside at an infinite distance away from the surface, we may define \textbf{the probability of incident visible states} as
\begin{equation}
    \Dot{\mathcal{M}}(\mathbf{r_i},\mathbf{v_i},t_i) = \lim_{\substack{\Delta r \to \infty \\ \Delta t \to \infty}}\left[\frac{\dif}{\dif t} \mathcal{S}(\mathbf{r_i},\mathbf{v_i},t_i \mid \mathbf{r_r},\mathbf{v_r},t_r)\right] = \mathcal{M}(\mathbf{r_i},\mathbf{v_i},t_i) f_{\varphi}\left(\mathbf{r_i}, \mathbf{v_i}, t_i \right).
\end{equation}
\end{definition}

\begin{definition} \label{def:visible_flux}
    Let $\Psi \in \mathcal{P}$ be a surface in the sense of \cref{def:surface_general}, and let $\mathbf{v_i} \in \mathbb{R}^3$ be the incident velocity of a particle colliding with $\Psi$. Let $\mathbf{n_L} \in \mathbb{S}^2$ and $p_n : \mathbb{S}^2 \rightarrow [0, \infty)$, $p_n = p_n(\mathbf{n_L} \, | \, \mathbf{n_G})$ be the local normal field and its corresponding PDF associated with $\Psi$, where $\mathbf{n_G} \in \mathbb{S}^2$ is the global normal of $\Psi$. Finally, let $G$ and $L$ be the global and local frames associated with $\Psi$, in the sense of \cref{def:global_frame,def:local_frame}, such that $\mathbf{n_L}$ is defined in $L$. Then, we define \textbf{the particle visible flux} as
    \begin{equation}
        \mathcal{Q}(\mathbf{v_i} \, | \, \mathbf{n_G}) = \int_{\mathbf{v_i} \cdot \mathbf{n_L} < 0} \left|\mathbf{v_i} \cdot \mathbf{n_L} \right| \, p_n\left(\mathbf{n_L} \, | \, \mathbf{n_G} \right) \, \dif \mathbf{n_L}.
    \end{equation}
    If $\Psi$ is smooth in $G$ in the sense of \cref{def:local_smoothness}, then $\mathcal{Q}$ simplifies to
    \begin{equation}
        \mathcal{Q}(\mathbf{v_i} \, | \, \mathbf{n_G}) = \int_{\mathbf{v_i} \cdot \mathbf{n_L} < 0} \left|\mathbf{v_i} \cdot \mathbf{n_L} \right| \, \delta \left(\mathbf{n_L} - \mathbf{n_G} \right) \, \dif \mathbf{n_L} = \left|\mathbf{v_i} \cdot \mathbf{n_G} \right|.
    \end{equation}
\end{definition}
The reader should note that the notation adopted here for shadowing and masking is reversed relative to the convention commonly used in optics and wave-scattering literature \cite{Smith1967, Beckmann1965, Brown1980}. In those fields, shadowing refers to the obstruction of the incident flux by the surface, whereas masking refers to the obstruction of the reflected flux with respect to the observer. Here, however, we adopt the opposite convention, defining both quantities from the perspective of the reflection point rather than that of the observer.

We conclude this subsection by defining the assumed equilibrium velocity PDF of the free-stream particles, namely the Maxwell-Boltzmann distribution \cite{Maxwell1879}, given below.

\begin{definition} \label{def:MaxwellBoltzmann}
    Let $\mathbf{v}$ be a vector of dimension 3 $\times$ 1 representing the velocity of a gas particle in the global frame as given by \cref{def:global_frame}. Let $T$, $\mu$, and $\mathcal{R}$ be real numbers representing the gas particle temperature and molar mass, and the ideal gas constant, respectively. Then, $\mathbf{v}$ follows the \textbf{Maxwell-Boltzmann} distribution \cite{Maxwell1879}, $f_0 : \mathbb{R}^3 \rightarrow [0, \infty)$, defined as
    \begin{equation}
        f_0(\mathbf{v} \, | \, T, \mu) = \left[\frac{\mu}{2\pi \mathcal{R} T}\right]^{\frac{3}{2}} \exp\left[- \frac{\mu v^2}{2 \mathcal{R} T} \right].
    \end{equation}
\end{definition}

\subsection{Coordinate system definitions} \label{subsec:coordinate_systems_definitions}

From the specification of a rough surface $\Psi$ in \cref{def:surface_general}, two types of coordinate systems naturally arise in the development of a multi-scale scattering framework: global frames and local frames. A global frame is aligned with the global normal of $\Psi$ (given by the average of all local normals), whereas a local frame is aligned with a specific local normal $\mathbf{n_L}$ at a given point $\mathbf{r_i} = \begin{bmatrix}
    x_i & y_i & z_i
\end{bmatrix}^T$ and time $t_i$. They are defined as follows. 

\begin{definition} \label{def:global_frame}
    Let $\Psi$ be a surface in the sense of \cref{def:surface_general}, with associated local surface normal field
    $\mathbf{n_L} : \mathbb{R}^3 \times \mathbb{R} \to \mathbb{S}^2$,
    $\mathbf{n_L} = \mathbf{n_L}(\mathbf{r},t)$, defined on $\Psi$, where
    $\mathbf{r} = (x,y,z)^T$. For each $L > 0$, let
    \begin{equation}
        B_L
        =
        \left\{
        \mathbf{r} \in \mathbb{R}^3
        \;\middle|\;
        \lVert \mathbf{r} \rVert \leq L
        \right\},
        \qquad
        \Psi_L(t)
        =
        \Psi(t) \cap B_L,
    \end{equation}
    and let $\dif A$ denote the surface-area measure on $\Psi(t)$. Let $\mathbf{v_i}(t)$ be a $3 \times 1$ real vector defining the instantaneous incident velocity of particles at time $t$, and define the time-averaged incident velocity by
    \begin{equation}
        \mathbf{\Bar{v}_i}
        =
        \lim_{T\to\infty}
        \frac{1}{T}
        \int_0^T \mathbf{v_i}(t) \, \dif t,
    \end{equation}
    provided the limit exists and $\mathbf{\Bar{v}_i} \neq \mathbf{0}$. Assuming that the surface-area and time average of $\mathbf{n_L}$ exists and is non-zero, we define
    \begin{equation}
        \mathbf{n_G}
        =
        \lim_{L,T\to\infty}
        \frac{
        \displaystyle
        \int_0^T \int_{\Psi_L(t)}
        \mathbf{n_L}(\mathbf{r},t)
        \, \dif A \, \dif t
        }{
        \displaystyle
        \int_0^T \int_{\Psi_L(t)}
        1
        \, \dif A \, \dif t
        }, \text{ and }  \mathbf{z_G}
        = \mathbf{n_G}.
    \end{equation}
    Next, define
    \begin{equation}
        \mathbf{e}
        =
        \begin{cases}
            \begin{bmatrix}
                1 & 0 & 0
            \end{bmatrix}^T,
            & \text{if }
            \mathbf{z_G}
            \times
            \begin{bmatrix}
                1 & 0 & 0
            \end{bmatrix}^T
            \neq \mathbf{0},
            \\[10pt]
            \begin{bmatrix}
                0 & 1 & 0
            \end{bmatrix}^T,
            & \text{otherwise.}
        \end{cases}
    \end{equation}
    Then we define
    \begin{equation}
        \mathbf{x_G}
        =
        \begin{cases}
            \dfrac{
            \mathbf{\Bar{v}_i}
            -
            \left(
            \mathbf{\Bar{v}_i} \cdot \mathbf{z_G}
            \right)\mathbf{z_G}
            }{
            \left\lVert
            \mathbf{\Bar{v}_i}
            -
            \left(
            \mathbf{\Bar{v}_i} \cdot \mathbf{z_G}
            \right)\mathbf{z_G}
            \right\rVert
            },
            & \text{if }
            \mathbf{\Bar{v}_i} \times \mathbf{z_G} \neq \mathbf{0},
            \\[14pt]
            \dfrac{
            \mathbf{e}
            -
            \left(
            \mathbf{e} \cdot \mathbf{z_G}
            \right)\mathbf{z_G}
            }{
            \left\lVert
            \mathbf{e}
            -
            \left(
            \mathbf{e} \cdot \mathbf{z_G}
            \right)\mathbf{z_G}
            \right\rVert
            },
            & \text{otherwise.}
        \end{cases}
    \end{equation}
    Finally, we define
    \begin{equation}
        \mathbf{y_G}
        =
        \mathbf{z_G} \times \mathbf{x_G}.
    \end{equation}
    The ordered set $\left\{\mathbf{x_G}, \mathbf{y_G}, \mathbf{z_G}\right\}$ forms a right-handed orthonormal basis of $\mathbb{R}^3$, referred to as the global coordinate system associated with $\Psi$ and $\mathbf{\Bar{v}_i}$.
\end{definition}

\begin{definition} \label{def:local_frame}
    Let $\{\mathbf{x_G}, \mathbf{y_G}, \mathbf{z_G}\}$ denote the global reference frame as in \cref{def:global_frame}, associated with the surface $\Psi$, as given by \cref{def:surface_general}. Let $\mathbf{n_L} : \mathbb{R} \rightarrow \mathbb{R}^3, \, \mathbf{n_L} = \mathbf{n_L}(t)$ be a real vector of dimension $3 \times 1$ describing the local surface normal of $\Psi$ at time $t$, as in \cref{def:global_frame}. We assume the interactions in this scale to occur over time spans much shorter than in the global frame of \cref{def:global_frame}. Hence, $\mathbf{n_L}(t) \approx \mathbf{n_L}(t_c), \, \forall\, t \in \mathbb{R}$, where $t_c$ is the time where the local interaction begins. Let $\theta_{n_1} \in [0, \frac{\pi}{2})$ and $\theta_{n_2} \in [0, 2\pi)$ be two angles describing the direction of $\mathbf{n_L}$ as
    \begin{equation}
        \mathbf{n_L} = \begin{bmatrix}
            \sin(\theta_{n_1})\cos(\theta_{n_2}) & \sin(\theta_{n_1})\sin(\theta_{n_2}) & \cos(\theta_{n_1})
        \end{bmatrix}^T.
    \end{equation}
    Then, we define $\mathbf{z_L} = \mathbf{n_L}$, and $\mathbf{x_L}$ as
    \begin{equation}
        \mathbf{x_L} = \begin{cases}
            \frac{\mathbf{z_G} - \left(\mathbf{n_L} \cdot \mathbf{z_G} \right) \mathbf{n_L}}{\lVert\mathbf{z_G} - \left(\mathbf{n_L} \cdot \mathbf{z_G} \right) \mathbf{n_L} \rVert} & \text{if } \mathbf{n_L} \times \mathbf{z_G} \neq 0, \\
            \begin{bmatrix}
                1 & 0 & 0
            \end{bmatrix}^T & \text{otherwise}.
        \end{cases}
    \end{equation}
    Finally, we define $\mathbf{y_L} = \mathbf{x_L} \times \mathbf{z_L}$. The ordered set $\left\{\mathbf{x_L}, \mathbf{y_L}, \mathbf{z_L}\right\}$, shown in \cref{fig:coordinate_systems} in bronze, forms a right-handed orthonormal basis of $\mathbb{R}^3$, referred to as the local coordinate system associated with $\Psi$ and $\mathbf{n_L}$. For any vector $\left.\mathbf{r}\right|^G$ written in the global frame, a transformation to the local frame is achieved as
    \begin{equation} \label{pp2:eq:global_local_vector_trans}
        \left.\mathbf{r}\right|^L = T_{y}(\theta_{n_1})T_{z}(\theta_{n_2}) \left.\mathbf{r}\right|^G, \quad \text{where} \quad T_z(\theta) = \begin{bmatrix}
            \cos(\theta) & \sin(\theta) & 0 \\
            -\sin(\theta) & \cos(\theta) & 0 \\
            0 & 0 & 1
        \end{bmatrix}, \quad T_y(\theta) = \begin{bmatrix}
            \cos(\theta) & 0 & -\sin(\theta) \\
            0 & 1 & 0 \\
            \sin(\theta) & 0 & \cos(\theta)
        \end{bmatrix}.
    \end{equation}
\end{definition}
\begin{figure}[H]
    \centering
    \includegraphics[width=0.8\linewidth]{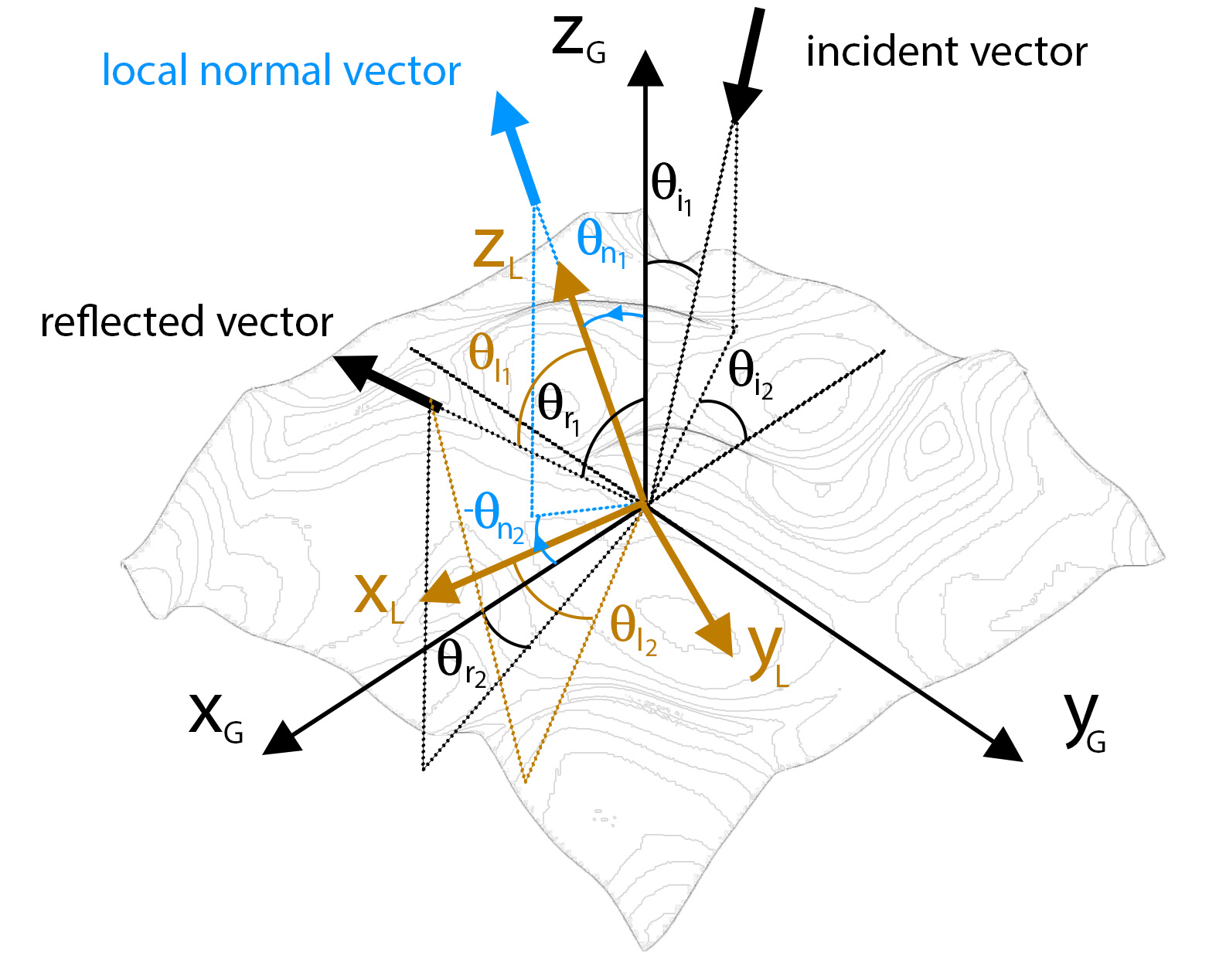}
    \caption{Schematic of the local and global coordinate systems with the associated angular definitions indicated.}
    \label{fig:coordinate_systems}
\end{figure}

\subsection{Multi-scale surface definitions} \label{subsec:multiscale_def}

We now generalise the definition of a rough surface to account for roughness over an arbitrary number of length scales. We first define a multi-scale surface $\Psi^{(m)}$ as an object composed of $m$ single-scale surfaces in the sense of \cref{def:surface_general}. We then introduce the special case of multi-scale surfaces admitting a height representation, for which the resulting boundary can be described by a height profile $\xi(x,y,t)$ in the global frame. Finally, we extend the shadowing, masking and visible flux functions of \cref{def:shadowing_single_scale,def:visible_flux} to the multi-scale setting.

\begin{definition} \label{def:surface_multiscale}
    Let $\Psi_1, \Psi_2, \dots, \Psi_m$ be $m$ surfaces in the sense of \cref{def:surface_general}, where $m \in \mathbb{Z}^+$. We further define the reference frames $F_0, F_1, \dots, F_m$ such that $F_0$ is a global frame in the sense of \cref{def:global_frame}, and, for each $k \in \overline{1,m}$, $F_k$ is the local frame in the sense of \cref{def:local_frame} associated with $\Psi_k$. By convention, any quantity $a^{(k)}$ is defined in frame $F_k$ and expressed in the global frame $F_0$. For each $k \in \overline{1,m}$, let $\mathbf{n}^{(k)} : \mathbb{R}^3 \times \mathbb{R} \rightarrow \mathbb{S}^2$ denote the normal field associated with $\Psi_k$, expressed in frame $F_{k-1}$. Moreover, let $\mathbf{n}^{(0)} : \mathbb{R}^3 \times \mathbb{R} \rightarrow \mathbb{S}^2$ be the deterministic global normal field, i.e.
    \begin{equation}
        p_{n_0}\Bigl(\mathbf{n}^{(0)}\Bigr)
        =
        \delta\Bigl(\mathbf{n}^{(0)} - \mathbf{n_G}\Bigr),
    \end{equation}
    for a given global normal $\mathbf{n_G}$.

    For each $k \in \overline{1,m}$, let
    \begin{equation}
        p_{n_k} : \mathbb{S}^2 \times \mathbb{S}^2 \rightarrow [0,\infty),
        \qquad
        p_{n_k}
        =
        p_{n_k}\Bigl(\mathbf{n}^{(k)} \,\Big|\, \mathbf{n}^{(k-1)}\Bigr),
    \end{equation}
    denote the conditional PDF of the normal field associated with $\Psi_k$ and defined in frame $F_k$, conditioned on the normal at the immediately coarser scale and expressed in $F_{0}$.

    We then define the \textbf{multi-scale surface} $\Psi^{(m)}$ associated with $\Psi_1,\dots,\Psi_m$ as any surface in the sense of \cref{def:surface_general} whose local normal distribution $p_n^{(m)} : \mathbb{S}^2 \rightarrow [0,\infty)$ is given by
    \begin{equation}
        \begin{aligned}
            p_n^{(m)}(\mathbf{n})
            &=
            \int_{\mathbb{S}^2} \cdots \int_{\mathbb{S}^2}
            p_{n_m}\Bigl(\mathbf{n}^{(m)} \,\Big|\, \mathbf{n}^{(m-1)}\Bigr)
            \prod_{k=1}^{m-1}
            p_{n_k}\Bigl(\mathbf{n}^{(k)} \,\Big|\, \mathbf{n}^{(k-1)}\Bigr)
            \,
            \dif \mathbf{n}^{(m-1)} \cdots \dif \mathbf{n}^{(1)},
        \end{aligned}
    \end{equation}
    for all $\mathbf{n} \in \mathbb{S}^2$, where $\mathbf{n}^{(0)} = \mathbf{n_G}$. 
\end{definition}

\begin{definition} \label{def:local_smoothness}
    Let $\Psi^{(m)} \in \mathcal{P}$ be a multi-scale surface in the sense of \cref{def:surface_multiscale}, where $m \in \mathbb{Z}^+$ is the number of scales, and let $\Psi_1, \Psi_2, \dots \Psi_m \in \mathcal{P}$ be its associated single-scale components in the sense of \cref{def:surface_general}, with $p_n^{(1)}, p_n^{(2)}, \dots, p_n^{(m)}$ being their corresponding normal field PDFs, defined in frames $F_1, F_2, \dots F_m$. 

    We say $\Psi^{(m)}$ is \textbf{smooth} in frame $F_m$ if 
    \begin{equation}
        p_n^{(m)}\Bigl(\mathbf{n}^{(m)} \, \Big| \, \mathbf{n}^{(m-1)}\Bigr) = \delta\Bigl(\mathbf{n}^{(m)} - \mathbf{n}^{(m-1)}\Bigr), \, \forall \, \Bigl(\mathbf{n}^{(m)}, \mathbf{n}^{(m-1)} \Bigr) \in \mathbb{S}^2 \times \mathbb{S}^2.
    \end{equation}
\end{definition}

\begin{definition} \label{def:surface_multiscale_height}
    Let $\Psi^{(m)}$ be a multi-scale surface in the sense of \cref{def:surface_multiscale}, associated with the component surfaces $\Psi_1,\Psi_2,\dots,\Psi_m$ and the conditional normal PDFs
    \begin{equation}
        p_{n_{k}}\Bigl(\mathbf{n}^{(k)} \,\Big|\, \mathbf{n}^{(k-1)}\Bigr),
        \qquad
        k \in \overline{1,m},
        \qquad
        \mathbf{n}^{(0)} = \mathbf{n_G}.
    \end{equation}
    We say that $\Psi^{(m)}$ admits a \textbf{multi-scale height representation} if, for every $k \in \overline{1,m}$, the surface $\Psi_k$ admits a height representation in the same global frame $F_0$ in the sense of \cref{def:surface_height}. That is, for each $k \in \overline{1,m}$, there exists a unique function
    \begin{equation}
        \xi_k^{(0)} : \mathbb{R}^2 \times \mathbb{R} \rightarrow \mathbb{R},
        \qquad
        \xi_k^{(0)} = \xi_k^{(0)}(x,y,t),
        \qquad
        \xi_k^{(0)} \in \mathcal{C}^2(\mathbb{R}^2 \times \mathbb{R}),
    \end{equation}
    such that
    \begin{equation}
        \Psi_k
        =
        \left\{
        (x,y,z,t) \in \mathbb{R}^3 \times \mathbb{R}
        \;\middle|\;
        z = \xi_k^{(0)}(x,y,t)
        \right\}.
    \end{equation}
    Then, defining the PDF of the slope vector $\Dot{\mathbf{\xi}}_k^{(0)} : \mathbb{R}^2 \times \mathbb{R} \rightarrow \mathbb{R}^2, \, \Dot{\mathbf{\xi}}_k^{(0)}(x, y, t) = \begin{bmatrix}
        \partial_x \xi_k^{(0)}(x, y, t) & \partial_y \xi_k^{(0)}(x, y, t)
    \end{bmatrix}^T$ as $p_{s_{k}} : \mathbb{R}^2 \rightarrow [0, \infty), \, p_{s_{k}} = p_{s_{k}}\Bigl(\Dot{\mathbf{\xi}}_k^{(0)}\Bigr)$, we establish the following relation with the local normal distribution $p_n(\mathbf{n})$ associated with $\Psi$,
    \begin{equation}
    \begin{aligned}
        p_{n}\left(\mathbf{n}\right) \, \dif \mathbf{n}
        &=
        \int_{\mathbb{S}^2}\cdots\int_{\mathbb{S}^2}
        \prod_{k=1}^{m}
        p_{n_k}\Bigl(
        \mathbf{n}^{(k)}\left(\Dot{\mathbf{\xi}}_k^{(0)}+\dots+\Dot{\mathbf{\xi}}_1^{(0)}\right)
        \,\Big|\,
        \mathbf{n}^{(k-1)}\left(\Dot{\mathbf{\xi}}_{k-1}^{(0)}+\dots+\Dot{\mathbf{\xi}}_1^{(0)}\right)
        \Bigr)
        \,
        \dif \mathbf{n}^{(m-1)} \cdots \dif \mathbf{n}^{(1)} \, \dif \mathbf{n} \\
        &=
        \int_{\mathbb{R}^2}\cdots\int_{\mathbb{R}^2}
        \prod_{k=1}^{m}
        p_{s_{k1}}\Bigl(
        \Dot{\mathbf{\xi}}_k^{(0)}+\dots+\Dot{\mathbf{\xi}}_1^{(0)}
        \,\Big|\,
        \Dot{\mathbf{\xi}}_{k-1}^{(0)}+\dots+\Dot{\mathbf{\xi}}_1^{(0)}
        \Bigr) \, \dif\!\Bigl(\Dot{\mathbf{\xi}}_m^{(0)}+\dots+\Dot{\mathbf{\xi}}_1^{(0)}\Bigr)
        \dif\!\Bigl(\Dot{\mathbf{\xi}}_{m-1}^{(0)}+\dots+\Dot{\mathbf{\xi}}_1^{(0)}\Bigr)
        \cdots
        \dif \Dot{\mathbf{\xi}}_1^{(0)} \\
        &=
        p_{s_m}\Bigl(\Dot{\mathbf{\xi}}_m^{(0)}+\dots+\Dot{\mathbf{\xi}}_1^{(0)}\Bigr)
        \,
        \dif\!\Bigl(\Dot{\mathbf{\xi}}_m^{(0)}+\dots+\Dot{\mathbf{\xi}}_1^{(0)}\Bigr) \\
        &=
        p_{s_m}\Bigl(\Dot{\mathbf{\xi}}^{(0)}\Bigr)\,\dif \Dot{\mathbf{\xi}}^{(0)},
    \end{aligned}
    \end{equation}
    where the slope field associated with $\Psi^{(m)}$, $\Dot{\mathbf{\xi}}^{(0)} = \sum_{k=1}^m \Dot{\mathbf{\xi}}_k^{(0)}$ is simply the sum of the slopes of the single-scale surfaces $\Psi_k$. Thus, we may define the addition operation between surfaces in the sense of \cref{def:surface_height} by
    \begin{equation}
        + : \mathcal{P}^h \times \mathcal{P}^h \rightarrow \mathcal{P}^h,
        \qquad
        (\Psi_a,\Psi_b) \mapsto \Psi_a + \Psi_b,
    \end{equation}
    where
    \begin{equation}
        \Psi_a + \Psi_b
        :=
        \left\{
        (x,y,z,t) \in \mathbb{R}^3 \times \mathbb{R}
        \;\middle|\;
        z = \xi_a^{(0)}(x,y,t) + \xi_b^{(0)}(x,y,t)
        \right\},
    \end{equation}
    for any $\Psi_a,\Psi_b \in \mathcal{P}^h$ with height representations $\xi_a^{(0)}$ and $\xi_b^{(0)}$ in the same global frame $F_0$. In particular,
    \begin{equation}
        \Psi^{(m)} = \sum_{k=1}^m \Psi_k.
    \end{equation}
    Since the sum of two $\mathcal{C}^2$ height functions is again of class $\mathcal{C}^2$, the set $\mathcal{P}^h$ is closed under $+$. Moreover, $+$ is associative and commutative, the identity element is the flat surface
    \begin{equation}
        0_{\mathcal{P}}
        :=
        \left\{
        (x,y,z,t) \in \mathbb{R}^3 \times \mathbb{R}
        \;\middle|\;
        z = 0
        \right\},
    \end{equation}
    and the inverse of any $\Psi \in \mathcal{P}^h$ with height representation $\xi^{(0)}$ is
    \begin{equation}
        -\Psi
        :=
        \left\{
        (x,y,z,t) \in \mathbb{R}^3 \times \mathbb{R}
        \;\middle|\;
        z = -\xi^{(0)}(x,y,t)
        \right\}.
    \end{equation}
    Hence, $(\mathcal{P}^h,+)$ is an abelian group.
\end{definition}

\begin{definition} \label{def:shadowing_multi_scale}
    Let $\Psi^{(m)}$ be a multi-scale surface, in the sense of \cref{def:surface_multiscale}, where $m \in \mathbb{Z}+$ represents the number of scales. Furthermore, let $\Psi_1, \Psi_2, \dots \Psi_m \in \mathcal{P}$ be its associated single-scale surfaces in the sense of \cref{def:surface_general}, and let $\mathbf{n}^{(1)}, \mathbf{n}^{(2)}, \dots \mathbf{n}^{(m)} : \mathbb{R}^2 \times \mathbb{R} \rightarrow \mathbb{S}^2$ be their corresponding normal fields. Finally, let $\mathbf{n}^{(0)} = \mathbf{n_G} \in \mathbb{S}^2$ be the global normal field associated with $\Psi^{(m)}$. Finally, we define  $\mathcal{S}_1, \mathcal{S}_2, \dots \mathcal{S}_m$ as the two-point shadowing functions associated with these surfaces in the sense of \cref{def:shadowing_single_scale}. Then, we define the two-point shadowing function of $\Psi^{(m)}$ between a previous reflected state $\pmb{\varphi}^{(0)}_{r_{k-1}}$ at reflection $k-1$ and a new incident state $\pmb{\varphi}^{(0)}_{i_k}$ at reflection $k$ as
    \begin{equation}
    \begin{aligned}
        &\mathcal{S}^{(m)}\Bigl(
        \pmb{\varphi}^{(0)}_{i_k},
        \mathbf{n}^{(m-1)}_k,\dots,\mathbf{n}^{(1)}_k,\mathbf{n}^{(0)}_k
        \mid
        \pmb{\varphi}^{(0)}_{r_{k-1}},
        \mathbf{n}^{(m-1)}_{k-1},\dots,\mathbf{n}^{(1)}_{k-1},\mathbf{n}^{(0)}_{k-1}
        \Bigr) =
        \\
        &\qquad =
        \sum_{j=1}^{m}
        \Biggl\{
        \Biggl[
        \prod_{\ell=j+1}^{m}
        \mathcal{S}_{\ell}\Bigl(
        \pmb{\varphi}^{(\ell-1)}_{i_k}
        \,\Big|\,
        \pmb{\varphi}^{(\ell-1)}_{r_{k-1}}, \mathbf{n}^{(\ell-1)}_{k-1}
        \Bigr)
        \Biggr]
        \mathcal{S}_{j}\Bigl(
        \pmb{\varphi}^{(j-1)}_{i_k}
        \,\Big|\,
        \pmb{\varphi}^{(j-1)}_{r_{k-1}}
        \Bigr)
        \\
        &\hspace{28mm}\cdot
        \Biggl[
        \prod_{\ell=j+1}^{m}
        \mathcal{S}_{\ell}\Bigl(
        \pmb{\varphi}^{(\ell-1)}_{i_k}
        \,\Big|\,
        \pmb{\varphi}^{(\ell-1)}_{r_{k-1}}, \mathbf{n}^{(\ell-1)}_{k}
        \Bigr)
        \,
        p_{n_k}^{(\ell-1)}\Bigl(
        \mathbf{n}^{(\ell-1)}_{k}
        \,\Big|\,
        \mathbf{n}^{(\ell-2)}_{k}
        \Bigr)
        \Biggr]
        \Biggl[
        \prod_{\ell=1}^{j-1}
        \delta\Bigl(
        \mathbf{n}^{(\ell)}_{k}
        -
        \mathbf{n}^{(\ell)}_{k-1}
        \Bigr)
        \Biggr]
        \Biggr\},
    \end{aligned}
\end{equation}
where $p_{n_k}^{(\ell)}$ represents the normal field PDF at reflection $k$ and scale $l$. If we further assume the length scales of $\Psi_1, \Psi_2, \dots \Psi_m$ in the sense of \cref{def:length_scale}, $R_1, R_2, \dots R_m$ to be very different, i.e. $R_1 \gg R_2 \gg \dots \gg R_m$, then the expression for $\mathcal{S}^{(m)}$ reduces to
    \begin{equation} \label{eq:shadowing_multiscale}
    \begin{aligned}
        &\mathcal{S}^{(m)}\Bigl(
        \pmb{\varphi}^{(0)}_{i_k},
        \mathbf{n}^{(m-1)}_k,\dots,\mathbf{n}^{(1)}_k,\mathbf{n}^{(0)}_k
        \mid
        \pmb{\varphi}^{(0)}_{r_{k-1}},
        \mathbf{n}^{(m-1)}_{k-1},\dots,\mathbf{n}^{(1)}_{k-1},\mathbf{n}^{(0)}_{k-1}
        \Bigr) =
        \\
        &\qquad =
        \sum_{j=1}^{m}
        \Biggl\{
        \Biggl[
        \prod_{\ell=j+1}^{m}
        \mathcal{S}_{\ell}\Bigl(
        \pmb{\varphi}^{(\ell-1)}_{r_{k-1}} \, | \, \mathbf{n}^{(\ell-1)}_{k-1}
        \Bigr)
        \Biggr]
        \mathcal{S}_{j}\Bigl(
        \pmb{\varphi}^{(j-1)}_{i_k}
        \,\Big|\,
        \pmb{\varphi}^{(j-1)}_{r_{k-1}}
        \Bigr)
        \\
        &\hspace{28mm}\cdot
        \Biggl[
        \prod_{\ell=j+1}^{m}
        \mathcal{M}_{\ell}\Bigl(
        \pmb{\varphi}^{(\ell-1)}_{i_k} \, | \, \mathbf{n}^{(\ell-1)}_{k}
        \Bigr)
        \,
        p_{n_k}^{(\ell-1)}\Bigl(
        \mathbf{n}^{(\ell-1)}_{k}
        \,\Big|\,
        \mathbf{n}^{(\ell-2)}_{k}
        \Bigr)
        \Biggr]
        \Biggl[
        \prod_{\ell=1}^{j-1}
        \delta\Bigl(
        \mathbf{n}^{(\ell)}_{k}
        -
        \mathbf{n}^{(\ell)}_{k-1}
        \Bigr)
        \Biggr]
        \Biggr\}.
    \end{aligned}
\end{equation}
Taking the derivative with respect to time of either of the two expressions above, we may define the hazard probability of the multi-scale surface as
\begin{equation}
\begin{aligned}
    &f_{\varphi}^{(m)}\Bigl(
        \pmb{\varphi}^{(0)}_{i_k},
        \mathbf{n}^{(m-1)}_k,\dots,\mathbf{n}^{(1)}_k,\mathbf{n}^{(0)}_k
        \mid
        \pmb{\varphi}^{(0)}_{r_{k-1}},
        \mathbf{n}^{(m-1)}_{k-1},\dots,\mathbf{n}^{(1)}_{k-1},\mathbf{n}^{(0)}_{k-1}
        \Bigr) = \\
        & \qquad = \frac{\dif }{\dif t} \ln\left[\mathcal{S}^{(m)}\Bigl(
        \pmb{\varphi}^{(0)}_{i_k},
        \mathbf{n}^{(m-1)}_k,\dots,\mathbf{n}^{(1)}_k,\mathbf{n}^{(0)}_k
        \mid
        \pmb{\varphi}^{(0)}_{r_{k-1}},
        \mathbf{n}^{(m-1)}_{k-1},\dots,\mathbf{n}^{(1)}_{k-1},\mathbf{n}^{(0)}_{k-1}
        \Bigr) \right].
\end{aligned}
\end{equation}
Moreover, assuming the initial state $\pmb{\varphi}^{(0)}_{r_{k-1}}$ to be at an infinite distance away from the surface $\Psi^{(m)}$, we derive the multi-scale masking function
\begin{equation}
    \begin{aligned}
        \mathcal{M}^{(m)}\Bigl(
        \pmb{\varphi}^{(0)}_{i_k} \, | \, 
        \mathbf{n}^{(m-1)}_k,\dots,\mathbf{n}^{(1)}_k,\mathbf{n}^{(0)}_k\Bigr) = \mathcal{M}_{1}\Bigl(
        \pmb{\varphi}^{(0)}_{i_k}
        \Bigr)\,
        \Biggl[
        \prod_{\ell=2}^{m}
        \mathcal{M}_{\ell}\Bigl(
        \pmb{\varphi}^{(\ell-1)}_{i_k} \, | \, \mathbf{n}^{(\ell-1)}_{k}
        \Bigr)
        \Biggr].
    \end{aligned}
\end{equation}
Similarly, we may define the multi-scale one-point shadowing function by assuming the new incident state $\pmb{\varphi}^{(0)}_{i_{k}}$ to be at an infinite distance away from the surface, to obtain
\begin{equation}
    \begin{aligned}
        \mathcal{S}^{(m)}\Bigl(
        \pmb{\varphi}^{(0)}_{r_{k-1}} \, | \, 
        \mathbf{n}^{(m-1)}_{k-1},\dots,\mathbf{n}^{(1)}_{k-1},\mathbf{n}^{(0)}_{k-1}\Bigr) = \mathcal{S}_{1}\Bigl(
        \pmb{\varphi}^{(0)}_{r_{k-1}}
        \Bigr)\,
        \Biggl[
        \prod_{\ell=2}^{m}
        \mathcal{S}_{\ell}\Bigl(
        \pmb{\varphi}^{(\ell-1)}_{r_{k-1}} \, | \, \mathbf{n}^{(\ell-1)}_{k-1}
        \Bigr)
        \Biggr].
    \end{aligned}
\end{equation}
Finally, we further define the normal-averaged masking function as
\begin{equation}
    \left\langle\mathcal{M}^{(m)}\Bigl(\pmb{\varphi}^{(0)}_{i_k}\Bigr)\right\rangle = \int_{\mathbb{S}^2}\dots\int_{\mathbb{S}^2}\mathcal{M}^{(m)}\Bigl(
        \pmb{\varphi}^{(0)}_{i_k} \, | \, 
        \mathbf{n}^{(m-1)}_k,\dots,\mathbf{n}^{(1)}_k,\mathbf{n}^{(0)}_k\Bigr) \, \prod_{\ell = 2}^m p_{n_k}^{(\ell-1)}\left( \mathbf{n}_k^{(\ell-1)} \, | \, \mathbf{n}_k^{(\ell-2)}\right) \, \dif \mathbf{n}_k^{(m-1)} \dif \mathbf{n}_k^{(m-2)} \dots \dif \mathbf{n}_k^{(1)}.
\end{equation}
\end{definition}

\begin{definition} \label{def:visible_flux_multiscale}
    Let $\Psi^{(m)} \in \mathcal{P}$ be a multi-scale surface in the sense of \cref{def:surface_multiscale}, where $m \in \mathbb{Z}^+$ is the number of scales, and let $\mathbf{v_i} \in \mathbb{R}^3$ be the incident velocity of a particle colliding with $\Psi^{(m)}$. Let $\Psi_1, \Psi_2, \dots \Psi_m \in \mathcal{P}$ be its associated single scale surfaces in the sense of \cref{def:surface_general}, and let $\mathbf{n}^{(1)}, \mathbf{n}^{(2)}, \dots \mathbf{n}^{(m)} : \mathbb{R}^2 \times \mathbb{R} \rightarrow \mathbb{S}^2$ be their corresponding normal fields. Furthermore, let $p_{n}^{(1)}, p_{n}^{(2), \dots, p_{n}^{(m)}}$ be the local normal PDFs, and let $\mathbf{n}^{(0)} = \mathbf{n_G} \in \mathbb{S}^2$ be the global normal field associated with $\Psi^{(m)}$. Finally, let $F_1 \dots F_m$ be the frames associated with $\Psi_1, \Psi_2, \dots \Psi_m$, in the sense of \cref{def:global_frame}, and let $F_0$ be the global frame of $\Psi^{(m)}$. Then, we define \textbf{the multi-scale particle visible flux} of $\Psi^{(m)}$ recursively as
    \begin{equation}
        \mathcal{Q}^{(m)}(\mathbf{v_i}^{(0)} \, | \, \mathbf{n}^{(0)}) = \int_{\mathbb{S}^2} \mathcal{Q}^{(m-1)}\left(\mathbf{v_i}^{(1)} \, | \, \mathbf{n}^{(1)} \right) \, p_n\left(\mathbf{n}^{(1)} \, | \, \mathbf{n}^{(0)} \right) \, \dif \mathbf{n}^{(1)},
    \end{equation}
    with $\mathcal{Q}^{(0)}(\mathbf{v_i}^{(m)} \, | \, \mathbf{n}^{(m)})$ taking the form
    \begin{equation}
        \mathcal{Q}^{(0)}(\mathbf{v_i}^{(m)} \, | \, \mathbf{n}^{(m)}) = \begin{cases}
            \left|\mathbf{v_i}^{(m)} \cdot \mathbf{n}^{(m)} \right|, & \mathbf{v_i}^{(m)} \cdot \mathbf{n}^{(m)} < 0, \\
            0 & \mathbf{v_i}^{(m)} \cdot \mathbf{n}^{(m)} \geq 0.
        \end{cases}
    \end{equation}
\end{definition}

\subsection{Scattering kernel properties} \label{subsec:scattering_kernel_properties}

We now turn to the notion of scattering kernels, as introduced in \cite{Cercignani1971}, and reformulate the admissibility conditions for rough surface morphologies. In addition to reciprocity, normalisation, and non-negativity, we also impose impermeability (which is intrinsically satisfied by kernels defined on smooth surfaces). Since the fraction of the surface visible to an incident particle flux depends on the incident velocity $\mathbf{v_i}$, this dependence must enter the reciprocity relation as a weighting factor to correctly represent the detailed-balance condition of the Boltzmann equation. A point $\mathbf{r}$ on a surface $\Psi$ may be inaccessible to the incident flux either because it is masked by the surrounding morphology, or because its local normal $\mathbf{n_L}$ is not exposed to the flux, i.e., $\mathbf{v_i}\cdot\mathbf{n_L} > 0$. We therefore distinguish between two reciprocity conditions: a global condition for particles that complete an interaction with the rough surface by entering and leaving it, and a pointwise condition for particles undergoing a single collision at an unmasked point. These definitions are given below, together with the corresponding conditions for impermeability, normalisation, and non-negativity.

\begin{definition} \label{def:impermeability}
    Let  $\mathbf{r_i}$, $\mathbf{r_r}$, $\mathbf{v_i}$ and $\mathbf{v_r}$ be real vectors of dimension 3 $\times$ 1,  representing the incident and reflected position and velocity of a particle, defined in an arbitrary inertial coordinate system, and let $t_i, t_r \in \mathbb{R}$ be the respective entry and exit times. Let $\mathbf{n_G}$ represent a 3 $\times$ 1 global normal unit vector of a given surface $\Psi$ as given by \cref{def:surface_general}, defined in the same frame. Furthermore, let $\mathcal{K}\left(\mathbf{r_i} \rightarrow \mathbf{r_r}, \mathbf{v_i} \rightarrow \mathbf{v_r}, t_i \rightarrow t_r\right)$. Then, $\mathcal{K}$ satisfies \textbf{impermeability} if 
    \begin{equation}
    \begin{aligned}
        \mathcal{K}\left(\mathbf{r_i} \rightarrow \mathbf{r_r}, \mathbf{v_i} \rightarrow \mathbf{v_r}, t_i \rightarrow t_r\right) = 0 \quad & \text{when} \quad \sgn\left(\mathbf{v_i} \cdot \mathbf{n_G} \right) \neq \sgn\left(-\mathbf{v_r} \cdot \mathbf{n_G} \right) \\
        & \forall \, (\mathbf{n_G}, \mathbf{r_i}, \mathbf{r_r}, \mathbf{v_i}, \mathbf{v_r}, t_i, t_r) \in \mathbb{R}^3 \times \mathbb{R}^3 \times \mathbb{R}^3 \times \mathbb{R}^3 \times \mathbb{R}^3 \times \mathbb{R} \times \mathbb{R}.
    \end{aligned}
    \end{equation}
    
\end{definition}

\begin{definition} \label{def:reciprocity}
Let $\mathcal{K} : \mathbb{R}^3 \times \mathbb{R}^3 \times \mathbb{R}
\times \mathbb{R}^3 \times \mathbb{R}^3 \times \mathbb{R}
\rightarrow [0, \infty)$, written as $\mathcal{K} = \mathcal{K}(\mathbf{r_i} \rightarrow \mathbf{r_r}, \mathbf{v_i} \rightarrow \mathbf{v_r}, t_i \rightarrow t_r)$, be a Lebesgue-integrable function defined in the global frame as in \cref{def:global_frame} and associated with a surface $\Psi$ in the sense of \cref{def:surface_general}. Here, the vectors $\mathbf{r_i}$, $\mathbf{r_r}$, $\mathbf{v_i}$ and $\mathbf{v_r}$ of dimension $3 \times 1$ represent the incident and reflected particle positions and velocities, while the real values $t_i$ and $t_r$ represent the corresponding surface entry and exit times. Let $\mathbf{n_G}$ denote the global surface normal unit vector of dimension $3 \times 1$, defined in the same frame. Then, the function $\mathcal{K}$ satisfies \textbf{reciprocity} if
\begin{multline} \label{eq:reciprocity_ultra_general}
    \mathcal{K}\left(\mathbf{r_i} \rightarrow \mathbf{r_r}, \mathbf{v_i} \rightarrow \mathbf{v_r}, t_i \rightarrow t_r \right) \mathcal{Q}(\mathbf{v_i} \, | \,  \mathbf{n_G}) \, f_0(\mathbf{v_i}) \, \left\langle\Dot{\mathcal{M}}(\mathbf{r_i}, \mathbf{v_i}, t_i)\right\rangle \\
    = \mathcal{K}\left(\mathbf{r_r} \rightarrow \mathbf{r_i}, -\mathbf{v_r} \rightarrow -\mathbf{v_i}, -t_r \rightarrow -t_i \right) \mathcal{Q}( -\mathbf{v_r} \, | \,  \mathbf{n_G}) \, f_0(-\mathbf{v_r}) \, \left\langle\Dot{\mathcal{M}}(\mathbf{r_r}, -\mathbf{v_r}, -t_r)\right\rangle,
\end{multline}
where $\Dot{\mathcal{M}} : \mathbb{R}^3 \times \mathbb{R}^3 \times \mathbb{R} \rightarrow [0, \infty)$, $\Dot{\mathcal{M}} = \Dot{\mathcal{M}}(\mathbf{r_i}, \mathbf{v_i}, t_i)$, is the PDF of incident visible states $(\mathbf{r_i}, \mathbf{v_i}, t_i)$ as given by \cref{def:shadowing_single_scale}, $\left\langle \Dot{\mathcal{M}} \right\rangle$ implies averaging over all local surface normals in the sense of \cref{def:shadowing_multi_scale}, and $f_0(\mathbf{v}) = f_0(\mathbf{v} \, | \, T_S, \mu)$ is the Maxwell-Boltzmann distribution as in \cref{def:MaxwellBoltzmann}, with $T_S, \mu \in \mathbb{R}$ representing the surface temperature and gas particle molar mass, respectively. $\mathcal{Q}$, on the other hand, is the incident visible flux, as given by \cref{def:visible_flux}. Expanding $\left\langle \Dot{\mathcal{M}} \right \rangle$ as in \cref{def:shadowing_single_scale}, the reciprocity condition becomes
\begin{multline} \label{eq:reciprocity_general}
    \mathcal{K}\left(\mathbf{r_i} \rightarrow \mathbf{r_r}, \mathbf{v_i} \rightarrow \mathbf{v_r}, t_i \rightarrow t_r \right) \mathcal{Q}( \mathbf{v_i}\, | \,  \mathbf{n_G}) \left\langle f_{\varphi_0}(\mathbf{r_i}, \mathbf{v_i}, t_i) \, \mathcal{M}(\mathbf{r_i}, \mathbf{v_i}, t_i)\right\rangle \\
    = \mathcal{K}\left(\mathbf{r_r} \rightarrow \mathbf{r_i}, -\mathbf{v_r} \rightarrow -\mathbf{v_i}, -t_r \rightarrow -t_i \right)  \mathcal{Q}(-\mathbf{v_r} \, | \,  \mathbf{n_G}) \left\langle f_{\varphi_0}(\mathbf{r_r}, -\mathbf{v_r}, -t_r) \, \mathcal{M}(\mathbf{r_r}, -\mathbf{v_r}, -t_r)\right\rangle,
\end{multline}
where the equilibrium state PDF $f_{\varphi_0} : \mathbb{R}^3 \times \mathbb{R}^3 \times \mathbb{R} \rightarrow [0, \infty)$, $f_{\varphi_0} = f_{\varphi}(\mathbf{r}, \mathbf{v}, t) \, f_0(\mathbf{v})$, with the hazard probability $f_{\varphi}$ given as in \cref{def:shadowing_single_scale}. We may further generalise the above expression for a multi-scale surface $\Psi^{(m)}$ in the sense of \cref{def:surface_multiscale}, as
\begin{multline} \label{eq:reciprocity_general_multi-scale}
    \mathcal{K}\left(\mathbf{r_i}^{(0)} \rightarrow \mathbf{r_r}^{(0)}, \mathbf{v_i}^{(0)} \rightarrow \mathbf{v_r}^{(0)}, t_i \rightarrow t_r \right) \, \mathcal{Q}^{(m)}\Bigl( \mathbf{v_i}^{(0)}\, | \,  \mathbf{n}^{(0)}\Bigr) \, \left\langle f_{\varphi_0}^{(m)}\Bigl(\mathbf{r_i}^{(0)}, \mathbf{v_i}^{(0)}, t_i\Bigr) \, \mathcal{M}^{(m)}(\mathbf{r_i}, \mathbf{v_i}, t_i) \right\rangle \\
    = \mathcal{K}\left(\mathbf{r_r}^{(0)} \rightarrow \mathbf{r_i}^{(0)}, -\mathbf{v_r}^{(0)} \rightarrow -\mathbf{v_i}^{(0)}, -t_r \rightarrow -t_i \right) \, \mathcal{Q}^{(m)}\Bigl( -\mathbf{v_r}^{(0)}\, | \,  \mathbf{n}^{(0)}\Bigr) \, \left\langle f_{\varphi_0}^{(m)}\Bigl(\mathbf{r_r}^{(0)}, -\mathbf{v_r}^{(0)}, -t_r\Bigr) \, \mathcal{M}^{(m)}(\mathbf{r_r}, -\mathbf{v_r}, -t_r)\right\rangle,
\end{multline}
where $m \in \mathbb{Z}^+$ represents the number of scales of $\Psi^{(m)}$. It is trivial to show that \cref{eq:reciprocity_general,eq:reciprocity_general_multi-scale} reduce, for a smooth surface $\Psi$ with $m=1$ and instantaneous collisions ($t_i \approx t_r, \mathbf{r_i} \approx \mathbf{r_r})$, to the expression given in earlier studies \cite{Cercignani1971, Livadiotti2020}, namely
\begin{equation} \label{eq:reciprocity_velocity}
     \mathcal{K}\left(\mathbf{v_i} \rightarrow \mathbf{v_r}\right) \left|\mathbf{v_i} \cdot \mathbf{n_G} \right| f_0(\mathbf{v_i}) = \mathcal{K}\left( -\mathbf{v_r} \rightarrow -\mathbf{v_i}\right)  \left|-\mathbf{v_r} \cdot \mathbf{n_G} \right|  f_0(-\mathbf{v_r}).
\end{equation}
\end{definition}

\begin{definition} \label{def:pointwise_reciprocity}
    Let $\mathcal{K} : \mathbb{R}^3 \times \mathbb{R}^3 \times \mathbb{R}
    \times \mathbb{R}^3 \times \mathbb{R}^3 \times \mathbb{R}
    \rightarrow [0,\infty)$, written as $\mathcal{K} = \mathcal{K}(\mathbf{r_i} \rightarrow \mathbf{r_r}, \mathbf{v_i} \rightarrow \mathbf{v_r}, t_i \rightarrow t_r)$, be a Lebesgue-integrable function defined in the global frame as in \cref{def:global_frame}, and associated with a surface $\Psi$ in the sense of \cref{def:surface_general}. Here, the vectors $\mathbf{r_i}$, $\mathbf{r_r}$, $\mathbf{v_i}$ and $\mathbf{v_r}$, of dimension $3 \times 1$, denote the incident and reflected particle positions and velocities, while $t_i$ and $t_r$ denote the corresponding surface entry and exit times. Let $\mathbf{n_G}$ denote the global surface normal unit vector, also of dimension $3 \times 1$, defined in the same frame. Then, the function $\mathcal{K}$ satisfies \textbf{pointwise reciprocity} if
    \begin{multline} \label{eq:reciprocity_pointwise}
        \mathcal{K}\left(\mathbf{r_i} \rightarrow \mathbf{r_r}, \mathbf{v_i} \rightarrow \mathbf{v_r}, t_i \rightarrow t_r \right)
        \mathcal{Q}\left(\mathbf{v_i} \,\middle|\, \mathbf{n_G}\right)
        f_{\varphi_0}(\mathbf{r_i}, \mathbf{v_i}, t_i)
        \\
        =
        \mathcal{K}\left(\mathbf{r_r} \rightarrow \mathbf{r_i}, -\mathbf{v_r} \rightarrow -\mathbf{v_i}, -t_r \rightarrow -t_i \right)
        \mathcal{Q}\left(-\mathbf{v_r} \,\middle|\, \mathbf{n_G}\right)
        f_{\varphi_0}(\mathbf{r_r}, -\mathbf{v_r}, -t_r),
    \end{multline}
    where $\mathcal{Q}$ and $f_{\varphi_0}$ are given as in \cref{def:visible_flux} and \cref{def:reciprocity}, respectively. It follows directly that this reciprocity relation is satisfied by velocity-reciprocal kernels, in the sense of \cref{eq:reciprocity_velocity}, of the form
    \begin{equation}
        \mathcal{K}(\mathbf{r_i} \rightarrow \mathbf{r_r}, \mathbf{v_i} \rightarrow \mathbf{v_r}, t_i \rightarrow t_r)
        =
        \mathcal{K}(\mathbf{v_i} \rightarrow \mathbf{v_r})
        \delta\left(\mathbf{r_r} - \mathbf{r_i}\right)
        \delta(t_r - t_i),
    \end{equation}
    which are defined on a perfectly smooth surface $\Psi = 0_{\mathcal{P}}$, in the sense of \cref{def:surface_multiscale_height}.
\end{definition}

\begin{definition} \label{def:normalisation}
    Let $\mathcal{K} : \mathbb{R}^3 \times \mathbb{R}^3 \times \mathbb{R}
\times \mathbb{R}^3 \times \mathbb{R}^3 \times \mathbb{R}
\rightarrow [0, \infty)$, written as $\mathcal{K} = \mathcal{K}(\mathbf{r_i} \rightarrow \mathbf{r_r}, \mathbf{v_i} \rightarrow \mathbf{v_r}, t_i \rightarrow t_r)$, be a Lebesgue-integrable function defined in the global frame of \cref{def:global_frame}, where the vectors $\mathbf{r_i}$, $\mathbf{r_r}$, $\mathbf{v_i}$ and $\mathbf{v_r}$ of dimension $3 \times 1$ represent the incident and reflected particle positions and velocities, respectively, and the real values $t_i$ and $t_r$ represent the corresponding surface entry and exit times, respectively. The function $\mathcal{K}$ is said to satisfy the \textbf{normalisation} condition if
    \begin{equation} \label{eq:normalisation}
        \int_{\mathbb{R}^3}\int_{\mathbb{R}^3}\int_{\mathbb{R}}
        \mathcal{K}(\mathbf{r_i} \rightarrow \mathbf{r_r}, \mathbf{v_i} \rightarrow \mathbf{v_r}, t_i \rightarrow t_r) \, \dif t_r \dif \mathbf{v_r} \dif \mathbf{r_r}
        = 1,
        \quad \forall \, \left(\mathbf{r_i}, \mathbf{v_i}, t_i\right) \in \mathbb{R}^3 \times \mathbb{R}^3 \times \mathbb{R}.
    \end{equation}
\end{definition}

\begin{definition} \label{def:nonnegativity}
    Let $\mathcal{K} : \mathbb{R}^3 \times \mathbb{R}^3 \times \mathbb{R}
\times \mathbb{R}^3 \times \mathbb{R}^3 \times \mathbb{R}
\rightarrow [0, \infty)$, written as $\mathcal{K} = \mathcal{K}(\mathbf{r_i} \rightarrow \mathbf{r_r}, \mathbf{v_i} \rightarrow \mathbf{v_r}, t_i \rightarrow t_r)$, be a Lebesgue-integrable function defined in the global frame of \cref{def:global_frame}, where the vectors $\mathbf{r_i}$, $\mathbf{r_r}$, $\mathbf{v_i}$ and $\mathbf{v_r}$ of dimension $3 \times 1$ represent the incident and reflected particle positions and velocities, respectively, and the real values $t_i$ and $t_r$ represent the corresponding surface entry and exit times, respectively. The function $\mathcal{K}$ is said to satisfy the non-negativity condition if
    \begin{equation} \label{eq:nonnegativity}
        \mathcal{K}(\mathbf{r_i} \rightarrow \mathbf{r_r}, \mathbf{v_i} \rightarrow \mathbf{v_r}, t_i \rightarrow t_r) \geq 0
        \quad \forall \, \left(\mathbf{r_i}, \mathbf{r_r}, \mathbf{v_i}, \mathbf{v_r}, t_i, t_r\right) \in \mathbb{R}^3 \times \mathbb{R}^3 \times \mathbb{R}^3 \times \mathbb{R}^3 \times \mathbb{R} \times \mathbb{R}.
    \end{equation}
\end{definition}
\begin{definition} \label{def:kernel_set}
    We denote with $\mathcal{T}$ the set of all kernels as
    \begin{multline}
        \mathcal{T} = \Bigl\{\exists \, \mathcal{K} : \mathbb{R}^3 \times \mathbb{R}^3 \times \mathbb{R} \times \mathbb{R}^3 \times \mathbb{R}^3 \times \mathbb{R} \rightarrow [0, \infty), \, \mathcal{K} = \mathcal{K}(\mathbf{r_i} \rightarrow \mathbf{r_r}, \mathbf{v_i} \rightarrow \mathbf{v_r}, t_i \rightarrow t_r) \, \Big| \, \mathcal{K}  \text{ satisfies  } \\ \text{\textbf{normalisation} and \textbf{non-negativity} in the sense of \cref{def:normalisation,def:nonnegativity}}.\Bigr\}
    \end{multline}
\end{definition}
\begin{definition} \label{def:scattering_kernel_set}
    We denote with $\mathcal{T}^s$ the set of all admissible scattering kernels as
    \begin{multline}
        \mathcal{T}^s = \Bigl\{\exists \, \mathcal{K} : \mathbb{R}^3 \times \mathbb{R}^3 \times \mathbb{R} \times \mathbb{R}^3 \times \mathbb{R}^3 \times \mathbb{R} \rightarrow [0, \infty), \, \mathcal{K} = \mathcal{K}(\mathbf{r_i} \rightarrow \mathbf{r_r}, \mathbf{v_i} \rightarrow \mathbf{v_r}, t_i \rightarrow t_r) \, \Big| \, \mathcal{K}  \text{ satisfies  } \\ \text{\textbf{impermeability}, \textbf{reciprocity}, \textbf{normalisation} and \textbf{non-negativity} in the sense of \cref{def:impermeability,def:reciprocity,def:normalisation,def:nonnegativity}}.\Bigr\}
    \end{multline}
    By definition, $\mathcal{T}^s \subset \mathcal{T}$.
\end{definition}
\begin{definition} \label{def:point_wise_scattering_kernel_set}
    We denote with $\mathcal{T}^p$ the set of all admissible pointwise scattering kernels as
    \begin{multline}
        \mathcal{T}^p = \Bigl\{\exists \, \mathcal{K} : \mathbb{R}^3 \times \mathbb{R}^3 \times \mathbb{R} \times \mathbb{R}^3 \times \mathbb{R}^3 \times \mathbb{R} \rightarrow [0, \infty), \, \mathcal{K} = \mathcal{K}(\mathbf{r_i} \rightarrow \mathbf{r_r}, \mathbf{v_i} \rightarrow \mathbf{v_r}, t_i \rightarrow t_r) \, \Big| \, \mathcal{K}  \text{ satisfies  } \\ \text{\textbf{pointwise reciprocity}, \textbf{normalisation} and \textbf{non-negativity} in the sense of \cref{def:pointwise_reciprocity,def:normalisation,def:nonnegativity}}.\Bigr\}
    \end{multline}
    By definition, $\mathcal{T}^p \subset \mathcal{T}$.
\end{definition}

\section{Roughness-based kernel definition and admissibility} \label{sec:Roughness_kernel}

\subsection{Single-reflection kernel} \label{subsec:single_reflection_kernel}

We now have all the ingredients required to extend the scattering kernel definition in \cref{eq:general_scattering_kernel} to a rough surface $\Psi$ with associated length scale $R$. We also derive sufficient conditions under which the resulting formulation satisfies the three admissibility criteria introduced above. This construction leads naturally to the definition of scattering operators that map a local kernel $\mathcal{K_L}$ to its corresponding global counterparts by incorporating the morphological effects of $\Psi$.

We introduce two classes of kernels. The first maps an incident particle distribution to a reflected one by combining the roughness of $\Psi$ with a local scattering law described by $\mathcal{K_L}$, under the assumption that each particle undergoes a single reflection at the global scale. The second extends this construction by accounting for the possibility of multiple reflections induced by the morphology of $\Psi$. In this subsection, we define the single-reflection kernel, $\mathcal{K_K}$, under one key assumption: during each local interaction described by $\mathcal{K_L}$, the local surface normal $\mathbf{n_L}$ remains constant. Physically, this assumption is justified only when $\Psi$ is locally smooth in the local reference frame of \cref{def:local_frame}. Equivalently, the characteristic length scale $R$ associated with $\Psi$ must be much larger than the distance travelled by a particle during the local interaction, that is, $R \gg \left\lVert \mathbf{r_{r_L}} - \mathbf{r_{i_L}} \right\rVert$. Below, we formally define the single-reflection kernel and show that it satisfies pointwise reciprocity, normalisation, and non-negativity whenever $\mathcal{K_L}$ satisfies the same properties.

\begin{definition} \label{def:single_reflection_scattering_kernel}
Let $\mathcal{K_L} : \mathbb{R}^3 \times \mathbb{R}^3 \times \mathbb{R}
\times
\mathbb{R}^3 \times \mathbb{R}^3 \times \mathbb{R}
\rightarrow [0, \infty)$, written as $\mathcal{K_L} =
\mathcal{K_L}(
\mathbf{r_{i_L}} \rightarrow \mathbf{r_{r_L}},
\mathbf{v_{i_L}} \rightarrow \mathbf{v_{r_L}},
t_i \rightarrow t_r)$, be a Lebesgue-integrable function defined in the local reference frame of \cref{def:local_frame}, representing a local scattering kernel associated with the surface $\Psi$ as given by \cref{def:surface_general}. Here, $\mathbf{r_{i_L}}, \mathbf{r_{r_L}}, \mathbf{v_{i_L}}, \mathbf{v_{r_L}} \in \mathbb{R}^3$ denote the incident and reflected particle positions and velocities, respectively, expressed in the local frame, and $t_i, t_r \in \mathbb{R}$ denote the corresponding interaction times, respectively. Further, let $\mathcal{K_K} :
\mathbb{R}^3 \times \mathbb{R}^3 \times \mathbb{R}
\times
\mathbb{R}^3 \times \mathbb{R}^3 \times \mathbb{R}
\rightarrow [0, \infty)$, written as $\mathcal{K_K} = \mathcal{K_K}(
\mathbf{r_i} \rightarrow \mathbf{r_r},
\mathbf{v_i} \rightarrow \mathbf{v_r},
t_i \rightarrow t_r)$, be a Lebesgue integrable function defined in the global reference frame of \cref{def:global_frame}, where $\mathbf{r_i}, \mathbf{r_r}, \mathbf{v_i}, \mathbf{v_r} \in \mathbb{R}^3$ are expressed in the global frame. Let $\mathbf{n_G}, \mathbf{n_L} \in \mathbb{R}^3$ be unit vectors describing the local and global surface normals of $\Psi$, respectively, expressed in the global frame. Let $p_n : \mathbb{S}^2 \rightarrow [0,\infty)$, $p_n = p_n(\mathbf{n_L} \, | \, \mathbf{n_G}, \mathbf{v_i})$, be the probability density function of the local surface normals on $\Psi$ for particles with velocity $\mathbf{v_i}$. 

Then $\mathcal{K_K}$ is said to be the \textbf{single-reflection global kernel} associated with $\mathcal{K_L}$ if
\begin{multline} \label{eq:single_reflection_kernel}
\boxed{
    \mathcal{K_K}(\mathbf{r_{i}} \rightarrow \mathbf{r_{r}}, \mathbf{v_{i}} \rightarrow \mathbf{v_{r}}, t_i \rightarrow t_r)
        =
        \int_{\mathbb{S}^2}
        \mathcal{K_L}(\mathbf{r_{i_L}} \rightarrow \mathbf{r_{r_L}}, \mathbf{v_{i_L}} \rightarrow \mathbf{v_{r_L}}, t_i \rightarrow t_r) 
         \frac{\mathcal{Q}(\mathbf{v_{i_L}}\, | \,  \mathbf{n_L})}
    {\mathcal{Q}( \mathbf{v_{i}} \, | \, \mathbf{n_G})} \,
        p_n(\mathbf{n_L} \, | \, \mathbf{n_G}, \mathbf{v_i})
        \, \dif \mathbf{n_L}.
    }
\end{multline}
\end{definition}

\begin{lemma} \label{lemma:global_reciprocity}
    Let $\mathcal{K_K} : \mathbb{R}^3 \times \mathbb{R}^3 \times \mathbb{R} \times \mathbb{R}^3 \times \mathbb{R}^3 \times \mathbb{R} \rightarrow [0,\infty)$, written as $\mathcal{K_K} = \mathcal{K_K}\left(\mathbf{r_i} \rightarrow \mathbf{r_r}, \mathbf{v_i} \rightarrow \mathbf{v_r}, t_i \rightarrow t_r\right)$, denote a Lebesgue-integrable function associated with the global reference frame $G$ of a surface $\Psi$ in the sense of \cref{def:global_frame,def:surface_general}, and defined as in \cref{def:single_reflection_scattering_kernel}, such that
    \begin{multline}
         \mathcal{K_K}(\mathbf{r_{i}} \rightarrow \mathbf{r_{r}}, \mathbf{v_{i}} \rightarrow \mathbf{v_{r}}, t_i \rightarrow t_r)
        =
        \int_{\mathbb{S}^2}
        \mathcal{K_L}(\mathbf{r_{i_L}} \rightarrow \mathbf{r_{r_L}}, \mathbf{v_{i_L}} \rightarrow \mathbf{v_{r_L}}, t_i \rightarrow t_r)  
         \frac{\mathcal{Q}( \mathbf{v_{i_L}} \, | \, \mathbf{n_L})}
    {\mathcal{Q}( \mathbf{v_{i}} \, | \, \mathbf{n_G})} \,
        p_n(\mathbf{n_L} \, | \, \mathbf{n_G}, \mathbf{v_i})
        \, \dif \mathbf{n_L}.
    \end{multline}
    where $\mathcal{K_L}: \mathbb{R}^3 \times \mathbb{R}^3 \times \mathbb{R}
\times \mathbb{R}^3 \times \mathbb{R}^3 \times \mathbb{R} \rightarrow[0, \infty)$, written as $\mathcal{K_L}= \mathcal{K_L}(\mathbf{r_{i_L}} \rightarrow \mathbf{r_{r_L}}, \mathbf{v_{i_L}} \rightarrow \mathbf{v_{r_L}}, t_i \rightarrow t_r) \in \mathcal{T}$ is a Lebesgue-integrable local scattering kernel defined in a local reference frame $L$, as given by \cref{def:local_frame}. Here, $\mathcal{T}$ is the set of admissible scattering kernels in the sense of \cref{def:scattering_kernel_set}, $\mathbf{r_{i_L}}$, $\mathbf{r_{r_L}}$, $\mathbf{v_{i_L}}$, $\mathbf{v_{r_L}}$ are $3\times 1$ real vectors denoting the incident and reflected particle positions and velocities, respectively, expressed in that frame, $t_i$ and $t_r$ represent the corresponding entry and exit times, respectively, and $\mathbf{n_L}$ and $\mathbf{n_G}$ are $3\times 1$ real unit vectors denoting the local and global surface normals, respectively. Then $\mathcal{K_K}$ satisfies pointwise reciprocity in the sense of \cref{def:reciprocity} under the assumption of smoothness of $\Psi$ in $L$ in the sense of \cref{def:local_smoothness} and local invariance in the surface normal with the travelled particle trajectory under time reversal, i.e. 
\begin{equation} \label{eq:normal_time_reversal}
    p_n(\mathbf{n_L} \, | \, \mathbf{n_G}, \mathbf{v_i}) = p_n(\mathbf{n_L} \, | \, \mathbf{n_G}, -\mathbf{v_r}),
\end{equation}
whenever the local kernel $\mathcal{K_L}$ satisfies pointwise reciprocity as well.
\end{lemma}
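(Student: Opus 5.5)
We check pointwise reciprocity directly: multiply the defining integral for $\mathcal{K_K}$ by $\mathcal{Q}(\mathbf{v_i}\,|\,\mathbf{n_G})\,f_{\varphi_0}(\mathbf{r_i},\mathbf{v_i},t_i)$ and transform it, under the integral over $\mathbf{n_L}$, into the same expression with the reversed arguments $(\mathbf{r_r}\to\mathbf{r_i},-\mathbf{v_r}\to-\mathbf{v_i},-t_r\to-t_i)$. The normalising factor $\mathcal{Q}(\mathbf{v_i}\,|\,\mathbf{n_G})$ in the denominator of \cref{eq:single_reflection_kernel} cancels against the global visible flux appearing in \cref{eq:reciprocity_pointwise}. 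We are left with
\begin{equation*}
\mathcal{K_K}(\mathbf{r_i}\to\mathbf{r_r},\mathbf{v_i}\to\mathbf{v_r},t_i\to t_r)\,\mathcal{Q}(\mathbf{v_i}\,|\,\mathbf{n_G})\,f_{\varphi_0}(\mathbf{r_i},\mathbf{v_i},t_i)=\int_{\mathbb{S}^2}\mathcal{K_L}(\cdot)\,\mathcal{Q}(\mathbf{v_{i_L}}\,|\,\mathbf{n_L})\,f_{\varphi_0}(\mathbf{r_i},\mathbf{v_i},t_i)\,p_n(\mathbf{n_L}\,|\,\mathbf{n_G},\mathbf{v_i})\,\dif\mathbf{n_L}.
\end{equation*}
The same manipulation applies to the reversed kernel, with $\mathcal{Q}(-\mathbf{v_r}\,|\,\mathbf{n_G})$ cancelling. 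It then suffices to show that the integrands agree for each fixed $\mathbf{n_L}$.

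For fixed $\mathbf{n_L}$, we express the state quantities in the local frame $L$ via the rotation $T_y(\theta_{n_1})T_z(\theta_{n_2})$ of \cref{def:local_frame}. This is an orthogonal map, so speeds, and hence $f_0$, are invariant. The hazard factor $f_\varphi$ in $f_{\varphi_0}$ is a scalar attached to the state, so it transforms as a function evaluated at the rotated state. The reversal map $(\mathbf{v_i},\mathbf{v_r})\mapsto(-\mathbf{v_r},-\mathbf{v_i})$ commutes with the rotation: the reversed local velocities are $-\mathbf{v_{r_L}}$ and $-\mathbf{v_{i_L}}$ for the same $\mathbf{n_L}$. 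Here we use the smoothness of $\Psi$ in $L$ (\cref{def:local_smoothness}), so that the local frame and $\mathbf{n_L}$ are fixed throughout the local interaction and under time reversal. Smoothness also collapses $\mathcal{Q}(\cdot\,|\,\mathbf{n_L})$ to $|\mathbf{v_{i_L}}\cdot\mathbf{n_L}|$ by \cref{def:visible_flux}, so it is a genuinely local quantity.

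Pointwise reciprocity of $\mathcal{K_L}$, applied in frame $L$ with global normal $\mathbf{n_L}$, then gives
\begin{equation*}
\mathcal{K_L}(\cdot)\,\mathcal{Q}(\mathbf{v_{i_L}}\,|\,\mathbf{n_L})\,f_{\varphi_0}(\mathbf{r_{i_L}},\mathbf{v_{i_L}},t_i)=\mathcal{K_L}(\text{reversed})\,\mathcal{Q}(-\mathbf{v_{r_L}}\,|\,\mathbf{n_L})\,f_{\varphi_0}(\mathbf{r_{r_L}},-\mathbf{v_{r_L}},-t_r).
\end{equation*}
Finally, hypothesis \cref{eq:normal_time_reversal} replaces $p_n(\mathbf{n_L}\,|\,\mathbf{n_G},\mathbf{v_i})$ by $p_n(\mathbf{n_L}\,|\,\mathbf{n_G},-\mathbf{v_r})$. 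This weight is exactly the one appearing in the definition of the reversed kernel $\mathcal{K_K}(\mathbf{r_r}\to\mathbf{r_i},-\mathbf{v_r}\to-\mathbf{v_i},-t_r\to-t_i)$. Integrating over $\mathbb{S}^2$ yields \cref{eq:reciprocity_pointwise} for $\mathcal{K_K}$.

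The main obstacle is the bookkeeping step in the middle. We must justify that $f_{\varphi_0}$ evaluated at the global state equals $f_{\varphi_0}$ evaluated at the local state, since the hazard $f_\varphi$ is defined through the shadowing function of $\Psi$. We must also make sure the $\mathcal{Q}$ factors do not leave a residual ratio $\mathcal{Q}(\mathbf{v_{i_L}}\,|\,\mathbf{n_L})/\mathcal{Q}(-\mathbf{v_{r_L}}\,|\,\mathbf{n_L})$ beyond what reciprocity of $\mathcal{K_L}$ absorbs. Our first attempt is to argue that, at an unmasked point, $f_\varphi$ is frame-independent because it is a time derivative of a probability, which is invariant under rotations of the coordinates. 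If that argument fails, we would restrict the claim to local kernels whose $f_{\varphi_0}$ reduces to $f_0$ times a factor symmetric under reversal.
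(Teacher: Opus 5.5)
Your proposal is correct and follows essentially the same route as the paper's proof. You cancel $\mathcal{Q}(\mathbf{v_i}\,|\,\mathbf{n_G})$, rotate into $L$ where smoothness reduces $\mathcal{Q}(\cdot\,|\,\mathbf{n_L})$ to the local flux, apply pointwise reciprocity of $\mathcal{K_L}$, swap $p_n$ via \cref{eq:normal_time_reversal}, and rotate back with $f_{\varphi_0}(\mathbf{r_r},-\mathbf{v_r},-t_r)$ pulled out of the $\mathbf{n_L}$-integral. The bookkeeping you flag is handled in the paper exactly as in your first attempt, by simply defining $f_{\varphi_{0_L}}(\mathbf{r_{i_L}},\mathbf{v_{i_L}},t_i) := f_{\varphi_0}(\mathbf{r_i},\mathbf{v_i},t_i)$ as the rotated scalar. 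The only cosmetic difference is that you compare integrands pointwise in $\mathbf{n_L}$, letting $\mathcal{Q}(\mathbf{v_{i_L}}\,|\,\mathbf{n_L})$ and $\mathcal{Q}(-\mathbf{v_{r_L}}\,|\,\mathbf{n_L})$ carry the half-space restrictions. The paper instead integrates over $\mathbf{v_i}\cdot\mathbf{n_L}<0$ and invokes impermeability of $\mathcal{K_L}$ to pass to $-\mathbf{v_r}\cdot\mathbf{n_L}<0$.
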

\begin{proof}
    We begin by substituting the expression of the single-reflection global kernel from \cref{eq:single_reflection_kernel} into the left-hand side of \cref{eq:reciprocity_pointwise}, obtaining
    \begin{equation}
    \begin{aligned}
          \mathcal{K_K}\left(\mathbf{r_{i}} \rightarrow \mathbf{r_{r}}, \mathbf{v_{i}} \rightarrow \mathbf{v_{r}}, t_i \rightarrow t_r\right)  \mathcal{Q}(\mathbf{v_i} \, | \,  \mathbf{n_G}) f_{\varphi_0}(\mathbf{r_i}, \mathbf{v_i}, t_i) & = \int_{\mathbb{S}^2} \mathcal{K_L}\left(\mathbf{r_{i_L}} \rightarrow \mathbf{r_{r_L}}, \mathbf{v_{i_L}} \rightarrow \mathbf{v_{r_L}}, t_i \rightarrow t_r\right) \\
          & \cdot  p_n(\mathbf{n_L} \, | \, \mathbf{n_G}, \mathbf{v_i})  \mathcal{Q}( \mathbf{v_i} \, | \,  \mathbf{n_G}) \frac{\mathcal{Q}( \mathbf{v_{i_L}}\, | \, \mathbf{n_L})}
    {\mathcal{Q}( \mathbf{v_{i}} \, | \, \mathbf{n_G})} \\
        & \cdot \dif \mathbf{n_L} \, f_{\varphi_0}(\mathbf{r_i}, \mathbf{v_i}, t_i) \\
          & = \int_{\mathbb{S}^2} \mathcal{K_L}\left(\mathbf{r_{i_L}} \rightarrow \mathbf{r_{r_L}}, \mathbf{v_{i_L}} \rightarrow \mathbf{v_{r_L}}, t_i \rightarrow t_r\right)
          \\ & \cdot \mathcal{Q}( \mathbf{v_{i_L}} \, | \, \mathbf{n_L}) p_n(\mathbf{n_L} \, | \, \mathbf{n_G}, \mathbf{v_i}) \, \dif \mathbf{n_L} \, f_{\varphi_0}(\mathbf{r_i}, \mathbf{v_i}, t_i).
    \end{aligned}
    \end{equation}
    Since $f_{\varphi_0}(\mathbf{r_i}, \mathbf{v_i}, t_i)$ is invariant in the velocity direction $\mathbf{u_i}$, i.e. $f_{\varphi_0}(\mathbf{r_i}, \mathbf{v_i}, t_i) = f_{\varphi_0}(\mathbf{r_i}, v_i, t_i)$, and the transformation from the global to the local reference frame $\mathbf{v_{i_L}} = T_{G\rightarrow L}(\mathbf{v_i}) = T_y(\theta_{n_1}) T_z(\theta_{n_2}) \mathbf{v_i}$ is invariant in magnitude, i.e. $v_{i_L} = v_i$, and has a unity Jacobian determinant, we can rewrite the left-hand side expression as
    \begin{equation}
    \begin{aligned}
        \mathcal{K_K}\left(\mathbf{r_{i}} \rightarrow \mathbf{r_{r}}, \mathbf{v_{i}} \rightarrow \mathbf{v_{r}}, t_i \rightarrow t_r\right)  \mathcal{Q}(\mathbf{v_i} \, | \,  \mathbf{n_G}) f_{\varphi_0}(\mathbf{r_i}, \mathbf{v_i}, t_i) & = \int_{\mathbb{S}^2} \mathcal{K_L}\left(\mathbf{r_{i_L}} \rightarrow \mathbf{r_{r_L}}, \mathbf{v_{i_L}} \rightarrow \mathbf{v_{r_L}}, t_i \rightarrow t_r\right) \\
        & \cdot \mathcal{Q}( \mathbf{v_{i_L}} \, | \, \mathbf{n_L}) \, f_{{\varphi_L}}(\mathbf{r_{i_L}}, \mathbf{v_{i_L}}, t_i) \, p_n(\mathbf{n_L} \, | \, \mathbf{n_G}, \mathbf{v_i}) \, \dif \mathbf{n_L},
    \end{aligned}
    \end{equation}
    where we have also introduced the PDF $f_{\varphi_{0_L}} : \mathbb{R}^3 \times \mathbb{R}^3 \times \mathbb{R} \rightarrow [0, \infty)$, $f_{\varphi_{0_L}}(\mathbf{r_{i_L}}, \mathbf{v_{i_L}}, t_i) =  f_{\varphi_0}(\mathbf{r_i}, \mathbf{v_i}, t_i) \, \forall (\mathbf{r_i}, \mathbf{v_i}, t_i) \in \mathbb{R}^3 \times \mathbb{R}^3 \times \mathbb{R}$ defining the position probability in the local frame at time $t_i$. Next, we employ the fact that $\Psi$ is smooth in the $L$ frame in the sense of \cref{def:local_smoothness}, and hence $\mathcal{Q}( \mathbf{v_{i_L}} \, | \, \mathbf{n_L}) = \left| \mathbf{v_{i_L}} \cdot \left.\mathbf{n_L}\right|^L\right|$ to rewrite the equality as
    \begin{equation}
        \begin{aligned}
        \mathcal{K_K}\left(\mathbf{r_{i}} \rightarrow \mathbf{r_{r}}, \mathbf{v_{i}} \rightarrow \mathbf{v_{r}}, t_i \rightarrow t_r\right)  \mathcal{Q}( \mathbf{v_i} \, | \,  \mathbf{n_G}) f_{\varphi_0}(\mathbf{r_i}, \mathbf{v_i}, t_i) & = \int_{\mathbb{S}^2} \mathcal{K_L}\left(\mathbf{r_{i_L}} \rightarrow \mathbf{r_{r_L}}, \mathbf{v_{i_L}} \rightarrow \mathbf{v_{r_L}}, t_i \rightarrow t_r\right) \\
        & \cdot \mathcal{Q}( \mathbf{v_{i_L}}\, | \, \mathbf{n_L}) \, f_{\varphi_{0_L}}(\mathbf{r_{i_L}}, \mathbf{v_{i_L}}, t_i) \, p_n(\mathbf{n_L} \, | \, \mathbf{n_G}, \mathbf{v_i}) \, \dif \mathbf{n_L} \\
            & = \int_{\mathbf{v_i} \cdot \mathbf{n_L} < 0} \mathcal{K_L}\left(\mathbf{r_{i_L}} \rightarrow \mathbf{r_{r_L}}, \mathbf{v_{i_L}} \rightarrow \mathbf{v_{r_L}}, t_i \rightarrow t_r\right) \\
        & \cdot \left| \mathbf{v_{i_L}} \cdot \left.\mathbf{n_L}\right|^L\right| \, f_{\varphi_{0_L}}(\mathbf{r_{i_L}}, \mathbf{v_{i_L}}, t_i) \, p_n(\mathbf{n_L} \, | \, \mathbf{n_G}, \mathbf{v_i}) \, \dif \mathbf{n_L},
        \end{aligned}
    \end{equation}
    where $\left.\mathbf{n_{L}}\right|^L = \begin{bmatrix}
        0 & 0 & 1
    \end{bmatrix}^T$ is just the local normal vector, expressed in the local reference frame. We now employ the pointwise reciprocity property of the local kernel $\mathcal{K_L}\left(\mathbf{r_{i_L}} \rightarrow \mathbf{r_{r_L}}, \mathbf{v_{i_L}} \rightarrow \mathbf{v_{r_L}}, t_i \rightarrow t_r\right)$, as given in \cref{def:reciprocity}, to obtain
    \begin{equation}
        \begin{aligned}
           \mathcal{K_K}\left(\mathbf{r_{i}} \rightarrow \mathbf{r_{r}}, \mathbf{v_{i}} \rightarrow \mathbf{v_{r}}, t_i \rightarrow t_r\right)  \mathcal{Q}(\mathbf{v_i} \, | \,  \mathbf{n_G}) f_{\varphi_0}(\mathbf{r_i}, \mathbf{v_i}, t_i) & = \int_{-\mathbf{v_r} \cdot \mathbf{n_L} < 0} \mathcal{K_L}\left(\mathbf{r_{r_L}} \rightarrow \mathbf{r_{i_L}}, -\mathbf{v_{r_L}} \rightarrow -\mathbf{v_{i_L}}, -t_r \rightarrow -t_i\right) \\
            & \qquad \cdot  \left|-\mathbf{v_{r_L}} \cdot \left.\mathbf{n_{L}}\right|^L\right| f_{\varphi_{0_L}}(\mathbf{r_{r_L}}, -\mathbf{v_{r_L}}, -t_r) \,p_n(\mathbf{n_L} \, | \, \mathbf{n_G}, -\mathbf{v_r}) \, \dif \mathbf{n_L},
        \end{aligned}
    \end{equation}
    where we have used the impermeability property of $\mathcal{K_L}$ as given in \cref{def:impermeability} to change the integration limits and the local normal invariance to substitute $p_n(\mathbf{n_L} \, | \, \mathbf{n_G}, -\mathbf{v_r}) = p_n(\mathbf{n_L} \, | \, \mathbf{n_G}, \mathbf{v_i})$. Finally, we use the invariance of the dot product under rotations to rewrite the terms inside the absolute value on the right-hand side in the global frame. We also express $f_{\varphi_{0_L}}(\mathbf{r_{r_L}}, -\mathbf{v_{r_L}}, t_r)$ in the global frame and move it outside the integral, yielding
    \begin{equation}
        \begin{aligned}
            \mathcal{K_K}\left(\mathbf{r_{i}} \rightarrow \mathbf{r_{r}}, \mathbf{v_{i}} \rightarrow \mathbf{v_{r}}, t_i \rightarrow t_r\right)  \mathcal{Q}(\mathbf{v_i} \, | \,  \mathbf{n_G}) f_{\varphi_0}(\mathbf{r_i}, \mathbf{v_i}, t_i) & =\int_{-\mathbf{v_r} \cdot \mathbf{n_L} < 0} \mathcal{K_L}\left(\mathbf{r_{r_L}} \rightarrow \mathbf{r_{i_L}}, -\mathbf{v_{r_L}} \rightarrow -\mathbf{v_{i_L}}, -t_r \rightarrow -t_i\right) \\
            & \qquad \cdot \left|-\mathbf{v_{r}} \cdot \mathbf{n_L}\right| f_{\varphi_0}(\mathbf{r_r}, -\mathbf{v_r}, -t_r) \, p_n(\mathbf{n_L} \, | \, \mathbf{n_G}, -\mathbf{v_r}) \, \dif \mathbf{n_L} \, \\
            & = \int_{\mathbb{S}^2} \mathcal{K_L}\left(\mathbf{r_{r_L}} \rightarrow \mathbf{r_{i_L}}, -\mathbf{v_{r_L}} \rightarrow -\mathbf{v_{i_L}}, -t_r \rightarrow -t_i\right) \\
            & \qquad \cdot p_n(\mathbf{n_L} \, | \, \mathbf{n_G}, -\mathbf{v_r}) \mathcal{Q}( -\mathbf{v_{r_L}} \, | \, \mathbf{n_L})  \, \dif \mathbf{n_L} \\
            & \qquad \cdot f_{\varphi_0}(\mathbf{r_r}, -\mathbf{v_r}, -t_r) \\
            & = \int_{\mathbb{S}^2} \mathcal{K_L}\left(\mathbf{r_{r_L}} \rightarrow \mathbf{r_{i_L}}, -\mathbf{v_{r_L}} \rightarrow -\mathbf{v_{i_L}}, -t_r \rightarrow -t_i\right) \\
            & \qquad \cdot p_n(\mathbf{n_L} \, | \, \mathbf{n_G}, -\mathbf{v_r})\frac{\mathcal{Q}( -\mathbf{v_{r_L}} \, | \, \mathbf{n_L})}{\mathcal{Q}( -\mathbf{v_{r}} \, | \, \mathbf{n_G})}  \, \dif \mathbf{n_L}  \\
            & \qquad \cdot \mathcal{Q}( -\mathbf{v_{r}} \, | \, \mathbf{n_G}) \, f_{\varphi_0}(\mathbf{r_r}, -\mathbf{v_r}, -t_r) \\
            & = \mathcal{K_K}\left(\mathbf{r_{r}} \rightarrow \mathbf{r_{i}}, -\mathbf{v_{r}} \rightarrow -\mathbf{v_{i}}, -t_r \rightarrow -t_i\right) \\
            & \qquad \cdot \mathcal{Q}( -\mathbf{v_r} \, | \,  \mathbf{n_G}) f_{\varphi_0}(\mathbf{r_r}, -\mathbf{v_r}, -t_r)
        \end{aligned}
    \end{equation}
\end{proof}
\begin{lemma} \label{lemma:global_normalisation}
    Let $\mathcal{K_K} : \mathbb{R}^3 \times \mathbb{R}^3 \times \mathbb{R} \times \mathbb{R}^3 \times \mathbb{R}^3 \times \mathbb{R} \rightarrow [0,\infty)$, written as $\mathcal{K_K} = \mathcal{K_K}\left(\mathbf{r_i} \rightarrow \mathbf{r_r}, \mathbf{v_i} \rightarrow \mathbf{v_r}, t_i \rightarrow t_r\right)$, denote a Lebesgue-integrable function associated with the global reference frame $G$ of a surface $\Psi$ in the sense of \cref{def:global_frame,def:surface_general}, and defined as in \cref{def:single_reflection_scattering_kernel}, such that
    \begin{multline}
         \mathcal{K_K}(\mathbf{r_{i}} \rightarrow \mathbf{r_{r}}, \mathbf{v_{i}} \rightarrow \mathbf{v_{r}}, t_i \rightarrow t_r)
        =
        \int_{\mathbb{S}^2}
        \mathcal{K_L}(\mathbf{r_{i_L}} \rightarrow \mathbf{r_{r_L}}, \mathbf{v_{i_L}} \rightarrow \mathbf{v_{r_L}}, t_i \rightarrow t_r)  
         \frac{\mathcal{Q}(\mathbf{v_{i_L}} \, | \, \mathbf{n_L})}
    {\mathcal{Q}(\mathbf{v_{i}} \, | \, \mathbf{n_G})} \,
        p_n(\mathbf{n_L} \, | \, \mathbf{n_G}, \mathbf{v_i})
        \, \dif \mathbf{n_L}.
    \end{multline}
 where $\mathcal{K_L}: \mathbb{R}^3 \times \mathbb{R}^3 \times \mathbb{R}
\times \mathbb{R}^3 \times \mathbb{R}^3 \times \mathbb{R} \rightarrow[0, \infty)$, written as $\mathcal{K_L}= \mathcal{K_L}(\mathbf{r_{i_L}} \rightarrow \mathbf{r_{r_L}}, \mathbf{v_{i_L}} \rightarrow \mathbf{v_{r_L}}, t_i \rightarrow t_r) \in \mathcal{T}$ is a Lebesgue-integrable local scattering kernel defined in a local reference frame $L$, as given by \cref{def:local_frame}. Here, $\mathcal{T}$ is the set of admissible scattering kernels in the sense of \cref{def:scattering_kernel_set}, $\mathbf{r_{i_L}}$, $\mathbf{r_{r_L}}$, $\mathbf{v_{i_L}}$, $\mathbf{v_{r_L}}$ are $3\times 1$ real vectors denoting the incident and reflected particle positions and velocities, respectively, expressed in that frame, $t_i$ and $t_r$ represent the corresponding entry and exit times, respectively, and $\mathbf{n_L}$ and $\mathbf{n_G}$ are $3\times 1$ real unit vectors denoting the local and global surface normals, respectively. Then, the function $\mathcal{K_K}$ satisfies the normalisation condition in the sense of \cref{def:normalisation} whenever $\Psi$ is smooth in $L$ in the sense of \cref{def:local_smoothness}, and the local kernel $\mathcal{K_L}$ satisfies the same condition. 
\end{lemma}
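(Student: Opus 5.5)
Starting from \cref{eq:single_reflection_kernel}, I will integrate $\mathcal{K_K}$ over the reflected variables $(\mathbf{r_r},\mathbf{v_r},t_r)$ and exchange the order of integration with the normal integral over $\mathbb{S}^2$. Everything is non-negative, so Tonelli's theorem justifies the exchange. Only $\mathcal{K_L}$ depends on the reflected variables, which leaves
\begin{equation*}
\int_{\mathbb{S}^2}\left[\int_{\mathbb{R}^3}\int_{\mathbb{R}^3}\int_{\mathbb{R}}\mathcal{K_L}(\mathbf{r_{i_L}}\rightarrow\mathbf{r_{r_L}},\mathbf{v_{i_L}}\rightarrow\mathbf{v_{r_L}},t_i\rightarrow t_r)\,\dif t_r\,\dif\mathbf{v_r}\,\dif\mathbf{r_r}\right]\frac{\mathcal{Q}(\mathbf{v_{i_L}}\,|\,\mathbf{n_L})}{\mathcal{Q}(\mathbf{v_i}\,|\,\mathbf{n_G})}\,p_n(\mathbf{n_L}\,|\,\mathbf{n_G},\mathbf{v_i})\,\dif\mathbf{n_L}.
\end{equation*}
For fixed $\mathbf{n_L}$ the map $T_{G\rightarrow L}=T_y(\theta_{n_1})T_z(\theta_{n_2})$ is a rotation, applied to both positions and velocities, with unit Jacobian determinant. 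I will therefore change variables $(\mathbf{r_r},\mathbf{v_r})\mapsto(\mathbf{r_{r_L}},\mathbf{v_{r_L}})$. The bracket then becomes the local normalisation integral of \cref{def:normalisation} and equals $1$ for every incident state.

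The remaining task is to show that
\begin{equation*}
\int_{\mathbb{S}^2}\mathcal{Q}(\mathbf{v_{i_L}}\,|\,\mathbf{n_L})\,p_n(\mathbf{n_L}\,|\,\mathbf{n_G},\mathbf{v_i})\,\dif\mathbf{n_L}=\mathcal{Q}(\mathbf{v_i}\,|\,\mathbf{n_G}).
\end{equation*}
Smoothness of $\Psi$ in $L$ (\cref{def:local_smoothness}) together with \cref{def:visible_flux} gives $\mathcal{Q}(\mathbf{v_{i_L}}\,|\,\mathbf{n_L})=|\mathbf{v_{i_L}}\cdot\mathbf{n_L}|^L|$ on $\mathbf{v_i}\cdot\mathbf{n_L}<0$ and zero otherwise. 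Rotation invariance of the dot product turns this into $|\mathbf{v_i}\cdot\mathbf{n_L}|$ written in the global frame. The left-hand side is therefore $\int_{\mathbf{v_i}\cdot\mathbf{n_L}<0}|\mathbf{v_i}\cdot\mathbf{n_L}|\,p_n\,\dif\mathbf{n_L}$, which is exactly the definition of $\mathcal{Q}(\mathbf{v_i}\,|\,\mathbf{n_G})$ in \cref{def:visible_flux}. The ratio cancels and the total integral is $1$.

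I expect the last identification to be the main obstacle. \cref{def:visible_flux} uses $p_n(\mathbf{n_L}\,|\,\mathbf{n_G})$, whereas the kernel uses the velocity-conditioned $p_n(\mathbf{n_L}\,|\,\mathbf{n_G},\mathbf{v_i})$. I would resolve this by reading $\mathcal{Q}(\mathbf{v_i}\,|\,\mathbf{n_G})$ in the denominator of \cref{eq:single_reflection_kernel} as built from the same conditional density, which is the consistent normalising choice. Failing that, I would assume that conditioning on $\mathbf{v_i}$ does not alter the normal density. Two smaller points also need care. The integral over $\mathbf{n_L}$ must be restricted to visible normals so that impermeability makes $\mathcal{K_L}$ vanish elsewhere. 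In addition, $\mathcal{Q}(\mathbf{v_i}\,|\,\mathbf{n_G})$ must be nonzero, which holds for any incident direction with some visible normal.
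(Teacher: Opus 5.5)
Your proposal follows the paper's proof step for step. You exchange the $\mathbf{n_L}$ integral with the reflected-state integrals, rotate $(\mathbf{r_r},\mathbf{v_r})$ into the local frame with unit Jacobian, and apply the normalisation of $\mathcal{K_L}$. You then use local smoothness to write $\mathcal{Q}(\mathbf{v_{i_L}}\,|\,\mathbf{n_L})=|\mathbf{v_i}\cdot\mathbf{n_L}|$ on visible normals, so that the remaining integral is $\mathcal{Q}(\mathbf{v_i}\,|\,\mathbf{n_G})$; like you, the paper tacitly builds that visible flux from the velocity-conditioned density $p_n(\mathbf{n_L}\,|\,\mathbf{n_G},\mathbf{v_i})$.

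One small correction: you do not need impermeability, since only $\mathcal{K_L}\in\mathcal{T}$ is assumed. In any case, impermeability would not make $\mathcal{K_L}$ vanish for non-visible normals. The indicator built into $\mathcal{Q}(\mathbf{v_{i_L}}\,|\,\mathbf{n_L})$ already removes those normals from the integral.
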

\begin{proof}
    We begin by substituting the single-reflection kernel expression in \cref{def:single_reflection_scattering_kernel} into the normalisation condition given by \cref{eq:normalisation}, and obtain
    \begin{equation}
    \begin{aligned}
        \int_{\mathbb{R}^3}\int_{\mathbb{R}^3}\int_{\mathbb{R}}\mathcal{K_K}(\mathbf{r_{i}} \rightarrow \mathbf{r_{r}}, \mathbf{v_{i}} \rightarrow \mathbf{v_{r}}, t_i \rightarrow t_r) \, \dif t_r \dif \mathbf{v_r} \dif \mathbf{r_r} & = \int_{\mathbb{R}^3}\int_{\mathbb{R}^3}\int_{\mathbb{R}}\int_{\mathbb{S}^2} \mathcal{K_L}(\mathbf{r_{i_L}} \rightarrow \mathbf{r_{r_L}}, \mathbf{v_{i_L}} \rightarrow \mathbf{v_{r_L}}, t_i \rightarrow t_r) 
        \\ & \cdot  \frac{\mathcal{Q}( \mathbf{v_{i_L}} \, | \, \mathbf{n_L})}
    {\mathcal{Q}( \mathbf{v_{i}}\, | \, \mathbf{n_G})} \,
        p_n(\mathbf{n_L} \, | \, \mathbf{n_G}, \mathbf{v_i}) \, \dif \mathbf{n_L}\, \dif t_r \dif \mathbf{v_r} \dif \mathbf{r_r}.
    \end{aligned}
    \end{equation}
    We can now change the order of integration between the local normal $\mathbf{n_L}$ and the normalisation variables $\mathbf{r_r}$, $\mathbf{v_r}$, and $t_r$ to obtain
    \begin{equation}
    \begin{aligned}
        \int_{\mathbb{R}^3}\int_{\mathbb{R}^3}\int_{\mathbb{R}}\mathcal{K_K}(\mathbf{r_{i}} \rightarrow \mathbf{r_{r}}, \mathbf{v_{i}} \rightarrow \mathbf{v_{r}}, t_i \rightarrow t_r) \,  \dif t_r \dif \mathbf{v_r} \dif \mathbf{r_r} & = \int_{\mathbb{S}^2} \int_{\mathbb{R}^3}\int_{\mathbb{R}^3}\int_{\mathbb{R}}\mathcal{K_L}(\mathbf{r_{i_L}} \rightarrow \mathbf{r_{r_L}}, \mathbf{v_{i_L}} \rightarrow \mathbf{v_{r_L}}, t_i \rightarrow t_r) \\
        & \cdot  \dif t_r \dif \mathbf{v_r} \dif \mathbf{r_r}  \frac{\mathcal{Q}( \mathbf{v_{i_L}} \, | \, \mathbf{n_L})}
    {\mathcal{Q}( \mathbf{v_{i}} \, | \, \mathbf{n_G})} \,
        p_n(\mathbf{n_L} \, | \, \mathbf{n_G}, \mathbf{v_i}) \, \dif \mathbf{n_L}.
    \end{aligned}
    \end{equation}
    Now, we change the integration variable of the inner integral, from $\mathbf{v_r}$ to $\mathbf{v_{r_L}}$, minding that a change in coordinate system results in a Jacobian equal to unity in both position and velocity,
      \begin{equation}
    \begin{aligned}
        \int_{\mathbb{R}^3}\int_{\mathbb{R}^3}\int_{\mathbb{R}}\mathcal{K_K}(\mathbf{r_{i}} \rightarrow \mathbf{r_{r}}, \mathbf{v_{i}} \rightarrow \mathbf{v_{r}}, t_i \rightarrow t_r) \,  \dif t_r \dif \mathbf{v_r} \dif \mathbf{r_r} & = \int_{\mathbb{S}^2} \int_{\mathbb{R}^3}\int_{\mathbb{R}^3}\int_{\mathbb{R}}\mathcal{K_L}(\mathbf{r_{i_L}} \rightarrow \mathbf{r_{r_L}}, \mathbf{v_{i_L}} \rightarrow \mathbf{v_{r_L}}, t_i \rightarrow t_r) \\
        & \cdot \dif t_r \dif \mathbf{v_{r_L}} \dif \mathbf{r_{r_L}}  \frac{\mathcal{Q}( \mathbf{v_{i_L}} \, | \, \mathbf{n_L})}
    {\mathcal{Q}( \mathbf{v_{i}} \, | \, \mathbf{n_G})} \,
        p_n(\mathbf{n_L} \, | \, \mathbf{n_G}, \mathbf{v_i}) \, \dif \mathbf{n_L}.
    \end{aligned}
    \end{equation}
    Finally, we make use of the normalisation property of the local kernel $\mathcal{K_L}(\mathbf{r_{i_L}} \rightarrow \mathbf{r_{r_L}}, \mathbf{v_{i_L}} \rightarrow \mathbf{v_{r_L}}, t_i \rightarrow t_r)$, to arrive at
    \begin{equation}
        \begin{aligned}
            \int_{\mathbb{R}^3}\int_{\mathbb{R}^3}\int_{\mathbb{R}}\mathcal{K_K}(\mathbf{r_{i}} \rightarrow \mathbf{r_{r}}, \mathbf{v_{i}} \rightarrow \mathbf{v_{r}}, t_i \rightarrow t_r) \,  \dif t_r \dif \mathbf{v_r} \dif \mathbf{r_r} & = \int_{\mathbb{S}^2} \int_{\mathbb{R}^3}\int_{\mathbb{R}^3}\int_{\mathbb{R}}\mathcal{K_L}(\mathbf{r_{i_L}} \rightarrow \mathbf{r_{r_L}}, \mathbf{v_{i_L}} \rightarrow \mathbf{v_{r_L}}, t_i \rightarrow t_r) \\
        & \cdot \dif t_r \dif \mathbf{v_{r_L}} \dif \mathbf{r_{r_L}}  \frac{\mathcal{Q}( \mathbf{v_{i_L}} \, | \, \mathbf{n_L})}
    {\mathcal{Q}(\mathbf{v_{i}} \, | \, \mathbf{n_G})} \,
        p_n(\mathbf{n_L} \, | \, \mathbf{n_G}, \mathbf{v_i}) \, \dif \mathbf{n_L} \\ 
            & = \int_{\mathbb{S}^2}  \frac{\mathcal{Q}( \mathbf{v_{i_L}} \, | \, \mathbf{n_L})}
    {\mathcal{Q}(\mathbf{v_{i}} \, | \, \mathbf{n_G})} \,
        p_n(\mathbf{n_L} \, | \, \mathbf{n_G}, \mathbf{v_i}) \, \dif \mathbf{n_L}.
        \end{aligned}
    \end{equation}
    We now remember that $\Psi$ is smooth in $L$ in the sense of \cref{def:local_smoothness}, and thus, $\mathcal{Q}( \mathbf{v_{i_L}} \, | \, \mathbf{n_L}) = \left|\mathbf{v_{i_L}} \cdot \left.\mathbf{n_L}\right|^L \right| = \left|\mathbf{v_{i}} \cdot \mathbf{n_L} \right|$. Then, under \cref{def:visible_flux},
    \begin{equation}
        \begin{aligned}
            \int_{\mathbb{R}^3}\int_{\mathbb{R}^3}\int_{\mathbb{R}}\mathcal{K_K}(\mathbf{r_{i}} \rightarrow \mathbf{r_{r}}, \mathbf{v_{i}} \rightarrow \mathbf{v_{r}}, t_i \rightarrow t_r) \,  \dif t_r \dif \mathbf{v_r} \dif \mathbf{r_r} & = \int_{\mathbf{v_i} \cdot \mathbf{n_L} < 0}  \frac{ \left|\mathbf{v_{i}} \cdot \mathbf{n_L} \right|}
    {\mathcal{Q}(\mathbf{v_{i}} \, | \, \mathbf{n_G})} \,
        p_n(\mathbf{n_L} \, | \, \mathbf{n_G}, \mathbf{v_i}) \, \dif \mathbf{n_L} = 1.
        \end{aligned}
    \end{equation}
\end{proof}
\begin{lemma}\label{lemma:nonnegativity}
    Let $\mathcal{K_K} : \mathbb{R}^3 \times \mathbb{R}^3 \times \mathbb{R} \times \mathbb{R}^3 \times \mathbb{R}^3 \times \mathbb{R} \rightarrow [0,\infty)$, written as $\mathcal{K_K} = \mathcal{K_K}\left(\mathbf{r_i} \rightarrow \mathbf{r_r}, \mathbf{v_i} \rightarrow \mathbf{v_r}, t_i \rightarrow t_r\right)$, denote a Lebesgue-integrable function associated with the global reference frame $G$ of a surface $\Psi$ in the sense of \cref{def:global_frame,def:surface_general}, and defined as in \cref{def:single_reflection_scattering_kernel}, such that
    \begin{multline}
         \mathcal{K_K}(\mathbf{r_{i}} \rightarrow \mathbf{r_{r}}, \mathbf{v_{i}} \rightarrow \mathbf{v_{r}}, t_i \rightarrow t_r)
        =
        \int_{\mathbb{S}^2}
        \mathcal{K_L}(\mathbf{r_{i_L}} \rightarrow \mathbf{r_{r_L}}, \mathbf{v_{i_L}} \rightarrow \mathbf{v_{r_L}}, t_i \rightarrow t_r)  
         \frac{\mathcal{Q}( \mathbf{v_{i_L}} \, | \, \mathbf{n_L})}
    {\mathcal{Q}( \mathbf{v_{i}} \, | \, \mathbf{n_G})} \,
        p_n(\mathbf{n_L} \, | \, \mathbf{n_G}, \mathbf{v_i})
        \, \dif \mathbf{n_L}.
    \end{multline}
where $\mathcal{K_L}: \mathbb{R}^3 \times \mathbb{R}^3 \times \mathbb{R}
\times \mathbb{R}^3 \times \mathbb{R}^3 \times \mathbb{R} \rightarrow[0, \infty)$, written as $\mathcal{K_L}= \mathcal{K_L}(\mathbf{r_{i_L}} \rightarrow \mathbf{r_{r_L}}, \mathbf{v_{i_L}} \rightarrow \mathbf{v_{r_L}}, t_i \rightarrow t_r) \in \mathcal{T}$ is a Lebesgue-integrable local scattering kernel defined in a local reference frame $L$, as given by \cref{def:local_frame}. Here, $\mathcal{T}$ is the set of admissible scattering kernels in the sense of \cref{def:scattering_kernel_set}, $\mathbf{r_{i_L}}$, $\mathbf{r_{r_L}}$, $\mathbf{v_{i_L}}$, $\mathbf{v_{r_L}}$ are $3\times 1$ real vectors denoting the incident and reflected particle positions and velocities, respectively, expressed in that frame, $t_i$ and $t_r$ represent the corresponding entry and exit times, respectively, and $\mathbf{n_L}$ and $\mathbf{n_G}$ are $3\times 1$ real unit vectors denoting the local and global surface normals, respectively. Then, the function  $\mathcal{K_K}$ satisfies the non-negativity condition in the sense of \cref{def:nonnegativity}, provided that $\mathcal{K_L}$ satisfies the same condition.
\end{lemma}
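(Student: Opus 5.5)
The plan is to show directly that the integrand in \cref{eq:single_reflection_kernel} is pointwise non-negative on $\mathbb{S}^2$ and that the prefactor is well defined. Integration against a non-negative measure then gives a non-negative result. I would fix an arbitrary tuple $(\mathbf{r_i},\mathbf{r_r},\mathbf{v_i},\mathbf{v_r},t_i,t_r)$ and write the local quantities $\mathbf{r_{i_L}},\mathbf{r_{r_L}},\mathbf{v_{i_L}},\mathbf{v_{r_L}}$ as the images of the global ones under the rotation $T_y(\theta_{n_1})T_z(\theta_{n_2})$ of \cref{def:local_frame}. For each $\mathbf{n_L}$ this gives a well-defined point in the domain of $\mathcal{K_L}$.

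The key steps, in order:

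\textbf{First}, by the non-negativity of $\mathcal{K_L}$ (\cref{def:nonnegativity}), applied at the rotated arguments, the factor $\mathcal{K_L}(\mathbf{r_{i_L}} \rightarrow \mathbf{r_{r_L}}, \mathbf{v_{i_L}} \rightarrow \mathbf{v_{r_L}}, t_i \rightarrow t_r) \geq 0$ for every $\mathbf{n_L}\in\mathbb{S}^2$.

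\textbf{Second}, $p_n(\mathbf{n_L}\,|\,\mathbf{n_G},\mathbf{v_i})\geq 0$, since it is a probability density with codomain $[0,\infty)$.

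\textbf{Third}, the visible flux $\mathcal{Q}(\mathbf{v_{i_L}}\,|\,\mathbf{n_L})$ of \cref{def:visible_flux} is an integral of the non-negative quantity $|\mathbf{v}\cdot\mathbf{n}|\,p_n$, so it is non-negative; the same holds for $\mathcal{Q}(\mathbf{v_i}\,|\,\mathbf{n_G})$.

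Multiplying these, the integrand is non-negative wherever the quotient is defined. Monotonicity of the Lebesgue integral over $\mathbb{S}^2$ then yields $\mathcal{K_K}\geq 0$.

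The main obstacle is the denominator $\mathcal{Q}(\mathbf{v_i}\,|\,\mathbf{n_G})$, which can vanish, for example at grazing incidence or when no local normal is visible from $\mathbf{v_i}$. I would handle this in two cases.

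\emph{Case $\mathcal{Q}(\mathbf{v_i}\,|\,\mathbf{n_G})>0$.} The quotient is a well-defined non-negative number, and the argument above applies directly.

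\emph{Case $\mathcal{Q}(\mathbf{v_i}\,|\,\mathbf{n_G})=0$.} The integral $\int_{\mathbf{v_i}\cdot\mathbf{n_L}<0}|\mathbf{v_i}\cdot\mathbf{n_L}|\,p_n\,\dif\mathbf{n_L}$ vanishes, so $p_n$ is zero almost everywhere on the visible hemisphere. On the complement $\mathbf{v_i}\cdot\mathbf{n_L}\geq 0$, the numerator $\mathcal{Q}(\mathbf{v_{i_L}}\,|\,\mathbf{n_L})$ is zero; under local smoothness it reduces to $|\mathbf{v_i}\cdot\mathbf{n_L}|$ restricted to $\mathbf{v_i}\cdot\mathbf{n_L}<0$. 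The integrand is therefore zero almost everywhere. Adopting the natural convention $0/0=0$, or defining $\mathcal{K_K}$ by continuity, gives $\mathcal{K_K}=0\geq 0$ at such points.

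Since these points are those at which no particle impinges, they form a set of incident states carrying zero flux. I would remark that the convention does not affect any flux-weighted statement such as the preceding lemmas.
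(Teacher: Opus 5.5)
Your argument matches the paper's proof: $\mathcal{K_L}\geq 0$, $p_n\geq 0$ and the ratio of visible fluxes is non-negative, so the integral over $\mathbb{S}^2$ is non-negative. Your treatment of $\mathcal{Q}(\mathbf{v_i}\,|\,\mathbf{n_G})=0$ goes beyond the paper, which just asserts the ratio is $\geq 0$, and it is not needed because the quotient is non-negative under any extended-real convention; note also that the aside relies on local smoothness, which this lemma does not assume, and it conflates the $p_n(\mathbf{n_L}\,|\,\mathbf{n_G},\mathbf{v_i})$ in $\mathcal{K_K}$ with the $p_n(\mathbf{n_L}\,|\,\mathbf{n_G})$ in $\mathcal{Q}$.
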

\begin{proof}
    The proof is trivial. We immediately see that $\frac{\mathcal{Q}( \mathbf{v_{i_L}} \, | \, \mathbf{n_L})}
    {\mathcal{Q}( \mathbf{v_{i}} \, | \, \mathbf{n_G})} \geq 0, \, \forall (\mathbf{r_i}, \mathbf{v_i}, t_i, \mathbf{n_L}, \mathbf{n_G}) \in \mathbb{R}^3 \times \mathbb{R}^3 \times \mathbb{R} \times \mathbb{R}^3 \times \mathbb{R}^3$, and $p_n(\mathbf{n_L} \, | \, \mathbf{n_G}, \mathbf{v_i}) \geq 0, \, \forall (\mathbf{n_L}, \mathbf{n_G}, \mathbf{r_i}, t_i) \in \mathbb{S}^2 \times \mathbb{S}^2 \times \mathbb{R}^3 \times \mathbb{R}$. We further know that $\mathcal{K_L}$ satisfies non-negativity, i.e. $\mathcal{K_L}(\mathbf{r_{i_L}} \rightarrow \mathbf{r_{r_L}}, \mathbf{v_{i_L}} \rightarrow \mathbf{v_{r_L}}, t_i \rightarrow t_r) \geq 0, \, \forall (\mathbf{r_{i_L}}, \mathbf{r_{r_L}}, \mathbf{v_{i_L}}, \mathbf{v_{r_{L}}}, t_i, t_r) \in \mathbb{R}^3 \times \mathbb{R}^3 \times \mathbb{R}^3 \times \mathbb{R}^3 \times \mathbb{R} \times \mathbb{R}$. Since all terms in the integral are non-negative, we can infer that $\mathcal{K_K}\left(\mathbf{r_i} \rightarrow \mathbf{r_r}, \mathbf{v_i} \rightarrow \mathbf{v_r}, t_i \rightarrow t_r\right)$ is non-negative for any $\mathbf{r_r}, \mathbf{v_r}, t_r, \mathbf{n_L} \in \mathbb{R}^3 \times \mathbb{R}^3 \times [0 , \infty) \times \mathbb{R}^3$, and thus satisfies the condition. \qedhere
\end{proof}

\subsection{Multi-reflection kernel} \label{subsec:multi_reflection_kernel}

We now use the single-reflection kernel, $\mathcal{K_K}$, to construct a multi-reflection kernel, $\mathcal{K_{MK}}$, describing the scattering of gas particles under the action of a rough surface $\Psi$ for which particles may undergo multiple successive reflections before escaping. The probability of re-collision is controlled by the two-point shadowing function, $\mathcal{S}(\mathbf{r_i}, \mathbf{v_i}, t_i \, | \, \mathbf{r_r}, \mathbf{v_r}, t_r)$ associated with $\Psi$. Below, we formally define this kernel and provide a sufficient set of conditions for $\mathcal{S}$ under which it preserves impermeability, reciprocity, normalisation, and non-negativity, provided that its associated local kernel, $\mathcal{K_L}$ satisfies impermeability, pointwise reciprocity, normalisation and non-negativity.

\begin{definition} \label{def:multi_reflection_scattering_kernel}
Let 
$\mathcal{K_K} :
\mathbb{R}^3 \times \mathbb{R}^3 \times \mathbb{R}
\times
\mathbb{R}^3 \times \mathbb{R}^3 \times \mathbb{R}
\rightarrow [0, \infty)$,
written as
$\mathcal{K_K}(
\mathbf{r_i} \rightarrow \mathbf{r_r},
\mathbf{v_i} \rightarrow \mathbf{v_r},
t_i \rightarrow t_r)$,
be a Lebesgue-integrable function defined in the global reference frame of \cref{def:global_frame}, representing the single-reflection global kernel of \cref{def:single_reflection_scattering_kernel} associated with a surface $\Psi$ as given by \cref{def:surface_general}, where
$\mathbf{r_i}, \mathbf{r_r}, \mathbf{v_i}, \mathbf{v_r} \in \mathbb{R}^3$
are expressed in the global frame. Further, let
$\mathcal{K_{MK}} :
\mathbb{R}^3 \times \mathbb{R}^3 \times \mathbb{R}
\times
\mathbb{R}^3 \times \mathbb{R}^3 \times \mathbb{R}
\rightarrow [0, \infty)$,
written as
$\mathcal{K_{MK}}(
\mathbf{r_i} \rightarrow \mathbf{r_r},
\mathbf{v_i} \rightarrow \mathbf{v_r},
t_i \rightarrow t_r)$,
be a Lebesgue-integrable function defined in the same global reference frame.

Let $\mathcal{S} : \mathbb{R}^3 \times \mathbb{R}^3 \times \mathbb{R} \times \mathbb{R}^3 \times \mathbb{R}^3 \times \mathbb{R} \rightarrow (0,1]$, written as $\mathcal{S}=\mathcal{S}(\mathbf{r_i},\mathbf{v_i},t_i \mid \mathbf{r_r},\mathbf{v_r},t_r)$, denote the two-point shadowing function associated with surface $\Psi$ and autocorrelation length $R$, as defined in \cref{def:shadowing_single_scale,def:length_scale}. Furthermore, let $\mathcal{S} : \mathbb{R}^3 \times \mathbb{R}^3 \times \mathbb{R} \rightarrow (0,1]$, $\mathcal{S}=\mathcal{S}(\mathbf{r_r},\mathbf{v_r},t_r)$, denotes the corresponding single-point one-point shadowing function associated with $\Psi$ and $R$. Then $\mathcal{K_{MK}}$ is said to be the \textbf{multi-reflection global kernel} associated with $\Psi$ if
\begin{equation} \label{eq:multi_reflection_kernel}
\boxed{
    \begin{aligned}
        \mathcal{K_{MK}}(
        \mathbf{r_i} \rightarrow \mathbf{r_r},
        \mathbf{v_i} \rightarrow \mathbf{v_r},
        t_i \rightarrow t_r)
        &=
        \sum_{n=1}^{\infty}
        \Bigg\{
        \underset{\mathbf{v_{r_k}},\mathbf{v_{i_k}}}{\int\cdots\int}
        \underset{\mathbf{r_{r_k}},\mathbf{r_{i_k}}}{\int\cdots\int}
        \underset{t_{r_k},t_{i_k}}{\int\cdots\int}
        \mathcal{K_K}(
        \mathbf{r_i}\rightarrow\mathbf{r_{r_1}},
        \mathbf{v_i}\rightarrow\mathbf{v_{r_1}},
        t_i\rightarrow t_{r_1})
        \\
        &\qquad\cdot
        \prod_{k=2}^{n}
        \Bigg[
        \mathcal{S}(
        \mathbf{r_{i_k}}, \mathbf{v_{i_k}}, t_{i_k}
        \mid
        \mathbf{r_{r_{k-1}}},\mathbf{v_{r_{k-1}}},t_{r_{k-1}})
        \,
        \\
        &\qquad\cdot \, f_{\varphi}(\mathbf{r_{i_k}}, \mathbf{v_{i_k}}, t_{i_k}) \,\mathcal{K_K}(
        \mathbf{r_{i_k}}\rightarrow\mathbf{r_{r_k}},
        \mathbf{v_{i_{k}}}\rightarrow\mathbf{v_{r_k}},
        t_{i_k}\rightarrow t_{r_k})
        \Bigg]
        \\
        &\qquad\cdot
        \dif\mathbf{v_{r_1}}\cdots\dif\mathbf{v_{r_{n-1}}}
        \dif\mathbf{v_{i_2}}\cdots\dif\mathbf{v_{i_{n-1}}}
        \dif\mathbf{r_{r_1}}\cdots\dif\mathbf{r_{r_{n-1}}}
        \dif\mathbf{r_{i_2}}\cdots\dif\mathbf{r_{i_{n-1}}}
        \\
        &\qquad\cdot
        \dif t_{r_1}\cdots\dif t_{r_{n-1}}
        \dif t_{i_2}\cdots\dif t_{i_{n-1}}
        \;
        \mathcal{S}(\mathbf{r_r},\mathbf{v_r},t_r)
        \Bigg\}.
    \end{aligned}
}
\end{equation}
\end{definition}

\begin{lemma} \label{lemma:reciprocity_multi_refl}
     Let $\mathcal{K_{MK}} : \mathbb{R}^3 \times \mathbb{R}^3 \times \mathbb{R}
\times \mathbb{R}^3 \times \mathbb{R}^3 \times \mathbb{R}
\rightarrow [0, \infty)$, $\mathcal{K_{MK}} = \mathcal{K_{MK}}\left(\mathbf{r_i} \rightarrow \mathbf{r_r}, \mathbf{v_i} \rightarrow \mathbf{v_r}, t_i \rightarrow t_r \right)$ be a Lebesgue-integrable function, defined in the global reference frame from \cref{def:global_frame} and associated with a surface $\Psi$ in the sense of \cref{def:surface_general}, whose expression is
\begin{equation}
        \begin{aligned}
        \mathcal{K_{MK}}(
        \mathbf{r_i} \rightarrow \mathbf{r_r},
        \mathbf{v_i} \rightarrow \mathbf{v_r},
        t_i \rightarrow t_r)
        &=
        \sum_{n=1}^{\infty}
        \Bigg\{
        \underset{\mathbf{v_{r_k}},\mathbf{v_{i_k}}}{\int\cdots\int}
        \underset{\mathbf{r_{r_k}},\mathbf{r_{i_k}}}{\int\cdots\int}
        \underset{t_{r_k},t_{i_k}}{\int\cdots\int}
        \mathcal{K_K}(
        \mathbf{r_i}\rightarrow\mathbf{r_{r_1}},
        \mathbf{v_i}\rightarrow\mathbf{v_{r_1}},
        t_i\rightarrow t_{r_1})
        \\
        &\qquad\cdot
        \prod_{k=2}^{n}
        \Bigg[
        \mathcal{S}(
        \mathbf{r_{i_k}}, \mathbf{v_{i_k}}, t_{i_k}
        \mid
        \mathbf{r_{r_{k-1}}},\mathbf{v_{r_{k-1}}},t_{r_{k-1}})
        \,
        \\
        &\qquad\cdot \, f_{\varphi}(\mathbf{r_{i_k}}, \mathbf{v_{i_k}}, t_{i_k}) \,\mathcal{K_K}(
        \mathbf{r_{i_k}}\rightarrow\mathbf{r_{r_k}},
        \mathbf{v_{i_{k}}}\rightarrow\mathbf{v_{r_k}},
        t_{i_k}\rightarrow t_{r_k})
        \Bigg]
        \\
        &\qquad\cdot
        \dif\mathbf{v_{r_1}}\cdots\dif\mathbf{v_{r_{n-1}}}
        \dif\mathbf{v_{i_2}}\cdots\dif\mathbf{v_{i_{n-1}}}
        \dif\mathbf{r_{r_1}}\cdots\dif\mathbf{r_{r_{n-1}}}
        \dif\mathbf{r_{i_2}}\cdots\dif\mathbf{r_{i_{n-1}}}
        \\
        &\qquad\cdot
        \dif t_{r_1}\cdots\dif t_{r_{n-1}}
        \dif t_{i_2}\cdots\dif t_{i_{n-1}}
        \;
        \mathcal{S}(\mathbf{r_r},\mathbf{v_r},t_r)
        \Bigg\},
    \end{aligned}
\end{equation}
as given by \cref{def:multi_reflection_scattering_kernel}, where $\mathcal{K_K}: \mathbb{R}^3 \times \mathbb{R}^3 \times \mathbb{R}
\times \mathbb{R}^3 \times \mathbb{R}^3 \times \mathbb{R} \rightarrow\mathbb{R}$, written as $\mathcal{K_K}= \mathcal{K_K}(\mathbf{r_{i}} \rightarrow \mathbf{r_{r}}, \mathbf{v_{i}} \rightarrow \mathbf{v_{r}}, t_i \rightarrow t_r)$, is a Lebesgue-integrable single-reflection global scattering kernel defined in the same frame. Here, $\mathbf{r_{i}}$, $\mathbf{r_{r}}$, $\mathbf{v_{i}}$, and $\mathbf{v_{r}}$ are $3\times 1$ real vectors denoting the incident and reflected particle positions and velocities, respectively, $t_i$ and $t_r$ represent the corresponding entry and exit times, respectively. $\mathcal{S}(\mathbf{r_r},\mathbf{v_r},t_r)$ and $\mathcal{S}(\mathbf{r_i}, \mathbf{v_i}, t_i \, | \, \mathbf{r_r},\mathbf{v_r},t_r)$ denote the one and two-point shadowing
functions, respectively, defined in \cref{def:shadowing_single_scale} and associated with the
surface $\Psi$. Then $\mathcal{K_{MK}}$ satisfies reciprocity in the sense of \cref{def:reciprocity} if the local kernel $\mathcal{K_L}$ associated with $\mathcal{K_K}$ satisfies pointwise reciprocity and impermeability in the sense of \cref{def:pointwise_reciprocity,def:impermeability}, and the two-point shadowing function $\mathcal{S}$ satisfies
\begin{multline} \label{eq:shadowing_condition_1}
    \mathcal{S}\!\left(
    \mathbf{r_i}, \mathbf{v_i}, t_i
    \mid
    \mathbf{r_r}, \mathbf{v_r}, t_r
    \right) 
    =
    \mathcal{S}\!\left(
    \mathbf{r_r}, -\mathbf{v_r}, -t_r
    \mid
    \mathbf{r_i}, -\mathbf{v_i}, -t_i
    \right),
    \forall\,(\mathbf{r_i},\mathbf{v_i},t_i,\mathbf{r_r},\mathbf{v_r},t_r)
    \in
    \mathbb{R}^3 \times \mathbb{R}^3 \times \mathbb{R}
    \times
    \mathbb{R}^3 \times \mathbb{R}^3 \times \mathbb{R}.
\end{multline}
\end{lemma}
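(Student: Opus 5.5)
The plan is to prove reciprocity term by term in the Neumann-type series of \cref{eq:multi_reflection_kernel}. For each $n$, the $n$-th term is a chain of $n$ single-reflection kernels. Consecutive kernels are joined by the factor $\mathcal{S}(\cdot \mid \cdot)\, f_{\varphi}$, and the chain is closed by the one-point shadowing $\mathcal{S}(\mathbf{r_r},\mathbf{v_r},t_r)$. The idea is to multiply the $n$-th term by the left-hand weight of \cref{eq:reciprocity_general}, namely $\mathcal{Q}(\mathbf{v_i}\mid\mathbf{n_G})\, f_{\varphi}(\mathbf{r_i},\mathbf{v_i},t_i) f_0(\mathbf{v_i})\,\mathcal{M}(\mathbf{r_i},\mathbf{v_i},t_i)$, suitably normal-averaged. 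This weight is then pushed through the chain one link at a time using the pointwise reciprocity of $\mathcal{K_K}$, which holds by \cref{lemma:global_reciprocity} because $\mathcal{K_L}$ is pointwise reciprocal and impermeable. Once every link has been reversed, the chain is relabelled in reverse order, so that the right-hand side of \cref{eq:reciprocity_general} emerges. Summing over $n$ then gives the claim.

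\textbf{Step 1.} For the first link, \cref{lemma:global_reciprocity} gives
\[
\mathcal{K_K}(\mathbf{r_i}\!\to\!\mathbf{r_{r_1}},\dots)\,\mathcal{Q}(\mathbf{v_i}\mid\mathbf{n_G})\, f_{\varphi_0}(\mathbf{r_i},\mathbf{v_i},t_i)
= \mathcal{K_K}(\mathbf{r_{r_1}}\!\to\!\mathbf{r_i},-\mathbf{v_{r_1}}\!\to\!-\mathbf{v_i},\dots)\,\mathcal{Q}(-\mathbf{v_{r_1}}\mid\mathbf{n_G})\, f_{\varphi_0}(\mathbf{r_{r_1}},-\mathbf{v_{r_1}},-t_{r_1}).
\]
The weight has now moved to the state $(\mathbf{r_{r_1}},-\mathbf{v_{r_1}},-t_{r_1})$.

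\textbf{Step 2.} This is the key bookkeeping step, and it absorbs the connecting factor. At each link, the factor $f_{\varphi}(\mathbf{r_{i_k}},\mathbf{v_{i_k}},t_{i_k})$ combined with $f_0$ produces the equilibrium density $f_{\varphi_0}$ required by \cref{lemma:global_reciprocity} at the next kernel. The factor $\mathcal{S}(\mathbf{r_{i_k}},\mathbf{v_{i_k}},t_{i_k}\mid \mathbf{r_{r_{k-1}}},\mathbf{v_{r_{k-1}}},t_{r_{k-1}})$ is turned into its reversed counterpart by hypothesis \cref{eq:shadowing_condition_1}. The intermediate velocity is free-flight, so $\mathbf{v_{i_k}}$ and $\mathbf{v_{r_{k-1}}}$ are linked kinematically, with $|\mathbf{v_{i_k}}|=|\mathbf{v_{r_{k-1}}}|$ in the classical free-molecular limit. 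Consequently, $f_0(\mathbf{v_{i_k}})=f_0(-\mathbf{v_{r_{k-1}}})$ and the Maxwellian weight transfers across each gap unchanged. Iterating this step $n-1$ times, each $\mathcal{K_K}$ becomes $\mathcal{K_K}(\mathbf{r_{r_k}}\to\mathbf{r_{i_k}},-\mathbf{v_{r_k}}\to-\mathbf{v_{i_k}},\dots)$, each two-point shadowing factor is reversed, and the weight ends at $(\mathbf{r_r},-\mathbf{v_r},-t_r)$.

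\textbf{Step 3.} The endpoint factors are exchanged next. The left weight carried $\mathcal{M}(\mathbf{r_i},\mathbf{v_i},t_i)$ and the chain ended with $\mathcal{S}(\mathbf{r_r},\mathbf{v_r},t_r)$. Taking the limits in \cref{def:shadowing_single_scale} of \cref{eq:shadowing_condition_1} yields $\mathcal{S}(\mathbf{r},-\mathbf{v},-t)=\mathcal{M}(\mathbf{r},\mathbf{v},t)$. Hence the initial masking becomes the final one-point shadowing of the reversed chain, $\mathcal{S}(\mathbf{r_i},-\mathbf{v_i},-t_i)$, and the final shadowing becomes the masking $\mathcal{M}(\mathbf{r_r},-\mathbf{v_r},-t_r)$ that appears on the right-hand side. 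Relabelling the dummy variables by $k\mapsto n+1-k$, with $(\mathbf{r_{r_k}},\mathbf{v_{r_k}},t_{r_k})\leftrightarrow(\mathbf{r_{i_{n+1-k}}},-\mathbf{v_{i_{n+1-k}}},-t_{i_{n+1-k}})$, has unit Jacobian. This relabelling turns the reversed integrand into exactly the $n$-th term of $\mathcal{K_{MK}}(\mathbf{r_r}\to\mathbf{r_i},-\mathbf{v_r}\to-\mathbf{v_i},-t_r\to-t_i)$ multiplied by the right-hand weight. Finally, averaging over local normals gives the $\langle\cdot\rangle$ of \cref{eq:reciprocity_general}, and summing over $n$ (by Tonelli, since all terms are non-negative) finishes the argument.

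The main obstacle is Step 2. There the hazard factor $f_{\varphi}$ sits on the incident side of each link, whereas reversal requires it on the reflected side of the previous link. I would handle this by pairing $f_{\varphi}(\mathbf{r_{i_k}},\dots)$ with the reversed kernel and using that $f_{\varphi}$, as a logarithmic time derivative of a reversal-symmetric $\mathcal{S}$, satisfies the corresponding symmetry, so it can be matched with the hazard at $(\mathbf{r_{r_{k-1}}},-\mathbf{v_{r_{k-1}}})$. If that fails pointwise, I would instead carry the product $\mathcal{S}\,f_{\varphi}$ as the single density $\dot{\mathcal{S}}$ and apply \cref{eq:shadowing_condition_1} to its time derivative. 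I would also invoke \cref{eq:normal_time_reversal} again so that the normal averages match at both ends.
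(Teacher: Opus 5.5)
Your skeleton matches the paper's. You reverse each term of the series link by link with the pointwise reciprocity of $\mathcal{K_K}$ from \cref{lemma:global_reciprocity}, flip each two-point shadowing factor with \cref{eq:shadowing_condition_1}, and carry the Maxwellian across each free flight via $|\mathbf{v_{i_k}}|=|\mathbf{v_{r_{k-1}}}|$ and $f_0(-\mathbf{v})=f_0(\mathbf{v})$. You then exchange masking and one-point shadowing at the two ends and reverse the product. The genuine problem is how you resolve the step you call the main obstacle. The symmetry you invoke, $f_{\varphi}(\mathbf{r_{i_k}},\mathbf{v_{i_k}},t_{i_k})=f_{\varphi}(\mathbf{r_{r_{k-1}}},-\mathbf{v_{r_{k-1}}},-t_{r_{k-1}})$, does not follow from \cref{eq:shadowing_condition_1} and is false in general. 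The two hazards are logarithmic time derivatives of $\mathcal{S}$ taken at \emph{different} ends of the same flight: the arrival end of the forward flight versus the arrival end of the reversed one. Reversal symmetry identifies the two functions, not these two partial derivatives. In the poly-Gaussian model of \cref{sec:Study_case}, for example, $f_{\varphi}$ carries the factor $p_h(\xi_k)/\mathcal{F}(\xi_k \mid \theta_{n_1},\theta_{n_2})$ at the arrival height. The two sides therefore differ whenever the ends of the flight lie at different heights. Your fallback of applying \cref{eq:shadowing_condition_1} to $\dot{\mathcal{S}}$ fails for the same reason.

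No symmetry of $f_{\varphi}$ is needed; the correct bookkeeping is already latent in your Step 2 once you track what each reciprocity step leaves behind.

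After link $k-1$ is reversed, the weight at the reversed state is $f_{\varphi_0}(\mathbf{r_{r_{k-1}}},-\mathbf{v_{r_{k-1}}},-t_{r_{k-1}})=f_{\varphi}(\mathbf{r_{r_{k-1}}},-\mathbf{v_{r_{k-1}}},-t_{r_{k-1}})\,f_0(-\mathbf{v_{r_{k-1}}})$. Only $f_0$ crosses the gap. It combines with the chain's $f_{\varphi}(\mathbf{r_{i_k}},\mathbf{v_{i_k}},t_{i_k})$ into the $f_{\varphi_0}(\mathbf{r_{i_k}},\mathbf{v_{i_k}},t_{i_k})$ needed to reverse link $k$.

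The hazard $f_{\varphi}(\mathbf{r_{r_{k-1}}},-\mathbf{v_{r_{k-1}}},-t_{r_{k-1}})$ stays put. It is precisely the hazard that $\mathcal{K_{MK}}(\mathbf{r_r}\rightarrow\mathbf{r_i},-\mathbf{v_r}\rightarrow-\mathbf{v_i},-t_r\rightarrow-t_i)$ requires at the incident state of the corresponding reversed link. The hazard deposited by the last link becomes part of the right-hand weight $f_{\varphi_0}(\mathbf{r_r},-\mathbf{v_r},-t_r)$.

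This is exactly the identity $f_{\varphi}(\mathbf{r_i},\mathbf{v_i},t_i)\,f_{\varphi_0}(\mathbf{r_r},-\mathbf{v_r},-t_r)=f_{\varphi_0}(\mathbf{r_i},\mathbf{v_i},t_i)\,f_{\varphi}(\mathbf{r_r},-\mathbf{v_r},-t_r)$ that the paper proves first, using only speed conservation and the evenness of $f_0$.

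Your proposed fix is also inconsistent with your Step 2. There, $f_{\varphi}(\mathbf{r_{i_k}},\mathbf{v_{i_k}},t_{i_k})$ is already consumed in the weight for link $k$. Pairing it again with the reversed kernel would count it twice and leave the deposited hazard as a surplus factor. Replace your final paragraph with this identity and the argument closes as in the paper.
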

\begin{proof}
    We first show that
    \begin{equation}
        f_{\varphi}(\mathbf{r_i},\mathbf{v_i},t_i)\,
        f_{\varphi_0}(\mathbf{r_r},-\mathbf{v_r},-t_r)
        =
        f_{\varphi_0}(\mathbf{r_i},\mathbf{v_i},t_i)\,
        f_{\varphi}(\mathbf{r_r},-\mathbf{v_r},-t_r),
    \end{equation}
    for all $(\mathbf{r_i}, \mathbf{v_i}, t_i, \mathbf{r_r}, \mathbf{v_r}, t_r)
        \in
        \mathbb{R}^3 \times \mathbb{R}^3 \times \mathbb{R}
        \times
        \mathbb{R}^3 \times \mathbb{R}^3 \times \mathbb{R}$. Expanding the left-hand side and using the facts that $\mathbf{v_i}=\mathbf{v_r}$ for any particle travelling on $\Psi$ and that by \cref{def:MaxwellBoltzmann}, $f_0(-\mathbf{v})=f_0(\mathbf{v})$ for all $\mathbf{v}\in\mathbb{R}^3$, we obtain
    \begin{equation}
        \begin{aligned}
        f_{\varphi}(\mathbf{r_i},\mathbf{v_i},t_i)\,
        f_{\varphi_0}(\mathbf{r_r},-\mathbf{v_r},-t_r)
        &=
        \delta(\mathbf{v_i}-\mathbf{v_r})\,
        f_{\varphi}(\mathbf{r_i},\mathbf{v_i},t_i)\,f_{\varphi}(\mathbf{r_r},-\mathbf{v_r},-t_r)\,
        f_0(-\mathbf{v_r})
        \\
        &=
        \delta(\mathbf{v_i}-\mathbf{v_r})\,
        f_{\varphi}(\mathbf{r_i},\mathbf{v_i},t_i)\,f_{\varphi}(\mathbf{r_r},-\mathbf{v_r},-t_r)\,
        f_0(\mathbf{v_i})
        \\
        &=
        f_{\varphi_0}(\mathbf{r_i},\mathbf{v_i},t_i)\,
        f_{\varphi}(\mathbf{r_r},-\mathbf{v_r},-t_r).
        \end{aligned}
    \end{equation}
    Now, as in \cref{lemma:global_reciprocity}, we substitute the multi-reflection kernel into the left-hand side of \cref{eq:reciprocity_general} and obtain
    \begin{equation}
        \begin{aligned}
        &\mathcal{K_{MK}}\!\left(
        \mathbf{r_i}\rightarrow\mathbf{r_r},
        \mathbf{v_i}\rightarrow\mathbf{v_r},
        t_i\rightarrow t_r
        \right)
         \mathcal{Q}( \mathbf{v_i} \, | \, \mathbf{n_G})
        f_{\varphi_0}(\mathbf{r_i}, \mathbf{v_i},t_i)\,
        \mathcal{M}(\mathbf{r_i},\mathbf{v_i},t_i)
        \\[6pt]
        &\quad=
        \sum_{n=1}^{\infty}
        \Bigg\{
        \mathcal{M}(\mathbf{r_i},\mathbf{v_i},t_i)
        \!\!\underset{\mathbf{v_{r_k}},\mathbf{v_{i_k}}}{\int\!\cdots\!\int}
        \!\!\underset{\mathbf{r_{r_k}},\mathbf{r_{i_k}}}{\int\!\cdots\!\int}
        \!\!\underset{t_{r_k},t_{i_k}}{\int\!\cdots\!\int}
        f_{\varphi_0}(\mathbf{r_i}, \mathbf{v_i},t_i) \, \mathcal{Q}( \mathbf{v_i} \, | \, \mathbf{n_G})
        \mathcal{K_K}\!\left(
        \mathbf{r_i}\rightarrow\mathbf{r_{r_1}},
        \mathbf{v_i}\rightarrow\mathbf{v_{r_1}},
        t_i\rightarrow t_{r_1}
        \right)
        \\
        &\qquad\cdot
        \prod_{k=2}^{n}
        \Bigg[
        \mathcal{S}\!\left(
        \mathbf{r_{i_k}},\mathbf{v_{i_k}},t_{i_k}
        \,\middle|\,
        \mathbf{r_{r_{k-1}}},\mathbf{v_{r_{k-1}}},t_{r_{k-1}}
        \right)
        \,f_{\varphi}(\mathbf{r_{i_k}}, \mathbf{v_{i_k}}, t_{i_k}) \mathcal{K_K}\!\left(
        \mathbf{r_{i_k}}\rightarrow\mathbf{r_{r_k}},
        \mathbf{v_{i_{k}}}\rightarrow\mathbf{v_{r_k}},
        t_{i_k}\rightarrow t_{r_k}
        \right)
        \Bigg]
        \\
        &\qquad\cdot
        \dif\mathbf{v_{r_1}}\cdots\dif\mathbf{v_{r_{n-1}}}\,
        \dif\mathbf{v_{i_2}}\cdots\dif\mathbf{v_{i_{n-1}}}\,
        \dif\mathbf{r_{r_1}}\cdots\dif\mathbf{r_{r_{n-1}}}\,
        \dif\mathbf{r_{i_2}}\cdots\dif\mathbf{r_{i_{n-1}}}\, \dif t_{r_1}\cdots\dif t_{r_{n-1}}\,
        \dif t_{i_2}\cdots\dif t_{i_{n-1}}\;
        \mathcal{S}(\mathbf{r_r},\mathbf{v_r},t_r)
        \Bigg\}.
        \end{aligned}
    \end{equation}
    By \cref{lemma:global_reciprocity}, we employ the pointwise reciprocity property onto the first single-reflection kernel $\mathcal{K_K}(\mathbf{r_{i}} \rightarrow \mathbf{r_{r_1}}, \mathbf{v_{i}} \rightarrow \mathbf{v_{r_1}}, t_i \rightarrow t_{r_1})$, while remembering that $\mathcal{M}(\mathbf{r_{i}}, \mathbf{v_{i}}, t_i) = \mathcal{S}(\mathbf{r_{i}}, -\mathbf{v_{i}}, t_i)$ and $\mathcal{Q}(\mathbf{v}\, | \, \mathbf{n}) = \mathcal{Q}( -\mathbf{v} \, | \, \mathbf{n})$ by \cref{def:shadowing_single_scale,def:visible_flux}, and get
    \begin{equation}
        \begin{aligned}
        &\mathcal{K_{MK}}\!\left(
        \mathbf{r_i}\rightarrow\mathbf{r_r},
        \mathbf{v_i}\rightarrow\mathbf{v_r},
        t_i\rightarrow t_r
        \right)
         \mathcal{Q}( \mathbf{v_i} \, | \, \mathbf{n_G})
        f_{\varphi_0}(\mathbf{r_i}, \mathbf{v_i},t_i)\,
        \mathcal{M}(\mathbf{r_i},\mathbf{v_i},t_i) =
        \\[6pt]
        &\quad=
        \sum_{n=1}^{\infty}
        \Bigg\{
        \mathcal{S}(\mathbf{r_i},-\mathbf{v_i},-t_i)
        \!\!\underset{\mathbf{v_{r_k}},\mathbf{v_{i_k}}}{\int\!\cdots\!\int}
        \!\!\underset{\mathbf{r_{r_k}},\mathbf{r_{i_k}}}{\int\!\cdots\!\int}
        \!\!\underset{t_{r_k},t_{i_k}}{\int\!\cdots\!\int} \mathcal{K_K}\!\left(
        \mathbf{r_{r_1}}\rightarrow\mathbf{r_i},
        -\mathbf{v_{r_1}}\rightarrow-\mathbf{v_i},
        -t_{r_1}\rightarrow -t_i
        \right)
        f_{\varphi_0}(\mathbf{r_{r_1}},-\mathbf{v_{r_1}},-t_{r_1}) \mathcal{Q}( \mathbf{v_{r_1}} \, | \, \mathbf{n_G})
        \\
        &\qquad\cdot
        \prod_{k=2}^{n}
        \Bigg[
        \mathcal{S}\!\left(
        \mathbf{r_{i_k}}, \mathbf{v_{i_k}},t_{i_k}
        \,\middle|\,
        \mathbf{r_{r_{k-1}}},\mathbf{v_{r_{k-1}}},t_{r_{k-1}}
        \right)
        \,f_{\varphi}(\mathbf{r_{i_k}}, \mathbf{v_{i_k}}, t_{i_k}) \mathcal{K_K}\!\left(
        \mathbf{r_{i_k}}\rightarrow\mathbf{r_{r_k}},
        \mathbf{v_{i_{k}}}\rightarrow\mathbf{v_{r_k}},
        t_{i_k}\rightarrow t_{r_k}
        \right)
        \Bigg]
        \\
        &\qquad\cdot
        \dif\mathbf{v_{r_1}}\cdots\dif\mathbf{v_{r_{n-1}}}\,
        \dif\mathbf{v_{i_2}}\cdots\dif\mathbf{v_{i_{n-1}}}\,
        \dif\mathbf{r_{r_1}}\cdots\dif\mathbf{r_{r_{n-1}}}\,
        \dif\mathbf{r_{i_2}}\cdots\dif\mathbf{r_{i_{n-1}}}\, \dif t_{r_1}\cdots\dif t_{r_{n-1}}\,
        \dif t_{i_2}\cdots\dif t_{i_{n-1}}\;
        \mathcal{S}(\mathbf{r_r},\mathbf{v_r},t_r)
        \Bigg\}.
        \end{aligned}
    \end{equation}
    We now use the previously derived property, $f_{\varphi}(\mathbf{r_{i_2}},\mathbf{v_{i_2}},t_{i_2}) \, f_{\varphi_0}(\mathbf{r_{r_1}},-\mathbf{v_{r_1}},-t_{r_1}) = f_{\varphi_0}(\mathbf{r_{i_2}},\mathbf{v_{i_2}},t_{i_2}) \, f_{\varphi}(\mathbf{r_{r_1}},-\mathbf{v_{r_1}},-t_{r_1})$ to obtain
    \begin{equation}
        \begin{aligned}
        &\mathcal{K_{MK}}\!\left(
        \mathbf{r_i}\rightarrow\mathbf{r_r},
        \mathbf{v_i}\rightarrow\mathbf{v_r},
        t_i\rightarrow t_r
        \right)
         \mathcal{Q}( \mathbf{v_i} \, | \, \mathbf{n_G})
        f_{\varphi_0}(\mathbf{r_i}, \mathbf{v_i},t_i)\,
        \mathcal{M}(\mathbf{r_i},\mathbf{v_i},t_i) =
        \\[6pt]
        &\quad=
        \sum_{n=1}^{\infty}
        \Bigg\{
        \mathcal{S}(\mathbf{r_i},-\mathbf{v_i},-t_i)
        \!\!\underset{\mathbf{v_{r_k}},\mathbf{v_{i_k}}}{\int\!\cdots\!\int}
        \!\!\underset{\mathbf{r_{r_k}},\mathbf{r_{i_k}}}{\int\!\cdots\!\int}
        \!\!\underset{t_{r_k},t_{i_k}}{\int\!\cdots\!\int} \mathcal{K_K}\!\left(
        \mathbf{r_{r_1}}\rightarrow\mathbf{r_i},
        -\mathbf{v_{r_1}}\rightarrow-\mathbf{v_i},
        -t_{r_1}\rightarrow -t_i
        \right) \, \mathcal{S}\!\left(
        \mathbf{r_{r_1}},-\mathbf{v_{r_1}},-t_{r_1}
        \,\middle|\,
        \mathbf{r_{i_2}},-\mathbf{v_{i_2}},-t_{i_2}
        \right)
        \\
        &\qquad\cdot
        \,f_{\varphi}(\mathbf{r_{r_1}},-\mathbf{v_{r_1}},-t_{r_1}) \, f_{\varphi_0}(\mathbf{r_{i_2}},\mathbf{v_{i_2}},t_{i_2}) \,  \mathcal{K_K}\!\left(
        \mathbf{r_{i_2}}\rightarrow\mathbf{r_{r_2}},
        \mathbf{v_{i_2}}\rightarrow\mathbf{v_{r_2}},
        t_{i_2}\rightarrow t_{r_2}
        \right)\,\mathcal{Q}( \mathbf{v_{r_2}} \, | \, \mathbf{n_G})
        \\
        &\qquad\cdot
        \prod_{k=3}^{n}
        \Bigg[
        \mathcal{S}\!\left(
        \mathbf{r_{i_k}},\mathbf{v_{i_k}},t_{i_k}
        \,\middle|\,
        \mathbf{r_{r_{k-1}}},\mathbf{v_{r_{k-1}}},t_{r_{k-1}}
        \right)
        \,f_{\varphi}(\mathbf{r_{i_k}},\mathbf{v_{i_k}},t_{i_k}) \mathcal{K_K}\!\left(
        \mathbf{r_{i_k}}\rightarrow\mathbf{r_{r_k}},
        \mathbf{v_{i_{k}}}\rightarrow\mathbf{v_{r_k}},
        t_{i_k}\rightarrow t_{r_k}
        \right)
        \Bigg]
        \\
        &\qquad\cdot
        \dif\mathbf{v_{r_1}}\cdots\dif\mathbf{v_{r_{n-1}}}\,
        \dif\mathbf{v_{i_2}}\cdots\dif\mathbf{v_{i_{n-1}}}\,
        \dif\mathbf{r_{r_1}}\cdots\dif\mathbf{r_{r_{n-1}}}\,
        \dif\mathbf{r_{i_2}}\cdots\dif\mathbf{r_{i_{n-1}}}\, \dif t_{r_1}\cdots\dif t_{r_{n-1}}\,
        \dif t_{i_2}\cdots\dif t_{i_{n-1}}\;
        \mathcal{S}(\mathbf{r_r},\mathbf{v_r},t_r)
        \Bigg\}.
        \end{aligned}
    \end{equation}
Next,  we employ pointwise reciprocity on the second single-reflection kernel $\mathcal{K_K}(\mathbf{r_{i_2}} \rightarrow \mathbf{r_{r_2}}, \mathbf{v_{i_2}} \rightarrow \mathbf{v_{r_2}}, t_{i_2} \rightarrow t_{r_2})$, which yields
    \begin{equation}
        \begin{aligned}
        &\mathcal{K_{MK}}\!\left(
        \mathbf{r_i}\rightarrow\mathbf{r_r},
        \mathbf{v_i}\rightarrow\mathbf{v_r},
        t_i\rightarrow t_r
        \right)
         \mathcal{Q}( \mathbf{v_i} \, | \, \mathbf{n_G})
        f_{\varphi_0}(\mathbf{r_i}, \mathbf{v_i},t_i)\,
        \mathcal{M}(\mathbf{r_i},\mathbf{v_i},t_i) =
        \\[6pt]
        &\quad=
        \sum_{n=1}^{\infty}
        \Bigg\{
        \mathcal{S}(\mathbf{r_i},-\mathbf{v_i},-t_i)
        \!\!\underset{\mathbf{v_{r_k}},\mathbf{v_{i_k}}}{\int\!\cdots\!\int}
        \!\!\underset{\mathbf{r_{r_k}},\mathbf{r_{i_k}}}{\int\!\cdots\!\int}
        \!\!\underset{t_{r_k},t_{i_k}}{\int\!\cdots\!\int} \mathcal{K_K}\!\left(
        \mathbf{r_{r_1}}\rightarrow\mathbf{r_i},
        -\mathbf{v_{r_1}}\rightarrow-\mathbf{v_i},
        -t_{r_1}\rightarrow -t_i
        \right) \mathcal{S}\!\left(
        \mathbf{r_{r_1}},-\mathbf{v_{r_1}},-t_{r_1}
        \,\middle|\,
        \mathbf{r_{i_2}},-\mathbf{v_{i_2}},-t_{i_2}
        \right)
        \\
        &\qquad\cdot
        \,f_{\varphi}(\mathbf{r_{r_1}},-\mathbf{v_{r_1}},-t_{r_1}) \mathcal{K_K}\!\left(
        \mathbf{r_{r_2}}\rightarrow\mathbf{r_{i_2}},
        -\mathbf{v_{r_2}}\rightarrow-\mathbf{v_{i_2}},
        -t_{r_2}\rightarrow -t_{i_2}
        \right)
        f_{\varphi_0}(\mathbf{r_{r_2}},-\mathbf{v_{r_2}},t_{r_2}) \,\mathcal{Q}( \mathbf{v_{r_2}} \, | \, \mathbf{n_G})
        \\
        &\qquad\cdot
        \prod_{k=3}^{n}
        \Bigg[
        \mathcal{S}\!\left(
        \mathbf{r_{i_k}},\mathbf{v_{i_k}},t_{i_k}
        \,\middle|\,
        \mathbf{r_{r_{k-1}}},\mathbf{v_{r_{k-1}}},t_{r_{k-1}}
        \right)
        \,f_{\varphi}(\mathbf{r_{i_k}},\mathbf{v_{i_k}},t_{i_k}) \mathcal{K_K}\!\left(
        \mathbf{r_{i_k}}\rightarrow\mathbf{r_{r_k}},
        \mathbf{v_{i_{k}}}\rightarrow\mathbf{v_{r_k}},
        t_{i_k}\rightarrow t_{r_k}
        \right)
        \Bigg]
        \\
        &\qquad\cdot
        \dif\mathbf{v_{r_1}}\cdots\dif\mathbf{v_{r_{n-1}}}\,
        \dif\mathbf{v_{i_2}}\cdots\dif\mathbf{v_{i_{n-1}}}\,
        \dif\mathbf{r_{r_1}}\cdots\dif\mathbf{r_{r_{n-1}}}\,
        \dif\mathbf{r_{i_2}}\cdots\dif\mathbf{r_{i_{n-1}}}\, \dif t_{r_1}\cdots\dif t_{r_{n-1}}\,
        \dif t_{i_2}\cdots\dif t_{i_{n-1}}\;
        \mathcal{S}(\mathbf{r_r},\mathbf{v_r},t_r)
        \Bigg\},
        \end{aligned}
    \end{equation}
    where we have used the properties: $ \mathcal{S}\!\left(
    \mathbf{r_{i_2}}, \mathbf{v_{i_2}}, t_{i_2}
    \mid
    \mathbf{r_{r_1}}, \mathbf{v_{r_1}}, t_{r_1}
    \right)
    =
    \mathcal{S}\!\left(
    \mathbf{r_{r_1}}, -\mathbf{v_{r_1}}, -t_{r_1}
    \mid
    \mathbf{r_{i_2}}, -\mathbf{v_{i_2}}, -t_{i_2}
    \right) $, $\mathcal{Q}( -\mathbf{v_{r_1}} \, | \, \mathbf{n_G}) = \mathcal{Q}( \mathbf{v_{r_1}} \, | \, \mathbf{n_G})$, and $\mathbf{v_{r_1}} = \mathbf{v_{i_2}}$. We now repeat the previous step for all remaining terms in the kernel products, and use the property that $\mathcal{S}(\mathbf{r_{r}}, \mathbf{v_{r}}, t_r) = \mathcal{M}(\mathbf{r_{r}}, -\mathbf{v_{r}}, -t_r)$ to obtain
    \begin{equation}
        \begin{aligned}
        &\mathcal{K_{MK}}\!\left(
        \mathbf{r_i}\rightarrow\mathbf{r_r},
        \mathbf{v_i}\rightarrow\mathbf{v_r},
        t_i\rightarrow t_r
        \right)
         \mathcal{Q}( \mathbf{v_i} \, | \, \mathbf{n_G})
        f_{\varphi_0}(\mathbf{r_i}, \mathbf{v_i},t_i)\,
        \mathcal{M}(\mathbf{r_i},\mathbf{v_i},t_i) =
        \\[6pt]
        &\quad=
        \sum_{n=1}^{\infty}
        \Bigg\{
        \mathcal{S}(\mathbf{r_i},-\mathbf{v_i},-t_i)
        \!\!\underset{\mathbf{v_{r_k}},\mathbf{v_{i_k}}}{\int\!\cdots\!\int}
        \!\!\underset{\mathbf{r_{r_k}},\mathbf{r_{i_k}}}{\int\!\cdots\!\int}
        \!\!\underset{t_{r_k},t_{i_k}}{\int\!\cdots\!\int} \prod_{k=1}^{n-1}
        \Bigg[
        \mathcal{S}\!\left(
        \mathbf{r_{r_k}},-\mathbf{v_{r_k}},-t_{r_k}
        \,\middle|\,
        \mathbf{r_{i_{k+1}}},-\mathbf{v_{i_{k+1}}},-t_{i_{k+1}}
        \right)
        \\
        &\qquad\cdot
        f_{\varphi}(\mathbf{r_{r_k}},-\mathbf{v_{r_k}},-t_{r_k}) \mathcal{K_K}\!\left(
        \mathbf{r_{r_k}}\rightarrow\mathbf{r_{i_k}},
        -\mathbf{v_{r_{k}}}\rightarrow-\mathbf{v_{i_k}},
        -t_{r_k}\rightarrow -t_{i_k}
        \right)
        \Bigg]
        \\
        &\qquad\cdot
        \mathcal{K_K}\!\left(
        \mathbf{r_r}\rightarrow\mathbf{r_{i_n}},
        -\mathbf{v_r}\rightarrow-\mathbf{v_{r_{n-1}}},
        -t_r\rightarrow -t_{i_n}
        \right)
        f_{\varphi_0}(\mathbf{r_r},-\mathbf{v_r},-t_r) \, \mathcal{Q}( -\mathbf{v_{r}} \, | \, \mathbf{n_G})
        \\
        &\qquad\cdot
        \dif\mathbf{v_{r_1}}\cdots\dif\mathbf{v_{r_{n-1}}}\,
        \dif\mathbf{v_{i_2}}\cdots\dif\mathbf{v_{i_{n-1}}}\,
        \dif\mathbf{r_{r_1}}\cdots\dif\mathbf{r_{r_{n-1}}}\,
        \dif\mathbf{r_{i_2}}\cdots\dif\mathbf{r_{i_{n-1}}}\, \dif t_{r_1}\cdots\dif t_{r_{n-1}}\,
        \dif t_{i_2}\cdots\dif t_{i_{n-1}}\;
        \mathcal{M}(\mathbf{r_r},-\mathbf{v_r},-t_r)
        \Bigg\}.
        \end{aligned}
    \end{equation}
    Finally, rearranging terms for each term in the sum and taking the flux term $\mathcal{Q}( -\mathbf{v_{r}} \, | \, \mathbf{n_G})$, and the reflected state PDF $f_{\varphi}(\mathbf{r_r},-\mathbf{v_r},t_r)$ out of the summation, we get
    \begin{equation}
        \begin{aligned}
        &\mathcal{K_{MK}}\!\left(
        \mathbf{r_i}\rightarrow\mathbf{r_r},
        \mathbf{v_i}\rightarrow\mathbf{v_r},
        t_i\rightarrow t_r
        \right)
         \mathcal{Q}( \mathbf{v_i} \, | \, \mathbf{n_G})
        f_{\varphi}(\mathbf{r_i}, \mathbf{v_i},t_i)\,
        \mathcal{M}(\mathbf{r_i},\mathbf{v_i},t_i) =
        \\[6pt]
        &\quad=
        \sum_{n=1}^{\infty}
        \Bigg\{
        \mathcal{M}(\mathbf{r_r},-\mathbf{v_r},-t_r)
        \!\!\underset{\mathbf{v_{r_k}},\mathbf{v_{i_k}}}{\int\!\cdots\!\int}
        \!\!\underset{\mathbf{r_{r_k}},\mathbf{r_{i_k}}}{\int\!\cdots\!\int}
        \!\!\underset{t_{r_k},t_{i_k}}{\int\!\cdots\!\int} \mathcal{K_K}\!\left(
        \mathbf{r_r}\rightarrow\mathbf{r_{i_n}},
        -\mathbf{v_r}\rightarrow-\mathbf{v_{r_{n-1}}},
        -t_r\rightarrow -t_{i_n}
        \right)
        \\
        &\qquad\cdot
        \prod_{k=n-1}^{1}
        \Bigg[
        \mathcal{S}\!\left(
        \mathbf{r_{r_k}},-\mathbf{v_{r_k}},-t_{r_k}
        \,\middle|\,
        \mathbf{r_{i_{k+1}}},-\mathbf{v_{i_{k+1}}},-t_{i_{k+1}}
        \right)
        f_{\varphi}(\mathbf{r_{r_k}},-\mathbf{v_{r_k}},-t_{r_k}) 
        \\
        &\qquad\cdot
        \mathcal{K_K}\!\left(
        \mathbf{r_{r_k}}\rightarrow\mathbf{r_{i_k}},
        -\mathbf{v_{r_{k}}}\rightarrow-\mathbf{v_{i_k}},
        -t_{r_k}\rightarrow -t_{i_k}
        \right)
        \Bigg]
        \dif\mathbf{v_{r_1}}\cdots\dif\mathbf{v_{r_{n-1}}}\,
        \dif\mathbf{v_{i_2}}\cdots\dif\mathbf{v_{i_{n-1}}}\,
        \dif\mathbf{r_{r_1}}\cdots\dif\mathbf{r_{r_{n-1}}}\,
        \dif\mathbf{r_{i_2}}\cdots\dif\mathbf{r_{i_{n-1}}}\,
        \\[6pt]
        &\qquad\cdot
        \dif t_{r_1}\cdots\dif t_{r_{n-1}}\,
        \dif t_{i_2}\cdots\dif t_{i_{n-1}}\;
        \mathcal{S}(\mathbf{r_i},-\mathbf{v_i},-t_i)
        \Bigg\} \,
        f_{\varphi_0}(\mathbf{r_r},-\mathbf{v_r},-t_r) \, \mathcal{Q}( -\mathbf{v_{r}} \, | \, \mathbf{n_G})
        \\[6pt]
        &\quad=
        \mathcal{K_{MK}}\!\left(
        \mathbf{r_r}\rightarrow\mathbf{r_i},
        -\mathbf{v_r}\rightarrow-\mathbf{v_i},
        -t_r\rightarrow -t_i
        \right)
       \mathcal{Q}( -\mathbf{v_{r}} \, | \, \mathbf{n_G})\,
        f_{\varphi_0}(\mathbf{r_r},-\mathbf{v_r},-t_r)\,
        \mathcal{M}(\mathbf{r_r},-\mathbf{v_r},-t_r).
        \end{aligned}
    \end{equation}
\end{proof}

\begin{lemma} \label{lemma:normalisation_multi_refl}
     Let $\mathcal{K_{MK}} : \mathbb{R}^3 \times \mathbb{R}^3 \times \mathbb{R}
\times \mathbb{R}^3 \times \mathbb{R}^3 \times \mathbb{R}
\rightarrow [0, \infty)$, $\mathcal{K_{MK}} = \mathcal{K_{MK}}\left(\mathbf{r_i} \rightarrow \mathbf{r_r}, \mathbf{v_i} \rightarrow \mathbf{v_r}, t_i \rightarrow t_r \right)$ be a Lebesgue-integrable function, defined in the global reference frame from \cref{def:global_frame} and associated with a surface $\Psi$ in the sense of \cref{def:surface_general}, whose expression is
\begin{equation}
        \begin{aligned}
        \mathcal{K_{MK}}(
        \mathbf{r_i} \rightarrow \mathbf{r_r},
        \mathbf{v_i} \rightarrow \mathbf{v_r},
        t_i \rightarrow t_r)
        &=
        \sum_{n=1}^{\infty}
        \Bigg\{
        \underset{\mathbf{v_{r_k}},\mathbf{v_{i_k}}}{\int\cdots\int}
        \underset{\mathbf{r_{r_k}},\mathbf{r_{i_k}}}{\int\cdots\int}
        \underset{t_{r_k},t_{i_k}}{\int\cdots\int}
        \mathcal{K_K}(
        \mathbf{r_i}\rightarrow\mathbf{r_{r_1}},
        \mathbf{v_i}\rightarrow\mathbf{v_{r_1}},
        t_i\rightarrow t_{r_1})
        \\
        &\qquad\cdot
        \prod_{k=2}^{n}
        \Bigg[
        \mathcal{S}(
        \mathbf{r_{i_k}}, \mathbf{v_{i_k}}, t_{i_k}
        \mid
        \mathbf{r_{r_{k-1}}},\mathbf{v_{r_{k-1}}},t_{r_{k-1}})
        \,
        \\
        &\qquad\cdot \, f_{\varphi}(\mathbf{r_{i_k}}, \mathbf{v_{i_k}}, t_{i_k}) \,\mathcal{K_K}(
        \mathbf{r_{i_k}}\rightarrow\mathbf{r_{r_k}},
        \mathbf{v_{i_{k}}}\rightarrow\mathbf{v_{r_k}},
        t_{i_k}\rightarrow t_{r_k})
        \Bigg]
        \\
        &\qquad\cdot
        \dif\mathbf{v_{r_1}}\cdots\dif\mathbf{v_{r_{n-1}}}
        \dif\mathbf{v_{i_2}}\cdots\dif\mathbf{v_{i_{n-1}}}
        \dif\mathbf{r_{r_1}}\cdots\dif\mathbf{r_{r_{n-1}}}
        \dif\mathbf{r_{i_2}}\cdots\dif\mathbf{r_{i_{n-1}}}
        \\
        &\qquad\cdot
        \dif t_{r_1}\cdots\dif t_{r_{n-1}}
        \dif t_{i_2}\cdots\dif t_{i_{n-1}}
        \;
        \mathcal{S}(\mathbf{r_r},\mathbf{v_r},t_r)
        \Bigg\},
    \end{aligned}
\end{equation}
as given by \cref{def:multi_reflection_scattering_kernel}, where $\mathcal{K_K}: \mathbb{R}^3 \times \mathbb{R}^3 \times \mathbb{R}
\times \mathbb{R}^3 \times \mathbb{R}^3 \times \mathbb{R} \rightarrow\mathbb{R}$, written as $\mathcal{K_K}= \mathcal{K_K}(\mathbf{r_{i}} \rightarrow \mathbf{r_{r}}, \mathbf{v_{i}} \rightarrow \mathbf{v_{r}}, t_i \rightarrow t_r)$, is a Lebesgue-integrable single-reflection global scattering kernel defined in the same frame. Here, $\mathbf{r_{i}}$, $\mathbf{r_{r}}$, $\mathbf{v_{i}}$, and $\mathbf{v_{r}}$ are $3\times 1$ real vectors denoting the incident and reflected particle positions and velocities, respectively, $t_i$ and $t_r$ represent the corresponding entry and exit times, respectively. $\mathcal{S}(\mathbf{r_r},\mathbf{v_r},t_r)$ and $\mathcal{S}(\mathbf{r_i},\mathbf{v_i}, t_i \, | \, \mathbf{r_r},\mathbf{v_r},t_r)$ denote the one and two-point shadowing functions, respectively, defined in \cref{def:shadowing_single_scale} and associated with the
surface $\Psi$. Then $\mathcal{K_{MK}}$ satisfies normalisation in the sense of \cref{def:normalisation} if the local kernel $\mathcal{K_L}$ associated with $\mathcal{K_K}$ satisfies normalisation, and the two-point shadowing function $\mathcal{S}$ satisfies
\begin{equation} \label{eq:shadowing_condition_2}
    \int_{0}^{\infty}\mathcal{S}\!\left(
    \mathbf{r_i},\mathbf{v_i}, t_i
    \mid
    \mathbf{r_r}, \mathbf{v_r}, t_r
    \right) f_{\varphi}(\mathbf{r_i},\mathbf{v_i}, t_i ) \, \dif \tau
    = 1 - \mathcal{S}(\mathbf{r_r}, \mathbf{v_r}, t_r),
    \quad
    \forall\,(\mathbf{r_r},\mathbf{v_r},t_r)
    \in
    \mathbb{R}^3 \times \mathbb{R}^3 \times \mathbb{R}, 
\end{equation}
where $\tau = t_i - t_r$. Furthermore, $\exists \, \varepsilon \in (0,1]$ such that, for every $(\mathbf{r_i}, \mathbf{v_i}, t_i) \in \mathbb{R}^3 \times \mathbb{R}^3 \times \mathbb{R}$ and $\mathcal{K_L}\left(\mathbf{r_{i_L}} \rightarrow \mathbf{r_{r_L}},\mathbf{v_{i_L}} \rightarrow \mathbf{v_{r_L}},t_i \rightarrow t_r\right) \in \mathcal{T}$,
\begin{multline} \label{eq:shadowing_single_scale_condition_3}
    \int_{\mathbb{R}^3}\int_{\mathbb{R}^3}\int_{\mathbb{R}} \int_{\mathbb{S}^2}
        \mathcal{K_L}(\mathbf{r_{i_L}} \rightarrow \mathbf{r_{r_L}}, \mathbf{v_{i_L}} \rightarrow \mathbf{v_{r_L}}, t_i \rightarrow t_r) 
         \frac{\mathcal{Q}(\mathbf{v_{i_L}}\, | \,  \mathbf{n_L})}
    {\mathcal{Q}( \mathbf{v_{i}} \, | \, \mathbf{n_G})} \cdot \\ \cdot \mathcal{S}_L\left(\mathbf{r_{r_L}}, \mathbf{v_{r_L}}, t_r \right) \,
        p_n(\mathbf{n_L} \, | \, \mathbf{n_G}, \mathbf{v_i})
        \, \dif \mathbf{n_L} \, \mathcal{S}\left(\mathbf{r_r}, \mathbf{v_r}, t_r\right) \, \dif t_r \dif \mathbf{v_r} \dif \mathbf{r_r} > \epsilon,
\end{multline}
where $\mathcal{S}_L\left(\mathbf{r_{r_L}}, \mathbf{v_{r_L}}, t_r \right)$ represents the one-point shadowing function associated with the local frame $L$. If $\Psi$ is locally smooth in the sense of \cref{def:local_smoothness}, then  $\mathcal{S}_L\left(\mathbf{r_{r_L}}, \mathbf{v_{r_L}}, t_r \right) = 1, \, \forall \, \left(\mathbf{r_{r_L}}, \mathbf{v_{r_L}}, t_r \right) \in \mathbb{R}^3 \times \mathbb{R}^3 \times \mathbb{R}$.
\end{lemma}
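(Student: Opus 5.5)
The plan is to integrate $\mathcal{K_{MK}}$ over the reflected variables and reduce everything to a geometric series. Write $\mathcal{K_{MK}} = \sum_{n\geq 1} \mathcal{K}_n$, where $\mathcal{K}_n$ is the $n$-reflection term. All integrands are non-negative: $\mathcal{K_K}$ is non-negative by \cref{lemma:nonnegativity}, and $\mathcal{S}$ and $f_{\varphi}$ are non-negative by construction. Tonelli's theorem therefore lets us interchange the sum over $n$ with the integrals over $(\mathbf{r_r},\mathbf{v_r},t_r)$ and with all intermediate integrals. The goal is to show that the escape mass after exactly $n$ reflections is $E_n = \int \mathcal{K}_n\,\dif t_r\dif\mathbf{v_r}\dif\mathbf{r_r}$ and that $\sum_n E_n = 1$.

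First, define for a given incident state $\pmb{\varphi}_i=(\mathbf{r_i},\mathbf{v_i},t_i)$ the escape fraction of a single reflection,
\begin{equation*}
e(\pmb{\varphi}_i) = \int \mathcal{K_K}(\mathbf{r_i}\rightarrow\mathbf{r_r},\mathbf{v_i}\rightarrow\mathbf{v_r},t_i\rightarrow t_r)\,\mathcal{S}(\mathbf{r_r},\mathbf{v_r},t_r)\,\dif t_r\dif\mathbf{v_r}\dif\mathbf{r_r}.
\end{equation*}
Next, compute the re-collision fraction. For a reflected state $\pmb{\varphi}_r$, integrate $\mathcal{S}(\pmb{\varphi}_{i}\mid\pmb{\varphi}_r)f_{\varphi}(\pmb{\varphi}_i)$ over the next incident state. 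Along a free trajectory $\mathbf{v_{i_k}}=\mathbf{v_{r_{k-1}}}$, and $\mathbf{r_{i_k}}$ is fixed by the flight time $\tau$, so this integral reduces to the $\tau$-integral in \cref{eq:shadowing_condition_2} and equals $1-\mathcal{S}(\pmb{\varphi}_r)$. Integrating $\mathcal{K_K}$ against this quantity and using \cref{lemma:global_normalisation} (which needs $\mathcal{K_L}$ normalised and $\Psi$ smooth in $L$) gives the re-collision fraction $1-e(\pmb{\varphi}_i)$.

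Then, integrating from the last reflection backwards, each intermediate block $\mathcal{S}\,f_{\varphi}\,\mathcal{K_K}$ acts as a sub-Markov transition from reflected states to incident states. The escape mass after exactly $n$ reflections is therefore $E_n = \mathbb{E}\bigl[\prod_{k<n}(1-e(\pmb{\varphi}_{i_k}))\,e(\pmb{\varphi}_{i_n})\bigr]$, taken along the induced chain of incident states. This yields the telescoping identity
\begin{equation*}
\sum_{n=1}^N E_n = 1 - P_N, \qquad P_N = \mathbb{E}\Bigl[\prod_{k\leq N}\bigl(1-e(\pmb{\varphi}_{i_k})\bigr)\Bigr],
\end{equation*}
where $P_N$ is the probability that the particle is still trapped after $N$ reflections. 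The identity follows by induction on $N$, using at each step the split of total mass into escape and re-collision from the previous paragraph.

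The main obstacle is showing $P_N\to 0$, which is where \cref{eq:shadowing_single_scale_condition_3} enters. Its left-hand side is exactly $e(\pmb{\varphi}_i)$, written with the local one-point shadowing inserted, and $\mathcal{S}_L=1$ under local smoothness. The condition therefore gives $e\geq\varepsilon$ uniformly in the incident state and for every admissible $\mathcal{K_L}$. Hence $1-e\leq 1-\varepsilon<1$ and $P_N\leq(1-\varepsilon)^N\to 0$, so $\sum_n E_n=1$; the geometric bound also justifies the interchange of limits. The delicate points are the change of variables from $(\mathbf{r_{i_k}},t_{i_k})$ to $\tau$ along straight trajectories, which identifies the intermediate integral with \cref{eq:shadowing_condition_2}, and the uniformity of $\varepsilon$. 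I would state both explicitly as the hypotheses being used.
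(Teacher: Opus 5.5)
Your proposal is correct and follows essentially the same route as the paper's proof. Your $E_n$ and $P_N$ are the paper's $\mathcal{N_K}_n$ and $\mathcal{R_K}_N$, and your telescoping identity $\sum_{n\le N}E_n = 1-P_N$ is the paper's induction via $\mathcal{R_K}_n+\mathcal{N_K}_n=\mathcal{R_K}_{n-1}$ with $\mathcal{O}=1-\mathcal{S}$. Your bound $P_N\le(1-\varepsilon)^N$, obtained from \cref{eq:shadowing_single_scale_condition_3} with $\mathcal{S}_L=1$, is exactly how the paper shows the remainder vanishes. The differences are presentational only: you phrase the argument as a sub-Markov chain, justify the interchanges explicitly with Tonelli, and state that normalisation of $\mathcal{K_K}$ comes from \cref{lemma:global_normalisation}, which the paper invokes only implicitly.
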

\begin{proof}
    For simplicity, we first introduce several new notations. We denote every reflected particle state after $k$ reflections by $\pmb{\varphi_{r_k}} = (\mathbf{r_{r_k}}, \mathbf{v_{r_k}}, t_{r_k})$, and every incident state by
    $\pmb{\varphi_{i_k}} = (\mathbf{r_{i_k}}, \mathbf{v_{i_k}}, t_{i_k})$.
    In what follows, for each fixed reflected state $\pmb{\varphi_r}$, the quantities $\mathbf{r_i}$ and $\mathbf{v_i}$ appearing in
    $\mathcal{S}(\mathbf{r_i},\mathbf{v_i},t_i \mid \pmb{\varphi_r})$ are understood to be those associated with the trajectory connecting $\pmb{\varphi_r}$ to the incident state at time $t_i$. We further define
    \begin{equation}
        \mathcal{O}(\pmb{\varphi_r})
        =
        \int_{0}^{\infty}
        \mathcal{S}\!\left(
        \mathbf{r_i},\mathbf{v_i},t_i
        \mid
        \pmb{\varphi_r}
        \right)
        f_{\varphi}(\mathbf{r_i},\mathbf{v_i},t_i)
        \, \dif \tau,
    \end{equation}
    so that, by \cref{eq:shadowing_condition_2},
    \begin{equation}
        \mathcal{O}(\pmb{\varphi_r}) + \mathcal{S}(\pmb{\varphi_r}) = 1,
        \qquad
        \forall\, \pmb{\varphi_r} \in \mathbb{R}^3 \times \mathbb{R}^3 \times \mathbb{R}.
    \end{equation}
    Then, we can rewrite \cref{eq:multi_reflection_kernel} as
    \begin{equation}
    \begin{aligned}
        \mathcal{K_{MK}}(
        \mathbf{r_i} \rightarrow \mathbf{r_r},
        \mathbf{v_i} \rightarrow \mathbf{v_r},
        t_i \rightarrow t_r)
        &=
        \sum_{n=1}^{\infty}
        \mathcal{K_{K}}_n(
        \mathbf{r_i} \rightarrow \mathbf{r_r},
        \mathbf{v_i} \rightarrow \mathbf{v_r},
        t_i \rightarrow t_r),
        \text{ with}
        \\
        \mathcal{K_{K}}_n(
        \mathbf{r_i} \rightarrow \mathbf{r_r},
        \mathbf{v_i} \rightarrow \mathbf{v_r},
        t_i \rightarrow t_r)
        &=
        \int_{\pmb{\varphi_{r_1}}}
        \mathcal{K_K}(\pmb{\varphi_i} \rightarrow \pmb{\varphi_{r_1}})
        \int_{\pmb{\varphi_{i_2}}}
        \mathcal{S}(\pmb{\varphi_{i_2}} \mid \pmb{\varphi_{r_1}})
        f_{\varphi}(\pmb{\varphi_{i_2}})
        \int_{\pmb{\varphi_{r_2}}}
        \mathcal{K_K}(\pmb{\varphi_{i_2}} \rightarrow \pmb{\varphi_{r_2}})
        \cdots
        \\
        &\qquad \cdot
        \int_{\pmb{\varphi_{i_n}}}
        \mathcal{S}(\pmb{\varphi_{i_n}} \mid \pmb{\varphi_{r_{n-1}}})
        f_{\varphi}(\pmb{\varphi_{i_n}})
        \mathcal{K_K}(\pmb{\varphi_{i_n}} \rightarrow \pmb{\varphi_r})
        \mathcal{S}(\pmb{\varphi_r})
        \, \dif \pmb{\varphi_{i_{n}}}
        \dif \pmb{\varphi_{r_{n-1}}}
        \dif \pmb{\varphi_{i_{n-1}}}
        \cdots
        \dif \pmb{\varphi_{i_{2}}}
        \dif \pmb{\varphi_{r_1}}.
    \end{aligned}
    \end{equation}
    Now, minding that the trajectory of a gas particle travelling on $\Psi$ is completely determined by its initial state and the time passed, i.e., $\pmb{\varphi_i} = \left(\mathbf{r_i}, \mathbf{v_i}, t_i\right) = \left( \mathbf{r_r} + \mathbf{v_r} \, \tau , \mathbf{v_r}, t_r + \tau\right)$, where $\tau = t_i - t_r$, we may switch variables from the incident states $\pmb{\varphi_i}$ to the times $\tau$, and get
        \begin{equation}
    \begin{aligned}
        \mathcal{K_{MK}}(
        \mathbf{r_i} \rightarrow \mathbf{r_r},
        \mathbf{v_i} \rightarrow \mathbf{v_r},
        t_i \rightarrow t_r)
        &=
        \sum_{n=1}^{\infty}
        \mathcal{K_{K}}_n(
        \mathbf{r_i} \rightarrow \mathbf{r_r},
        \mathbf{v_i} \rightarrow \mathbf{v_r},
        t_i \rightarrow t_r),
        \text{ with}
        \\
        \mathcal{K_{K}}_n(
        \mathbf{r_i} \rightarrow \mathbf{r_r},
        \mathbf{v_i} \rightarrow \mathbf{v_r},
        t_i \rightarrow t_r)
        &=
        \int_{\pmb{\varphi_{r_1}}}
        \mathcal{K_K}(\pmb{\varphi_i} \rightarrow \pmb{\varphi_{r_1}})
        \int_{0}^{\infty}
        \mathcal{S}(\pmb{\varphi_{i_2}} \mid \pmb{\varphi_{r_1}})
        f_{\varphi}(\pmb{\varphi_{i_2}})
        \int_{\pmb{\varphi_{r_2}}}
        \mathcal{K_K}(\pmb{\varphi_{i_2}} \rightarrow \pmb{\varphi_{r_2}})
        \cdots
        \\
        &\qquad \cdot
        \int_{0}^{\infty}
        \mathcal{S}(\pmb{\varphi_{i_n}} \mid \pmb{\varphi_{r_{n-1}}})
        f_{\varphi}(\pmb{\varphi_{i_n}})
        \mathcal{K_K}(\pmb{\varphi_{i_n}} \rightarrow \pmb{\varphi_r})
        \mathcal{S}(\pmb{\varphi_r})
        \, \dif \tau_{n}
        \dif \pmb{\varphi_{r_{n-1}}}
        \dif \tau_{n-1}
        \cdots
        \dif \tau_{2}
        \dif \pmb{\varphi_{r_1}}.
    \end{aligned}
    \end{equation}
    We now define the normalisation constants
    $\mathcal{N_{K}}_n : \mathbb{R}^3 \times \mathbb{R}^3 \times \mathbb{R} \rightarrow (0,\infty)$,
    $\mathcal{N_K}_n = \mathcal{N_K}_n(\mathbf{r_i}, \mathbf{v_i}, t_i)$, as
    \begin{equation}
    \begin{aligned}
        \mathcal{N_K}_n(\mathbf{r_i}, \mathbf{v_i}, t_i)
        &=
        \int_{\pmb{\varphi_r}}
        \mathcal{K_{K}}_n(
        \mathbf{r_i} \rightarrow \mathbf{r_r},
        \mathbf{v_i} \rightarrow \mathbf{v_r},
        t_i \rightarrow t_r)
        \, \dif \pmb{\varphi_r}
        \\
        &=
        \int_{\pmb{\varphi_{r_1}}}
        \mathcal{K_K}(\pmb{\varphi_i} \rightarrow \pmb{\varphi_{r_1}})
        \int_{0}^{\infty}
        \mathcal{S}(\pmb{\varphi_{i_2}} \mid \pmb{\varphi_{r_1}})
        f_{\varphi}(\pmb{\varphi_{i_2}})
        \int_{\pmb{\varphi_{r_2}}}
        \mathcal{K_K}(\pmb{\varphi_{i_2}} \rightarrow \pmb{\varphi_{r_2}})
        \cdots
        \\
        &\qquad \cdot
        \int_{0}^{\infty}
        \mathcal{S}(\pmb{\varphi_{i_n}} \mid \pmb{\varphi_{r_{n-1}}})
        f_{\varphi}(\pmb{\varphi_{i_n}})
        \int_{\pmb{\varphi_r}}
        \mathcal{K_K}(\pmb{\varphi_{i_n}} \rightarrow \pmb{\varphi_r})
        \mathcal{S}(\pmb{\varphi_r})
        \, \dif \pmb{\varphi_r}
        \dif \tau_n
        \dif \pmb{\varphi_{r_{n-1}}}
        \dif \tau_{n-1}
        \cdots
        \dif \tau_2
        \dif \pmb{\varphi_{r_1}}.
    \end{aligned}
    \end{equation}
    Finally, we define
    $\mathcal{R_{K}}_n : \mathbb{R}^3 \times \mathbb{R}^3 \times \mathbb{R} \rightarrow (0,\infty)$,
    $\mathcal{R_K}_n = \mathcal{R_K}_n(\mathbf{r_i}, \mathbf{v_i}, t_i)$, as
    \begin{equation}
    \begin{aligned}
        \mathcal{R_K}_n(\mathbf{r_i}, \mathbf{v_i}, t_i)
        &=
        \int_{\pmb{\varphi_{r_1}}}
        \mathcal{K_K}(\pmb{\varphi_i} \rightarrow \pmb{\varphi_{r_1}})
        \int_{0}^{\infty}
        \mathcal{S}(\pmb{\varphi_{i_2}} \mid \pmb{\varphi_{r_1}})
        f_{\varphi}(\pmb{\varphi_{i_2}})
        \int_{\pmb{\varphi_{r_2}}}
        \mathcal{K_K}(\pmb{\varphi_{i_2}} \rightarrow \pmb{\varphi_{r_2}})
        \cdots
        \\
        &\qquad \cdot
        \int_{0}^{\infty}
        \mathcal{S}(\pmb{\varphi_{i_n}} \mid \pmb{\varphi_{r_{n-1}}})
        f_{\varphi}(\pmb{\varphi_{i_n}})
        \int_{\pmb{\varphi_r}}
        \mathcal{K_K}(\pmb{\varphi_{i_n}} \rightarrow \pmb{\varphi_r})
        \mathcal{O}(\pmb{\varphi_r})
        \, \dif \pmb{\varphi_r}
        \dif \tau_n
        \dif \pmb{\varphi_{r_{n-1}}}
        \dif \tau_{n-1}
        \cdots
        \dif \tau_2
        \dif \pmb{\varphi_{r_1}}.
    \end{aligned}
    \end{equation}
    Next, we prove by induction that the normalisation sum of the kernels for the first $M$ reflections is
    \begin{equation} \label{eq:induction_normalisation}
    \sum_{n=1}^{M}
    \mathcal{N_K}_n(\mathbf{r_i}, \mathbf{v_i}, t_i)
    =
    1 - \mathcal{R_K}_M(\mathbf{r_i}, \mathbf{v_i}, t_i).
    \end{equation}
    This expression is immediately proven for the base case $M=1$ by substituting \cref{eq:shadowing_condition_2} into the one-reflection normalisation constant:
    \begin{equation}
        \begin{aligned}
            \mathcal{N_K}_1(\mathbf{r_i}, \mathbf{v_i}, t_i)
            &=
            \int_{\pmb{\varphi_r}}
            \mathcal{K_{K}}_1(
            \mathbf{r_i} \rightarrow \mathbf{r_r},
            \mathbf{v_i} \rightarrow \mathbf{v_r},
            t_i \rightarrow t_r)
            \, \dif \pmb{\varphi_r}
            \\
            &=
            \int_{\pmb{\varphi_r}}
            \mathcal{K_K}(\pmb{\varphi_i} \rightarrow \pmb{\varphi_r})
            \mathcal{S}(\pmb{\varphi_r})
            \, \dif \pmb{\varphi_r}
            \\
            &=
            \int_{\pmb{\varphi_r}}
            \mathcal{K_K}(\pmb{\varphi_i} \rightarrow \pmb{\varphi_r})
            \left[1 - \mathcal{O}(\pmb{\varphi_r})\right]
            \, \dif \pmb{\varphi_r}
            \\
            &=
            \int_{\pmb{\varphi_r}}
            \mathcal{K_K}(\pmb{\varphi_i} \rightarrow \pmb{\varphi_r})
            \, \dif \pmb{\varphi_r}
            -
            \int_{\pmb{\varphi_r}}
            \mathcal{K_K}(\pmb{\varphi_i} \rightarrow \pmb{\varphi_r})
            \mathcal{O}(\pmb{\varphi_r})
            \, \dif \pmb{\varphi_r}
            \\
            &=
            1 - \mathcal{R_K}_1(\mathbf{r_i}, \mathbf{v_i}, t_i),
        \end{aligned}
    \end{equation}
    where we have used the normalisation property of $\mathcal{K_K}$. We now prove that \cref{eq:induction_normalisation} holds for $M+1$ reflections, provided that it holds for $M$ reflections. We begin by rewriting the sum $\mathcal{R_K}_{n}(\mathbf{r_i}, \mathbf{v_i}, t_i) + \mathcal{N_K}_{n}(\mathbf{r_i}, \mathbf{v_i}, t_i)$ as
    \begin{equation}
        \begin{aligned}
            \mathcal{R_K}_{n}(\mathbf{r_i}, \mathbf{v_i}, t_i)
            +
            \mathcal{N_K}_{n}(\mathbf{r_i}, \mathbf{v_i}, t_i)
            &=
            \int_{\pmb{\varphi_{r_1}}}
            \mathcal{K_K}(\pmb{\varphi_i} \rightarrow \pmb{\varphi_{r_1}})
            \int_{0}^{\infty}
            \mathcal{S}(\pmb{\varphi_{i_2}} \mid \pmb{\varphi_{r_1}})
            f_{\varphi}(\pmb{\varphi_{i_2}})
            \int_{\pmb{\varphi_{r_2}}}
            \mathcal{K_K}(\pmb{\varphi_{i_2}} \rightarrow \pmb{\varphi_{r_2}})
            \cdots
            \\
            &\qquad \cdot
            \int_{0}^{\infty}
            \mathcal{S}(\pmb{\varphi_{i_n}} \mid \pmb{\varphi_{r_{n-1}}})
            f_{\varphi}(\pmb{\varphi_{i_n}})
            \int_{\pmb{\varphi_r}}
            \mathcal{K_K}(\pmb{\varphi_{i_n}} \rightarrow \pmb{\varphi_r})
            \left[
            \mathcal{S}(\pmb{\varphi_r})
            +
            \mathcal{O}(\pmb{\varphi_r})
            \right]
            \, \dif \pmb{\varphi_r}
            \\
            &\qquad \cdot
            \dif \tau_n
            \dif \pmb{\varphi_{r_{n-1}}}
            \dif \tau_{n-1}
            \cdots
            \dif \tau_2
            \dif \pmb{\varphi_{r_1}}
            \\
            &=
            \int_{\pmb{\varphi_{r_1}}}
            \mathcal{K_K}(\pmb{\varphi_i} \rightarrow \pmb{\varphi_{r_1}})
            \int_{0}^{\infty}
            \mathcal{S}(\pmb{\varphi_{i_2}} \mid \pmb{\varphi_{r_1}})
            f_{\varphi}(\pmb{\varphi_{i_2}})
            \int_{\pmb{\varphi_{r_2}}}
            \mathcal{K_K}(\pmb{\varphi_{i_2}} \rightarrow \pmb{\varphi_{r_2}})
            \cdots
            \\
            &\qquad \cdot
            \int_{0}^{\infty}
            \mathcal{S}(\pmb{\varphi_{i_n}} \mid \pmb{\varphi_{r_{n-1}}})
            f_{\varphi}(\pmb{\varphi_{i_n}})
            \int_{\pmb{\varphi_r}}
            \mathcal{K_K}(\pmb{\varphi_{i_n}} \rightarrow \pmb{\varphi_r})
            \, \dif \pmb{\varphi_r}
            \dif \tau_n
            \dif \pmb{\varphi_{r_{n-1}}}
            \dif \tau_{n-1}
            \cdots
            \dif \tau_2
            \dif \pmb{\varphi_{r_1}}
            \\
            &=
            \int_{\pmb{\varphi_{r_1}}}
            \mathcal{K_K}(\pmb{\varphi_i} \rightarrow \pmb{\varphi_{r_1}})
            \int_{0}^{\infty}
            \mathcal{S}(\pmb{\varphi_{i_2}} \mid \pmb{\varphi_{r_1}})
            f_{\varphi}(\pmb{\varphi_{i_2}})
            \int_{\pmb{\varphi_{r_2}}}
            \mathcal{K_K}(\pmb{\varphi_{i_2}} \rightarrow \pmb{\varphi_{r_2}})
            \cdots
            \\
            &\qquad \cdot
            \int_{0}^{\infty}
            \mathcal{S}(\pmb{\varphi_{i_n}} \mid \pmb{\varphi_{r_{n-1}}})
            f_{\varphi}(\pmb{\varphi_{i_n}})
            \, \dif \tau_n
            \dif \pmb{\varphi_{r_{n-1}}}
            \dif \tau_{n-1}
            \cdots
            \dif \tau_2
            \dif \pmb{\varphi_{r_1}}
            \\
            &=
            \int_{\pmb{\varphi_{r_1}}}
            \mathcal{K_K}(\pmb{\varphi_i} \rightarrow \pmb{\varphi_{r_1}})
            \int_{0}^{\infty}
            \mathcal{S}(\pmb{\varphi_{i_2}} \mid \pmb{\varphi_{r_1}})
            f_{\varphi}(\pmb{\varphi_{i_2}})
            \int_{\pmb{\varphi_{r_2}}}
            \mathcal{K_K}(\pmb{\varphi_{i_2}} \rightarrow \pmb{\varphi_{r_2}})
            \cdots
            \\
            &\qquad \cdot
            \int_{\pmb{\varphi_{r_{n-1}}}}
            \mathcal{K_K}(\pmb{\varphi_{i_{n-1}}} \rightarrow \pmb{\varphi_{r_{n-1}}})
            \mathcal{O}(\pmb{\varphi_{r_{n-1}}})
            \, \dif \pmb{\varphi_{r_{n-1}}}
            \dif \tau_{n-1}
            \cdots
            \dif \tau_2
            \dif \pmb{\varphi_{r_1}}
            \\
            &=
            \mathcal{R_K}_{n-1}(\mathbf{r_i}, \mathbf{v_i}, t_i),
        \end{aligned}
    \end{equation}
    where we have used the normalisation of $\mathcal{K_K}$ and the definition of $\mathcal{O}$. Then, we expand \cref{eq:induction_normalisation} for $M+1$ reflections as
    \begin{equation}
        \begin{aligned}
            \sum_{n=1}^{M+1}
            \mathcal{N_K}_n(\mathbf{r_i}, \mathbf{v_i}, t_i)
            &=
            \sum_{n=1}^{M}
            \mathcal{N_K}_n(\mathbf{r_i}, \mathbf{v_i}, t_i)
            +
            \mathcal{N_K}_{M+1}(\mathbf{r_i}, \mathbf{v_i}, t_i)
            \\
            &=
            \sum_{n=1}^{M}
            \mathcal{N_K}_n(\mathbf{r_i}, \mathbf{v_i}, t_i)
            +
            \mathcal{R_K}_{M}(\mathbf{r_i}, \mathbf{v_i}, t_i)
            -
            \mathcal{R_K}_{M+1}(\mathbf{r_i}, \mathbf{v_i}, t_i)
            \\
            &=
            1 -
            \mathcal{R_K}_{M}(\mathbf{r_i}, \mathbf{v_i}, t_i)
            +
            \mathcal{R_K}_{M}(\mathbf{r_i}, \mathbf{v_i}, t_i)
            -
            \mathcal{R_K}_{M+1}(\mathbf{r_i}, \mathbf{v_i}, t_i)
            \\
            &=
            1 - \mathcal{R_K}_{M+1}(\mathbf{r_i}, \mathbf{v_i}, t_i).
        \end{aligned}
    \end{equation}
    Thus, \cref{eq:induction_normalisation} holds for all $M \in \mathbb{N}$. We now take the limit as $M \to \infty$ to obtain the normalisation constant of $\mathcal{K_{MK}}$:
    \begin{equation} \label{eq:normalisation_final_step}
        \begin{aligned}
            \int_{\mathbb{R}^3}
            \int_{\mathbb{R}^3}
            \int_{\mathbb{R}}
            \mathcal{K_{MK}}(
            \mathbf{r_i} \rightarrow \mathbf{r_r},
            \mathbf{v_i} \rightarrow \mathbf{v_r},
            t_i \rightarrow t_r)
            \, \dif t_r \dif \mathbf{v_r} \dif \mathbf{r_r}
            &=
            \lim_{M\rightarrow\infty}
            \left[
            \sum_{n=1}^{M}
            \mathcal{N_K}_n(\mathbf{r_i}, \mathbf{v_i}, t_i)
            \right]
            \\
            &=
            \lim_{M\rightarrow\infty}
            \left[
            1 - \mathcal{R_K}_{M}(\mathbf{r_i}, \mathbf{v_i}, t_i)
            \right]
            \\
            &=
            1 - \mathcal{L}_R.
        \end{aligned}
    \end{equation}
    It is immediate that $\mathcal{L}_R \geq 0$. We employ the second condition in the statement of the lemma, namely that there exists $\varepsilon \in (0,1]$ such that, for every $\pmb{\varphi_i} \in \mathbb{R}^3 \times \mathbb{R}^3 \times \mathbb{R}$,
    \begin{equation}
        \int_{\pmb{\varphi_r}}
        \mathcal{K_K}(\pmb{\varphi_i} \rightarrow \pmb{\varphi_r})
        \mathcal{S}(\pmb{\varphi_r})
        \, \dif \pmb{\varphi_r}
        >
        \varepsilon.
    \end{equation}
    Since $\mathcal{O}(\pmb{\varphi_r}) = 1 - \mathcal{S}(\pmb{\varphi_r})$ and $\mathcal{K_K}$ is normalised, it follows that
    \begin{equation} \label{eq:inner_remainder_bound}
        \begin{aligned}
            \int_{\pmb{\varphi_r}}
            \mathcal{K_K}(\pmb{\varphi_i} \rightarrow \pmb{\varphi_r})
            \mathcal{O}(\pmb{\varphi_r})
            \, \dif \pmb{\varphi_r}
            &=
            \int_{\pmb{\varphi_r}}
            \mathcal{K_K}(\pmb{\varphi_i} \rightarrow \pmb{\varphi_r})
            \left[
            1 - \mathcal{S}(\pmb{\varphi_r})
            \right]
            \, \dif \pmb{\varphi_r}
            \\
            &=
            1 -
            \int_{\pmb{\varphi_r}}
            \mathcal{K_K}(\pmb{\varphi_i} \rightarrow \pmb{\varphi_r})
            \mathcal{S}(\pmb{\varphi_r})
            \, \dif \pmb{\varphi_r}
            \\
            &\leq
            1 - \varepsilon.
        \end{aligned}
    \end{equation}
    Therefore, for every $M \geq 1$,
    \begin{equation}
        \begin{aligned}
            \mathcal{R_K}_{M}(\mathbf{r_i}, \mathbf{v_i}, t_i)
            &=
            \int_{\pmb{\varphi_{r_1}}}
            \mathcal{K_K}(\pmb{\varphi_i} \rightarrow \pmb{\varphi_{r_1}})
            \int_{0}^{\infty}
            \mathcal{S}(\pmb{\varphi_{i_2}} \mid \pmb{\varphi_{r_1}})
            f_{\varphi}(\pmb{\varphi_{i_2}})
            \int_{\pmb{\varphi_{r_2}}}
            \mathcal{K_K}(\pmb{\varphi_{i_2}} \rightarrow \pmb{\varphi_{r_2}})
            \cdots
            \\
            &\qquad \cdot
            \int_{0}^{\infty}
            \mathcal{S}(\pmb{\varphi_{i_M}} \mid \pmb{\varphi_{r_{M-1}}})
            f_{\varphi}(\pmb{\varphi_{i_M}})
            \int_{\pmb{\varphi_r}}
            \mathcal{K_K}(\pmb{\varphi_{i_M}} \rightarrow \pmb{\varphi_r})
            \mathcal{O}(\pmb{\varphi_r})
            \, \dif \pmb{\varphi_r}
            \, \dif \tau_M
            \, \dif \pmb{\varphi_{r_{M-1}}}
            \, \dif \tau_{M-1}
            \cdots
            \dif \tau_2
            \dif \pmb{\varphi_{r_1}}
            \\
            &\leq
            (1-\varepsilon)
            \int_{\pmb{\varphi_{r_1}}}
            \mathcal{K_K}(\pmb{\varphi_i} \rightarrow \pmb{\varphi_{r_1}})
            \int_{0}^{\infty}
            \mathcal{S}(\pmb{\varphi_{i_2}} \mid \pmb{\varphi_{r_1}})
            f_{\varphi}(\pmb{\varphi_{i_2}})
            \int_{\pmb{\varphi_{r_2}}}
            \mathcal{K_K}(\pmb{\varphi_{i_2}} \rightarrow \pmb{\varphi_{r_2}})
            \cdots
            \\
            &\qquad \cdot
            \int_{0}^{\infty}
            \mathcal{S}(\pmb{\varphi_{i_M}} \mid \pmb{\varphi_{r_{M-1}}})
            f_{\varphi}(\pmb{\varphi_{i_M}})
            \, \dif \tau_M
            \, \dif \pmb{\varphi_{r_{M-1}}}
            \, \dif \tau_{M-1}
            \cdots
            \dif \tau_2
            \dif \pmb{\varphi_{r_1}}
            \\
            &=
            (1-\varepsilon)
            \int_{\pmb{\varphi_{r_1}}}
            \mathcal{K_K}(\pmb{\varphi_i} \rightarrow \pmb{\varphi_{r_1}})
            \int_{0}^{\infty}
            \mathcal{S}(\pmb{\varphi_{i_2}} \mid \pmb{\varphi_{r_1}})
            f_{\varphi}(\pmb{\varphi_{i_2}})
            \int_{\pmb{\varphi_{r_2}}}
            \mathcal{K_K}(\pmb{\varphi_{i_2}} \rightarrow \pmb{\varphi_{r_2}})
            \cdots
            \\
            &\qquad \cdot
            \mathcal{O}(\pmb{\varphi_{r_{M-1}}})
            \, \dif \pmb{\varphi_{r_{M-1}}}
            \, \dif \tau_{M-1}
            \cdots
            \dif \tau_2
            \dif \pmb{\varphi_{r_1}}
            \\
            &\leq \cdots \leq (1-\varepsilon)^M.
        \end{aligned}
    \end{equation}
    Hence,
    \begin{equation}
        \mathcal{L}_R
        =
        \lim_{M\to\infty}
        \mathcal{R_K}_M(\mathbf{r_i}, \mathbf{v_i}, t_i)
        \leq
        \lim_{M\to\infty}
        (1-\varepsilon)^M
        =
        0.
    \end{equation}
    Since also $\mathcal{L}_R \geq 0$, we conclude that $\mathcal{L}_R = 0$. Substituting this value into \cref{eq:normalisation_final_step}, we obtain
    \begin{equation}
        \begin{aligned}
            \int_{\mathbb{R}^3}
            \int_{\mathbb{R}^3}
            \int_{\mathbb{R}}
            \mathcal{K_{MK}}(
            \mathbf{r_i} \rightarrow \mathbf{r_r},
            \mathbf{v_i} \rightarrow \mathbf{v_r},
            t_i \rightarrow t_r)
            \, \dif t_r \dif \mathbf{v_r} \dif \mathbf{r_r}
            =
            1 - \mathcal{L}_R
            =
            1,
        \end{aligned}
    \end{equation}
    which proves that $\mathcal{K_{MK}}$ satisfies the normalisation condition.
\end{proof}

\begin{lemma} \label{lemma:nonnegativity_multi_refl}
    Let $\mathcal{K_{MK}} : \mathbb{R}^3 \times \mathbb{R}^3 \times \mathbb{R}
\times \mathbb{R}^3 \times \mathbb{R}^3 \times \mathbb{R}
\rightarrow [0, \infty)$, $\mathcal{K_{MK}} = \mathcal{K_{MK}}\left(\mathbf{r_i} \rightarrow \mathbf{r_r}, \mathbf{v_i} \rightarrow \mathbf{v_r}, t_i \rightarrow t_r \right)$ be a Lebesgue-integrable function, defined in the global reference frame from \cref{def:global_frame} and associated with a surface $\Psi$ in the sense of \cref{def:surface_general}, whose expression is
\begin{equation}
    \begin{aligned}
        \mathcal{K_{MK}}(
        \mathbf{r_i} \rightarrow \mathbf{r_r},
        \mathbf{v_i} \rightarrow \mathbf{v_r},
        t_i \rightarrow t_r)
        &=
        \sum_{n=1}^{\infty}
        \Bigg\{
        \underset{\mathbf{v_{r_k}},\mathbf{v_{i_k}}}{\int\cdots\int}
        \underset{\mathbf{r_{r_k}},\mathbf{r_{i_k}}}{\int\cdots\int}
        \underset{t_{r_k},t_{i_k}}{\int\cdots\int}
        \mathcal{K_K}(
        \mathbf{r_i}\rightarrow\mathbf{r_{r_1}},
        \mathbf{v_i}\rightarrow\mathbf{v_{r_1}},
        t_i\rightarrow t_{r_1})
        \\
        &\qquad\cdot
        \prod_{k=2}^{n}
        \Bigg[
        \mathcal{S}(
        \mathbf{r_{i_k}}, \mathbf{v_{i_k}}, t_{i_k}
        \mid
        \mathbf{r_{r_{k-1}}},\mathbf{v_{r_{k-1}}},t_{r_{k-1}})
        \,
        \\
        &\qquad\cdot \, f_{\varphi}(\mathbf{r_{i_k}}, \mathbf{v_{i_k}}, t_{i_k}) \,\mathcal{K_K}(
        \mathbf{r_{i_k}}\rightarrow\mathbf{r_{r_k}},
        \mathbf{v_{i_{k}}}\rightarrow\mathbf{v_{r_k}},
        t_{i_k}\rightarrow t_{r_k})
        \Bigg]
        \\
        &\qquad\cdot
        \dif\mathbf{v_{r_1}}\cdots\dif\mathbf{v_{r_{n-1}}}
        \dif\mathbf{v_{i_2}}\cdots\dif\mathbf{v_{i_{n-1}}}
        \dif\mathbf{r_{r_1}}\cdots\dif\mathbf{r_{r_{n-1}}}
        \dif\mathbf{r_{i_2}}\cdots\dif\mathbf{r_{i_{n-1}}}
        \\
        &\qquad\cdot
        \dif t_{r_1}\cdots\dif t_{r_{n-1}}
        \dif t_{i_2}\cdots\dif t_{i_{n-1}}
        \;
        \mathcal{S}(\mathbf{r_r},\mathbf{v_r},t_r)
        \Bigg\},
    \end{aligned}
\end{equation}
as given by \cref{def:multi_reflection_scattering_kernel}, where $\mathcal{K_K}: \mathbb{R}^3 \times \mathbb{R}^3 \times \mathbb{R}
\times \mathbb{R}^3 \times \mathbb{R}^3 \times \mathbb{R} \rightarrow\mathbb{R}$, written as $\mathcal{K_K}= \mathcal{K_K}(\mathbf{r_{i}} \rightarrow \mathbf{r_{r}}, \mathbf{v_{i}} \rightarrow \mathbf{v_{r}}, t_i \rightarrow t_r)$, is a Lebesgue-integrable single-reflection global scattering kernel defined in the same frame. Here, $\mathbf{r_{i}}$, $\mathbf{r_{r}}$, $\mathbf{v_{i}}$, and $\mathbf{v_{r}}$ are $3\times 1$ real vectors denoting the incident and reflected particle positions and velocities, respectively, $t_i$ and $t_r$ represent the corresponding entry and exit times, respectively. $\mathcal{S}(\mathbf{r_r},\mathbf{v_r},t_r)$ and $\mathcal{S}(\mathbf{r_i},\mathbf{v_i}, t_i \, | \, \mathbf{r_r},\mathbf{v_r},t_r)$ denote the one and two-point shadowing
functions, respectively, defined in \cref{def:shadowing_single_scale} and associated with the
surface $\Psi$. Then $\mathcal{K_{MK}}$ satisfies non-negativity in the sense of \cref{def:nonnegativity} if the local kernel $\mathcal{K_L}$ associated with $\mathcal{K_K}$ satisfies non-negativity.
\end{lemma}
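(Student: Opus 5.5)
The plan is to reduce the claim to non-negativity of every factor in each term of the series in \cref{eq:multi_reflection_kernel}. Non-negative integrands give non-negative integrals, and a series of non-negative terms has a non-negative (possibly infinite, but here finite by integrability) sum.

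First, I will invoke \cref{lemma:nonnegativity}. Since $\mathcal{K_L}$ satisfies non-negativity, the single-reflection kernel $\mathcal{K_K}$ is non-negative for all arguments. This covers the factor $\mathcal{K_K}(\mathbf{r_i}\rightarrow\mathbf{r_{r_1}},\dots)$ and every factor $\mathcal{K_K}(\mathbf{r_{i_k}}\rightarrow\mathbf{r_{r_k}},\dots)$ inside the product over $k=2,\dots,n$.

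Second, I will check the shadowing-related factors.
\begin{itemize}
\item By \cref{def:shadowing_single_scale,def:multi_reflection_scattering_kernel}, the two-point function $\mathcal{S}(\cdot\mid\cdot)$ takes values in $(0,1]$, so it is non-negative.
\item The one-point function $\mathcal{S}(\mathbf{r_r},\mathbf{v_r},t_r)$ likewise takes values in $(0,1]$.
\item The hazard $f_{\varphi}$ is declared in \cref{def:shadowing_single_scale} as a map into $[0,\infty)$.
\end{itemize}
Some care is needed with $f_\varphi$, since it is defined there as $\frac{\dif}{\dif t}\ln\mathcal{S}$, which is a priori signed. I would rely on the codomain stated in the definition, i.e. interpret $f_\varphi$ as the (non-negative) collision rate. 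Equivalently, the product $\mathcal{S}f_\varphi=\Dot{\mathcal{S}}$ is the density of first re-collision, which is non-negative because the non-collision probability is monotone along the trajectory. This is the only mildly delicate point, and I would state it explicitly rather than derive it.

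Finally, for each fixed $n$ the integrand is a product of non-negative measurable functions, so its multiple Lebesgue integral over the intermediate states is non-negative. Multiplying by $\mathcal{S}(\mathbf{r_r},\mathbf{v_r},t_r)\ge 0$ keeps the $n$-th term non-negative. The partial sums are therefore non-decreasing and non-negative, and their limit $\mathcal{K_{MK}}$ satisfies \cref{eq:nonnegativity} for all arguments. Convergence is guaranteed by the assumed Lebesgue integrability, or, if the normalisation hypotheses of \cref{lemma:normalisation_multi_refl} hold, by the bounded total mass.
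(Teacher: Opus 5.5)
Your argument is essentially the paper's: its proof also notes that $\mathcal{K_K}$ is non-negative because $\mathcal{K_L}$ is, that both shadowing functions are non-negative, and that $f_\varphi$ is non-negative by the codomain declared in \cref{def:shadowing_single_scale}, so every integrand and hence the whole series is non-negative. One correction to your side remark: with the paper's convention $\Dot{\mathcal{S}}=\frac{\dif}{\dif t}\mathcal{S}$, a decaying non-collision probability gives $\Dot{\mathcal{S}}\le 0$, so the re-collision density is $-\Dot{\mathcal{S}}$ (matching the minus sign the paper uses for $f_\varphi$ in the study case), which is why relying on the declared codomain, as the paper does, is the right move.
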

\begin{proof}
     The proof is, again, trivial. By \cref{def:shadowing_single_scale}, we know that $\mathcal{S}(
        \mathbf{r_{i}},\mathbf{v_i},t_{i}
        \mid
        \mathbf{r_{r}},\mathbf{v_{r}},t_{r})$, $\mathcal{S}( \mathbf{r_{r}},\mathbf{v_{r}},t_{r})$ and $f_{\varphi}(\mathbf{r_{i}},\mathbf{v_i},t_{i}) \geq 0 \, \forall \, (\mathbf{r_i}, \mathbf{v_i}, t_i, \mathbf{r_r}, \mathbf{v_r}, t_r) \in \mathbb{R}^3 \times \mathbb{R}^3 \times \mathbb{R}
\times \mathbb{R}^3 \times \mathbb{R}^3 \times \mathbb{R}$. We further know that $\mathcal{K_K}$ satisfies non-negativity, i.e. $\mathcal{K_K}(\mathbf{r_{i}} \rightarrow \mathbf{r_{r}}, \mathbf{v_{i}} \rightarrow \mathbf{v_{r}}, t_i \rightarrow t_r) \geq 0, \, \forall (\mathbf{r_{i}}, \mathbf{r_{r}}, \mathbf{v_{i}}, \mathbf{v_{r}}, t_i, t_r) \in \mathbb{R}^3 \times \mathbb{R}^3 \times \mathbb{R} \times \mathbb{R}^3 \times \mathbb{R}^3 \times \mathbb{R}
\rightarrow [0, \infty)$. Since all terms in the integral are non-negative, we can infer that $\mathcal{K_{MK}}\left(\mathbf{r_i} \rightarrow \mathbf{r_r}, \mathbf{v_i} \rightarrow \mathbf{v_r, t_i \rightarrow t_r}\right)$ is non-negative for any $\mathbf{v_r}, \mathbf{n_L} \in \mathbb{R}^3 \times \mathbb{R}^3$, and thus satisfies the condition. \qedhere
\end{proof}

\section{Multi-scale scattering formalism} \label{sec:Multiscale_formalism}

So far, we have developed a scattering-kernel formulation for gas-particle reflection under the combined action of a rough surface morphology and a local scattering law $\mathcal{K_L}$. However, real surfaces generally contain roughness over multiple length scales, so that a single characteristic scale $R$, as introduced in \cref{def:length_scale}, is not sufficient to represent the full morphology. We therefore use the multi-scale surface framework of \cref{subsec:multiscale_def} to extend the scattering construction to surfaces with arbitrary power spectral densities, which we represent through a hierarchy of normal distributions. This leads to a recursive formulation in which a newly defined scattering operator $\circ$ is applied successively across scales. In particular, for multi-scale surfaces admitting a height representation in a common global frame, we show that repeated application of $\circ$ is equivalent to applying $\circ$ once to the sum of the corresponding surface height profiles. Thus, if a surface $\Psi^{(m)}$ is decomposed into components $\Psi^{(1)}_k$, with $k \in \overline{1,m}$, each associated with a distinct characteristic scale ordered such that $R_{k+1} \ll R_k$, the resulting multi-scale kernel can be constructed by successively applying the scattering operator to these components. In the following, we formalise this construction and prove the corresponding composition and summation properties of $\circ$.

\subsection{Construction of scattering operator} \label{subsec:scattering_operator}

Herein, we formalise \cref{eq:multi_reflection_kernel} through a scattering operator, $\circ$, defined on the set of all surfaces, $\mathcal{P}$, and the set of all kernels, $\mathcal{T}$, which maps a surface $\Psi$ and a local kernel $\mathcal{K_L}$ to the corresponding roughness-modified kernel. We further prove that the co-domain of $\circ$ is $\mathcal{T}^{s}$, provided that $\mathcal{K_L} \in \mathcal{T}^p$.

\begin{definition} \label{def:scattering_operator}
    Let $\mathcal{K_L} : \mathbb{R}^3 \times \mathbb{R}^3 \times \mathbb{R}
    \times \mathbb{R}^3 \times \mathbb{R}^3 \times \mathbb{R}
    \rightarrow [0, \infty)$ be a Lebesgue-integrable function, and let $\mathcal{K_L} = \mathcal{K_L}\left(\pmb{\varphi_i} \rightarrow \pmb{\varphi_r} \right)$ denote the corresponding kernel defined in the local reference frame of \cref{def:local_frame} and associated with a multi-scale surface $\Psi^{(m)}$ in the sense of \cref{def:surface_multiscale}. As convention, we write $a^{(q)}_k$ for a quantity $a$ associated with reflection $k$ and expressed in frame $q$, with $q \in \{0,1,\dots,m\}$ for a surface with $m$ scales, where frame $0$ is the most global frame and frame $m$ is the local frame where $\Psi$ becomes smooth. We then define the scattering operator $\circ : \mathcal{P} \times \mathcal{T} \rightarrow \mathcal{T}$, for a surface with $m$ scales, by
    \begin{equation} \label{eq:scattering_operator}
    \boxed{
    \begin{aligned}
        &\left[\Psi^{(m)} \circ \mathcal{K_L}\right]\left(
        \pmb{\varphi}^{(0)}_i \rightarrow \pmb{\varphi}^{(0)}_r
        \right)
        =
        \frac{1}{\left\langle\Dot{\mathcal{M}}^{(m)}\left(
        \pmb{\varphi}^{(0)}_i
        \right)\right\rangle}
        \sum_{n=1}^{\infty}
        \Biggl\{
        \underset{
        \pmb{\varphi}^{(0)}_{r_{1:n-1}},\,
        \pmb{\varphi}^{(0)}_{i_{2:n}},\,
        \mathbf{n}^{(1:m-1)}_{1:n}
        }{\int\cdots\int}
        \Dot{\mathcal{M}}^{(m)}\left(
        \pmb{\varphi}^{(0)}_i
        \,\Big|\,
        \mathbf{n}^{(m-1)}_{1},\dots,\mathbf{n}^{(0)}_{1}
        \right)
        \\
        &\quad\cdot
        p_{n_1}^{(m-1)}\left(\mathbf{n}^{(m-1)}_1, \dots, \mathbf{n}^{(1)}_1 \, | \, \mathbf{n}^{(0)}_1, \mathbf{v}^{(0)}_{i_1}\right) \, \mathcal{K_K}\left(
        \pmb{\varphi}^{(0)}_i \rightarrow \pmb{\varphi}^{(0)}_{r_1}
        \right)
        \prod_{k=2}^{n}
        \Biggl[
        \Dot{\mathcal{S}}^{(m)}\Bigl(
        \pmb{\varphi}^{(0)}_{i_{k}},
        \mathbf{n}^{(m-1)}_k,\dots,\mathbf{n}^{(0)}_k
        \,\Big|\,
        \pmb{\varphi}^{(0)}_{r_{k-1}},
        \mathbf{n}^{(m-1)}_{k-1},\dots,\mathbf{n}^{(0)}_{k-1}
        \Bigr)
        \\
        &\quad\cdot
        \mathcal{K_K}\left(
        \pmb{\varphi}^{(0)}_{i_k} \rightarrow \pmb{\varphi}^{(0)}_{r_k}
        \right)
        \Biggr]
        \mathcal{S}^{(m)}\left(
        \pmb{\varphi}^{(0)}_r
        \,\Big|\,
        \mathbf{n}^{(m-1)}_{n},\dots,\mathbf{n}^{(0)}_{n}
        \right) \, \dif\pmb{\varphi}^{(0)}_{r_1}\cdots\dif\pmb{\varphi}^{(0)}_{r_{n-1}}\,
        \dif\pmb{\varphi}^{(0)}_{i_2}\cdots\dif\pmb{\varphi}^{(0)}_{i_n}\,
        \prod_{\alpha=1}^{m-1}
        \prod_{k=1}^{n}
        \dif\mathbf{n}^{(\alpha)}_k
        \Biggr\}.
    \end{aligned}
    }
\end{equation}
    with the single-reflection kernel, $\mathcal{K_K}$, taking the form given in \cref{def:single_reflection_scattering_kernel},
    \begin{equation}
        \mathcal{K_K}\left(
        \pmb{\varphi}^{(0)}_{i_k} \rightarrow \pmb{\varphi}^{(0)}_{r_k}
        \right)
        =
        \int_{\mathbb{S}^2}
        \left\langle\Dot{\mathcal{M}}_L\left(
            \pmb{\varphi}^{(m)}_i
            \right)\right\rangle \, \mathcal{K_L}\left(
        \pmb{\varphi}^{(m)}_{i_k} \rightarrow \pmb{\varphi}^{(m)}_{r_k}
        \right)
        \cdot
        \frac{\mathcal{Q}\left(\mathbf{v}^{(m)}_{i_k} \, | \, \mathbf{n}^{(m)}_k\right)}
        {\mathcal{Q}\left(\mathbf{v}^{(0)}_{i_k} \, | \, \mathbf{n}^{(0)}_k\right)}
        \, p_{n_k}^{(m)}\left(\mathbf{n}^{(m)}_k \, | \, \mathbf{n}^{(m-1)}_k, \mathbf{v}^{(m-1)}_{i_k}\right)
        \, \dif \mathbf{n}^{(m)}_k.
    \end{equation}
    The two-point and one-point shadowing functions $\mathcal{S}^{(m)}\left(\pmb{\varphi}^{(0)}_{i_k}, \mathbf{n}^{(m-1)}_k,\dots,\mathbf{n}^{(0)}_k \mid \pmb{\varphi}^{(0)}_{r_{k-1}}, \mathbf{n}^{(m-1)}_{k-1},\dots,\mathbf{n}^{(0)}_{k-1}\right)$ and $\mathcal{S}^{(m)}\left(\pmb{\varphi}^{(0)}_{r} \mid \mathbf{n}^{(m-1)}_n,\dots,\mathbf{n}^{(0)}_n\right)$ and masking functions $\mathcal{M}^{(m)}\left(\pmb{\varphi}^{(0)}_{i} \mid \mathbf{n}^{(m-1)}_1,\dots,\mathbf{n}^{(0)}_1\right)$ are defined as in \cref{def:shadowing_multi_scale}. Furthermore, $\left\langle\mathcal{M}^{(m)}\right\rangle$ and $\left\langle\mathcal{M}_L\right\rangle$ represent the normal-averaged masking functions of the first $m$ scales and any local scales of $\mathcal{K_L}$, respectively. Finally, $\pmb{\varphi}^{(q)}_{i_k} = \left(\mathbf{r}^{(q)}_{i_k}, \mathbf{v}^{(q)}_{i_k}, t_{i_k}\right)$ and $\pmb{\varphi}^{(q)}_{r_k} = \left(\mathbf{r}^{(q)}_{r_k}, \mathbf{v}^{(q)}_{r_k}, t_{r_k}\right)$ denote the incident and reflected particle states, respectively, at reflection $k$, expressed in frame $q$. By convention, $\pmb{\varphi}^{(0)}_{i_1} = \pmb{\varphi}^{(0)}_i$ and $\pmb{\varphi}^{(0)}_{r_n} = \pmb{\varphi}^{(0)}_r$.
\end{definition}

We first establish that the scattering operator $\circ$ is closed under successive application across multi-scale surface hierarchies, i.e. applying it to two multi-scale surfaces in sequence is equivalent to applying it once to a single composite multi-scale surface.

\begin{theorem} \label{theorem:scattering_kernel_composition}
    Let $\circ$ denote the scattering operator in the sense of \cref{def:scattering_operator}, and let $\Psi^{(p)}_1, \Psi^{(q)}_2 \in \mathcal{P}$ be two multi-scale surfaces in the sense of \cref{def:surface_multiscale}, with scales $p,q\in\mathbb{Z}^+$. Furthermore, let $\mathcal{K_L}: \mathbb{R}^3 \times \mathbb{R}^3 \times \mathbb{R}
    \times \mathbb{R}^3 \times \mathbb{R}^3 \times \mathbb{R} \rightarrow [0, \infty)$ be a Lebesgue-integrable function defined in the most local reference frame associated with the composed surface, in the sense of \cref{def:local_frame}. Then, there exists a multi-scale surface $\Psi^{(p+q)} \in \mathcal{P}$ in the sense of \cref{def:surface_multiscale}, with $p+q$ scales, such that
    \begin{equation} \label{eq:scattering_operator_composition}
    \boxed{
        \left[\Psi^{(p)}_1 \circ \left(\Psi^{(q)}_2 \circ \mathcal{K_L}\right)\right]\left(
        \mathbf{r}^{(0)}_i \rightarrow \mathbf{r}^{(0)}_r,
        \mathbf{v}^{(0)}_i \rightarrow \mathbf{v}^{(0)}_r,
        t_i \rightarrow t_r
        \right)
        =
        \left[\Psi^{(p+q)} \circ \mathcal{K_L}\right]\left(
        \mathbf{r}^{(0)}_i \rightarrow \mathbf{r}^{(0)}_r,
        \mathbf{v}^{(0)}_i \rightarrow \mathbf{v}^{(0)}_r,
        t_i \rightarrow t_r
        \right),
    }
    \end{equation}
     $\forall \, \Psi^{(p)}_1, \Psi^{(q)}_2 \in \mathcal{P}$ and $ \forall \, \left(\mathbf{r}^{(0)}_i,\mathbf{v}^{(0)}_i,t_i,\mathbf{r}^{(0)}_r,\mathbf{v}^{(0)}_r,t_r\right)\in \mathbb{R}^3 \times \mathbb{R}^3 \times \mathbb{R} \times \mathbb{R}^3 \times \mathbb{R}^3 \times \mathbb{R}$. Furthermore, $\left[\Psi^{(p+q)} \circ \mathcal{K_L}\right]\left(
    \mathbf{r}^{(0)}_i \rightarrow \mathbf{r}^{(0)}_r,
    \mathbf{v}^{(0)}_i \rightarrow \mathbf{v}^{(0)}_r,
    t_i \rightarrow t_r
    \right)$ is a scattering kernel in the sense of \cref{def:reciprocity,def:normalisation,def:nonnegativity,def:impermeability} provided that $\mathcal{K_L}$ is a pointwise scattering kernel in the sense of \cref{def:pointwise_reciprocity,def:normalisation,def:nonnegativity}, and the associated two-point and one-point shadowing functions $\mathcal{S}^{(p+q)}\left(\pmb{\varphi}^{(0)}_{i_k}, \mathbf{n}^{(p+q-1)}_k,\dots,\mathbf{n}^{(0)}_k \mid \pmb{\varphi}^{(0)}_{r_{k-1}}, \mathbf{n}^{(p+q-1)}_{k-1},\dots,\mathbf{n}^{(0)}_{k-1}\right)$ and $\mathcal{S}^{(p+q)}\left(\pmb{\varphi}^{(0)}_{r} \mid \mathbf{n}^{(p+q-1)}_n,\dots,\mathbf{n}^{(0)}_n\right)$, respectively, satisfy the following three conditions. First, the two-point shadowing function is reciprocal, i.e.,
    \begin{multline} \label{eq:shadowing_multi_scale_condition_1}
        p^{(p+q-1)}_{n_{k-1}}\Bigl(
        \mathbf{n}^{(p+q-1)}_{k-1}, \dots, \mathbf{n}^{(1)}_{k-1}
        \,\Big|\,
        \mathbf{n}^{(0)}_{k-1}, \mathbf{v}^{(0)}_{i_{k-1}}
        \Bigr)
        \, \mathcal{S}^{(p+q)}\left(
        \pmb{\varphi}^{(0)}_{i_k},
        \mathbf{n}^{(p+q-1)}_k,\dots,\mathbf{n}^{(0)}_k
        \mid
        \pmb{\varphi}^{(0)}_{r_{k-1}},
        \mathbf{n}^{(p+q-1)}_{k-1},\dots,\mathbf{n}^{(0)}_{k-1}
        \right)
        = \\
        p^{(p+q-1)}_{n_k}\Bigl(
        \mathbf{n}^{(p+q-1)}_{k}, \dots, \mathbf{n}^{(1)}_{k}
        \,\Big|\,
        \mathbf{n}^{(0)}_k, \mathbf{v}^{(0)}_{i_k}
        \Bigr)
        \, \mathcal{S}^{(p+q)}\left(
        -\pmb{\varphi}^{(0)}_{r_{k-1}},
        \mathbf{n}^{(p+q-1)}_{k-1},\dots,\mathbf{n}^{(0)}_{k-1}
        \mid
        -\pmb{\varphi}^{(0)}_{i_k},
        \mathbf{n}^{(p+q-1)}_k,\dots,\mathbf{n}^{(0)}_k
        \right),
    \end{multline}
    $\forall \, \left(
    \pmb{\varphi}^{(0)}_{i_k},
    \pmb{\varphi}^{(0)}_{r_{k-1}},
    \mathbf{n}^{(p+q-1)}_k,\dots,\mathbf{n}^{(0)}_k,
    \mathbf{n}^{(p+q-1)}_{k-1},\dots,\mathbf{n}^{(0)}_{k-1}
    \right)
    \in
    \left(\mathbb{R}^3 \times \mathbb{R}^3 \times \mathbb{R}\right)^2
    \times
    \left(\mathbb{S}^2\right)^{2(p+q)}$. Second, the two-point and one-point shadowing functions satisfy the normalisation condition
    \begin{multline} \label{eq:shadowing_multi_scale_condition_2}
        \int_{\mathbb{R}} \int_{\mathbb{S}^2} \dots \int_{\mathbb{S}^2}
        \mathcal{S}^{(p+q)}\left(
        \pmb{\varphi}^{(0)}_{i_k},
        \mathbf{n}^{(p+q-1)}_k,\dots,\mathbf{n}^{(0)}_k
        \mid
        \pmb{\varphi}^{(0)}_{r_{k-1}},
        \mathbf{n}^{(p+q-1)}_{k-1},\dots,\mathbf{n}^{(0)}_{k-1}
        \right) \\
        f^{(p+q)}_{\varphi}\left(
        \pmb{\varphi}^{(0)}_{i_k}
        \right)
        \, \dif \mathbf{n}_k^{(p+q-1)} \dif \mathbf{n}_k^{(p+q-2)} \dots \dif \mathbf{n}_k^{(1)} \, \dif t_{i_k}
        = 
        1 -
        \mathcal{S}^{(p+q)}\left(
        \pmb{\varphi}^{(0)}_{r_{k-1}}
        \mid
        \mathbf{n}^{(p+q-1)}_{k-1},\dots,\mathbf{n}^{(0)}_{k-1}
        \right),
    \end{multline}
    $\forall \, \left(
    \pmb{\varphi}^{(0)}_{r_{k-1}},
    \mathbf{n}^{(p+q-1)}_{k-1},\dots,\mathbf{n}^{(0)}_{k-1},
    \mathbf{n}^{(p+q-1)}_k,\dots,\mathbf{n}^{(0)}_k
    \right)
    \in
    \mathbb{R}^3 \times \mathbb{R}^3 \times \mathbb{R} \times \left(\mathbb{S}^2\right)^{2(p+q)}$. Third, $ \exists \, \varepsilon \in (0,1]$ such that, for every $\left(\pmb{\varphi}^{(0)}_i,
    \mathbf{n}^{(p+q-1)},\dots,\mathbf{n}^{(0)}
    \right)\in\mathbb{R}^3 \times \mathbb{R}^3 \times \mathbb{R} \times \left(\mathbb{S}^2\right)^{(p+q)}$ and $\mathcal{K_K}\left(
        \pmb{\varphi}^{(0)}_i \rightarrow \pmb{\varphi}^{(0)}_r
        \right) \in \mathcal{T}^p$, the one-point shadowing function satisfies
    \begin{multline} \label{eq:shadowing_multi_scale_condition_3}
        \int_{\mathbb{S}^2} \dots \int_{\mathbb{S}^2} \int_{\mathbb{R}^3}\int_{\mathbb{R}^3}\int_{\mathbb{R}}
        \mathcal{K_K}\left(
        \pmb{\varphi}^{(0)}_i \rightarrow \pmb{\varphi}^{(0)}_r
        \right) \,
        \mathcal{S}^{(p+q)}\left(
        \pmb{\varphi}^{(0)}_r
        \mid
        \mathbf{n}^{(p+q-1)},\dots,\mathbf{n}^{(0)}
        \right) \\ \cdot \prod_{\alpha = 2}^{(p+q)} p_{n}^{(\alpha-1)}\left( \mathbf{n}^{(\alpha-1)} \, | \, \mathbf{n}^{(\alpha-2)}, \mathbf{v}^{(\alpha-2)}_{i}\right)
        \, \dif \pmb{\varphi}^{(0)}_r  \, \dif \mathbf{n}^{(p+q-1)} \dif \mathbf{n}^{(p+q-2)} \dots \dif \mathbf{n}^{(1)}
        >
        \varepsilon.
    \end{multline}
    Here, $\mathcal{S}^{(p+q)}\left(
    \pmb{\varphi}^{(0)}_{i},
    \mathbf{n}^{(p+q-1)}_{i},\dots,\mathbf{n}^{(0)}_{i}
    \mid
    \pmb{\varphi}^{(0)}_{r},
    \mathbf{n}^{(p+q-1)}_{r},\dots,\mathbf{n}^{(0)}_{r}
    \right)$, $\mathcal{S}^{(p+q)}\left(
    \pmb{\varphi}^{(0)}
    \mid
    \mathbf{n}^{(p+q-1)},\dots,\mathbf{n}^{(0)}
    \right)$, and $\mathcal{M}^{(p+q)}\left(
    \pmb{\varphi}^{(0)}
    \mid
    \mathbf{n}^{(p+q-1)},\dots,\mathbf{n}^{(0)}
    \right)$ denote, respectively, the two-point shadowing, one-point shadowing, and masking functions defined in \cref{def:shadowing_multi_scale}.
\end{theorem}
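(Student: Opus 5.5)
The proof splits into two parts: the composition identity \cref{eq:scattering_operator_composition}, and admissibility of the composed kernel.

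For the first part, I would construct $\Psi^{(p+q)}$ explicitly by concatenating the scale hierarchies. Frames $F_0,\dots,F_p$ are taken from $\Psi^{(p)}_1$. Frames $F_{p+1},\dots,F_{p+q}$ are taken from $\Psi^{(q)}_2$, re-expressed so that the global frame of $\Psi^{(q)}_2$ coincides with $F_p$. The conditional normal densities are defined by $p_{n_k}$ from $\Psi_1$ for $k\le p$, and by $p_{n_{k-p}}$ of $\Psi_2$, conditioned on $\mathbf{n}^{(p)}$, for $k>p$. By \cref{def:surface_multiscale}, the chained product of conditionals gives a valid $p_n^{(p+q)}$, so $\Psi^{(p+q)}\in\mathcal{P}$. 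The shadowing, masking and visible-flux functions of $\Psi^{(p+q)}$ are then the ones produced by \cref{def:shadowing_multi_scale,def:visible_flux_multiscale}. I would check that their product structure factorises into a $\Psi_1$-block (scales $1,\dots,p$) times a $\Psi_2$-block (scales $p+1,\dots,p+q$), which follows from the scale-separated form \cref{eq:shadowing_multiscale}.

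The core of the argument is then to expand the left-hand side. The inner kernel $\Psi^{(q)}_2\circ\mathcal{K_L}$ is a Neumann-type series over $n'$ reflections at the $\Psi_2$ scales. Substituting it as the local kernel inside the single-reflection kernel $\mathcal{K_K}$ of the outer operator, and then into the outer series over $n$ reflections, yields a double series with nested integrals. I would then reindex. An outer reflection $k$ containing $n'_k$ inner reflections corresponds to a run of $n'_k$ consecutive reflections of $\Psi^{(p+q)}$ along which the normals $\mathbf{n}^{(1:p)}$ are frozen. This is exactly the $\prod_{\ell=1}^{j-1}\delta(\mathbf{n}^{(\ell)}_k-\mathbf{n}^{(\ell)}_{k-1})$ term in the $j>p$ summands of $\mathcal{S}^{(p+q)}$. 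Transitions between outer reflections correspond to the $j\le p$ summands. I also need three cancellations:
\begin{itemize}
\item the prefactor $1/\langle\Dot{\mathcal{M}}^{(q)}\rangle$ of the inner operator against the $\langle\Dot{\mathcal{M}}_L\rangle$ weight in the outer $\mathcal{K_K}$;
\item the masking, hazard and one-point shadowing factors at the ends of each inner run, which combine into the multi-scale $\Dot{\mathcal{S}}^{(p+q)}$, $\Dot{\mathcal{M}}^{(p+q)}$ and $\mathcal{S}^{(p+q)}$ through their product formulas;
\item the visible-flux ratios, which telescope using the recursion in \cref{def:visible_flux_multiscale}.
\end{itemize}
Summing over all compositions $n=\sum_k n'_k$ then gives the single series in \cref{eq:scattering_operator}, provided the interchange of sums and integrals is justified. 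I would justify it by Tonelli, since every term is non-negative.

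For admissibility, I would apply the lemmas of \cref{sec:Roughness_kernel} to $\Psi^{(p+q)}$, treating the whole hierarchy as one surface with normal-dependent shadowing:
\begin{itemize}
\item \textbf{Non-negativity} follows as in \cref{lemma:nonnegativity,lemma:nonnegativity_multi_refl}.
\item \textbf{Pointwise reciprocity of $\mathcal{K_K}$} follows from \cref{lemma:global_reciprocity}, applied at the finest scale.
\item \textbf{Reciprocity of the full kernel} follows by repeating the chain reversal of \cref{lemma:reciprocity_multi_refl}. Condition \cref{eq:shadowing_multi_scale_condition_1} replaces \cref{eq:shadowing_condition_1}, and it carries the normal densities, so that the $p_{n_k}$ factors migrate correctly when the chain is reversed. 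The outer $\langle\Dot{\mathcal{M}}^{(m)}\rangle$ prefactor is precisely the weight that appears in \cref{eq:reciprocity_ultra_general}.
\item \textbf{Normalisation} follows the induction of \cref{lemma:normalisation_multi_refl}, with $\mathcal{O}$ now averaged over the intermediate normals. Conditions \cref{eq:shadowing_multi_scale_condition_2,eq:shadowing_multi_scale_condition_3} supply the telescoping identity and the geometric bound $(1-\varepsilon)^M$.
\item \textbf{Impermeability} is inherited because the outermost one-point shadowing vanishes for $\mathbf{v}_r^{(0)}\cdot\mathbf{n_G}<0$ and the masking factor vanishes for $\mathbf{v}_i^{(0)}\cdot\mathbf{n_G}>0$.
\end{itemize}

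The main obstacle is the combinatorial reindexing in the composition step. I need to show that the product of inner masking, shadowing and hazard factors at the boundaries of inner runs reproduces exactly the $j$-indexed sum in the definition of $\mathcal{S}^{(p+q)}$, with no double counting. I would first do the case $p=q=1$, $n\le 2$ explicitly to pin down the bookkeeping, and then argue by induction on $p+q$ using associativity of the chaining construction.
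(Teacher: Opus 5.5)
Your proposal is essentially the paper's proof. Like the paper, you substitute $\Psi^{(q)}_2\circ\mathcal{K_L}$ into the outer single-reflection kernels and cancel the inner $1/\langle\Dot{\mathcal{M}}^{(q)}\rangle$ against the outer local-masking weight. You then telescope the $\mathcal{Q}$ ratios, re-index the nested series by the total number of reflections, and match the two possible transitions (staying inside $\Psi^{(q)}_2$, or exiting, propagating over $\Psi^{(p)}_1$ and re-entering) with the $j>p$ and $j\le p$ summands of $\mathcal{S}^{(p+q)}$. Admissibility then follows from \cref{lemma:reciprocity_multi_refl,lemma:normalisation_multi_refl,lemma:nonnegativity_multi_refl}.

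The differences are minor:
\begin{itemize}
\item The paper merges the two $j$-sums in one direct substitution, so no induction on $p+q$ is needed.
\item The paper takes the outer normalising factor to be $\langle\Dot{\mathcal{M}}^{(p+q)}\rangle$, because frame $p$ still carries the unresolved $\Psi^{(q)}_2$ roughness. \cref{def:scattering_operator} literally prescribes $\langle\Dot{\mathcal{M}}^{(p)}\rangle$. None of your three cancellations produces this replacement, so you should state it explicitly.
\item The paper derives \cref{eq:shadowing_multi_scale_condition_1,eq:shadowing_multi_scale_condition_2,eq:shadowing_multi_scale_condition_3} from single-scale properties. You assume them, which the statement allows.
\item The paper never proves impermeability. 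Your incident-side argument for it fails as written: the masking factor that vanishes for $\mathbf{v}^{(0)}_i\cdot\mathbf{n_G}>0$ is divided by its own normal average, which gives $0/0$ rather than $0$.
\end{itemize}
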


\begin{proof}
   Let frames $F_0,\dots,F_p$ denote the reference-frame hierarchy associated with $\Psi^{(p)}_1$, with frame $F_0$ global to $\Psi^{(p)}_1$ in the sense of \cref{def:global_frame}, and frame $F_p$ local to $\Psi^{(p)}_1$ and global to $\Psi^{(q)}_2$ in the sense of \cref{def:local_frame}. Similarly, let frames $F_p,\dots,F_{p+q}$ denote the reference-frame hierarchy associated with $\Psi^{(q)}_2$, with frame $F_p$ global to $\Psi^{(q)}_2$, and frame $F_{p+q}$ local to $\Psi^{(q)}_2$. We write $\Dot{\mathcal{S}}^{(p)}$ and $\Dot{\mathcal{M}}^{(p)}$ for the shadowing and masking functions associated with $\Psi^{(p)}_1$, $\mathcal{S}^{(q)}$ and $\Dot{\mathcal{M}}^{(q)}$ for those associated with $\Psi^{(q)}_2$, and $\Dot{\mathcal{M}}^{(p+q)}$ for the masking function associated with the composed surface, in the sense of \cref{def:shadowing_multi_scale}. Moreover, we adopt the convention that $a^{(\ell)}_{k,j}$ denotes a quantity associated with reflection $k$ on $\Psi^{(p)}_1$ and reflection $j$ on $\Psi^{(q)}_2$, expressed in frame $\ell$, with $\ell \in \overline{p,p+q}$. Quantities associated only with reflections on $\Psi^{(p)}_1$ are written as $a^{(\ell)}_{k}$, with $\ell \in \overline{0,p}$. For notational compactness, we now switch to state notation, writing $\pmb{\varphi} = (\mathbf{r},\mathbf{v},t)$. Furthermore, we explicitly expand the kernels $\Psi^{(q)}_2 \circ \mathcal{K_L}$ appearing in the action of $\Psi^{(p)}_1$. Since the kernel acting in frame $p$ is now $\Psi^{(q)}_2 \circ \mathcal{K_L}$ rather than the smooth local kernel $\mathcal{K_L}$ itself, frame $p$ still contains the unresolved roughness of $\Psi^{(q)}_2$. We then start from the left-hand side of \cref{eq:scattering_operator_composition} to obtain
\begin{equation} \label{eq:multiscale_composition_1}
    \begin{aligned}
        & \left[\Psi^{(p)}_1 \circ \left( \Psi^{(q)}_2 \circ \mathcal{K_L}\right)\right]
        \left(\pmb{\varphi}^{(0)}_i \rightarrow \pmb{\varphi}^{(0)}_r \right)
         = \\
         & \qquad\qquad\qquad = \frac{1}{\left\langle\Dot{\mathcal{M}}^{(p+q)}\left(
            \pmb{\varphi}^{(0)}_i
            \right)\right\rangle} \sum_{n=1}^{\infty}
        \Bigg\{
        \underset{\pmb{\varphi}^{(0)}_{r_k},\pmb{\varphi}^{(0)}_{i_k}}{\int\cdots\int}
        \underset{\mathbf{n}^{(1)}_k,\dots,\mathbf{n}^{(p)}_k}{\int\cdots\int}
        \Dot{\mathcal{M}}^{(p)}\left(
            \pmb{\varphi}^{(0)}_i
            \,\Big|\,
            \mathbf{n}^{(p-1)}_{1},\dots,\mathbf{n}^{(1)}_{1},\mathbf{n}^{(0)}_{1}
            \right) \\
        & \qquad\qquad\qquad \cdot \left\langle\Dot{\mathcal{M}}^{(q)}\left(
            \pmb{\varphi}^{(p)}_i
            \right)\right\rangle \left[\Psi^{(q)}_2 \circ \mathcal{K_{L}} \right]
        \left(
        \pmb{\varphi}^{(p)}_i\rightarrow\pmb{\varphi}^{(p)}_{r_1}
        \right)
        \frac{\mathcal{Q}(\mathbf{v}^{(p)}_i \, | \, \mathbf{n}^{(p)}_1)}{\mathcal{Q}(\mathbf{v}^{(0)}_{i} \, | \, \mathbf{n}^{(0)}_1)} \, p_{n_1}^{(p)}\left(\mathbf{n}^{(p)}_1, \dots, \mathbf{n}^{(1)}_1 \, \Bigm| \, \mathbf{n}^{(0)}_1, \mathbf{v_{i_1}}^{(0)}\right) \\
        & \qquad\qquad\qquad \cdot 
         \prod_{k=2}^{n}
        \Bigg[
        \Dot{\mathcal{S}}^{(p)}\left(
        \pmb{\varphi}^{(0)}_{i_k},
        \mathbf{n}^{(p-1)}_k,\dots,\mathbf{n}^{(0)}_k
        \mid
        \pmb{\varphi}^{(0)}_{r_{k-1}},
        \mathbf{n}^{(p-1)}_{k-1},\dots,\mathbf{n}^{(0)}_{k-1}
        \right)\, \left\langle\Dot{\mathcal{M}}^{(q)}\left(
            \pmb{\varphi}^{(p)}_{i_k}
            \right)\right\rangle 
        \\
        & \qquad\qquad\qquad \cdot \left[\Psi^{(q)}_2 \circ \mathcal{K_L} \right]
        \left(
        \pmb{\varphi}^{(p)}_{i_k}\rightarrow\pmb{\varphi}^{(p)}_{r_k}
        \right)  \, \frac{\mathcal{Q}(\mathbf{v}^{(p)}_{i_k} \, | \, \mathbf{n}^{(p)}_{k})}{\mathcal{Q}(\mathbf{v}^{(0)}_{i_k} \, | \, \mathbf{n}^{(0)}_{k})} \,
        p_{n_k}^{(p)}\left(\mathbf{n}^{(p)}_{k} \, \Bigm| \, \mathbf{n}^{(p-1)}_{k}, \mathbf{v_{i_k}}^{(p-1)}\right)
        \Bigg]  \\
        & \qquad\qquad\qquad \cdot \mathcal{S}^{(p)}\left(
        \pmb{\varphi}_{r}^{(0)}
        \mid
        \mathbf{n}^{(p-1)},\dots,\mathbf{n}^{(0)}
        \right)
        \dif\pmb{\varphi}^{(0)}_{r_1}\cdots\dif\pmb{\varphi}^{(0)}_{r_{n-1}}\,
        \dif\pmb{\varphi}^{(0)}_{i_2}\cdots\dif\pmb{\varphi}^{(0)}_{i_n}\, \\
        & \qquad\qquad\qquad \cdot 
        \dif \mathbf{n}^{(1)}_{1}\cdots\dif \mathbf{n}^{(1)}_{n}\,
        \dots \dif \mathbf{n}^{(p)}_{1}\cdots\dif \mathbf{n}^{(p)}_{n}
        \Bigg\}.
    \end{aligned}
\end{equation}
where the deterministic global normal PDF is given by $p_{n_k}^{(0)}\left(\mathbf{n}^{(0)}_k \, | \, \mathbf{v}^{(0)}_{i_k}\right) = \delta\left(\mathbf{n}^{(0)}_k - \mathbf{n_G}\right)$, with $\mathbf{n_G}$ denoting the deterministic global surface normal as in \cref{def:global_frame}. Moreover, $f_{\varphi_q}$ denotes the hazard probability associated with the unresolved reflections on $\Psi^{(q)}_2$, while the factors $\left\langle\Dot{\mathcal{M}}^{(q)}\right\rangle$ appear because each effective kernel $\Psi^{(q)}_2 \circ \mathcal{K_L}$ carries its own masking normalisation in frame $p$. Next, we substitute $\Psi^{(q)}_2 \circ \mathcal{K_L}$ with \cref{eq:scattering_operator} everywhere in the expression above, to obtain
\begin{equation} \label{eq:scattering_operator_proof_1}
\begin{aligned}
    &\left[\Psi^{(p)}_1 \circ \left( \Psi^{(q)}_2 \circ \mathcal{K_L}\right)\right]
    \left(\pmb{\varphi}^{(0)}_i \rightarrow \pmb{\varphi}^{(0)}_r \right)
    =
    \\
    &\qquad
    \frac{1}{\left\langle\Dot{\mathcal{M}}^{(p+q)}\left(
            \pmb{\varphi}^{(0)}_i
            \right)\right\rangle} \, \sum_{n=1}^{\infty}
    \Bigg\{
    \underset{\pmb{\varphi}^{(0)}_{r_k},\pmb{\varphi}^{(0)}_{i_k}}{\int\cdots\int}
    \underset{\mathbf{n}^{(0)}_k,\dots,\mathbf{n}^{(p)}_k}{\int\cdots\int}
    \Dot{\mathcal{M}}^{(p)}\left(
        \pmb{\varphi}^{(0)}_i
        \,\Big|\,
        \mathbf{n}^{(p-1)}_{1},\dots,\mathbf{n}^{(1)}_{1},\mathbf{n}^{(0)}_{1}
    \right) \, \left\langle\Dot{\mathcal{M}}^{(q)}\left(
            \pmb{\varphi}^{(p)}_i
            \right)\right\rangle
    \\
    &\qquad\cdot \Bigg[
    \frac{1}{\left\langle\Dot{\mathcal{M}}^{(q)}\left(
            \pmb{\varphi}^{(p)}_i
            \right)\right\rangle} \, \sum_{m_1=1}^{\infty}
    \Bigg\{
    \underset{\pmb{\varphi}^{(p)}_{r_{1,j}},\pmb{\varphi}^{(p)}_{i_{1,j}}}{\int\cdots\int}
    \underset{\mathbf{n}^{(p)}_{1,j},\dots,\mathbf{n}^{(p+q)}_{1,j}}{\int\cdots\int}
    \Dot{\mathcal{M}}^{(q)}\left(
        \pmb{\varphi}^{(p)}_i
        \,\Big|\,
        \mathbf{n}^{(p+q-1)}_{1},\dots,\mathbf{n}^{(p+1)}_{1},\mathbf{n}^{(p)}_{1}
    \right) \, \left\langle\Dot{\mathcal{M}}_L\left(
            \pmb{\varphi}^{(p+q)}_i
            \right)\right\rangle  \\
    & \qquad\cdot \mathcal{K_L}\Big(
    \pmb{\varphi}^{(p+q)}_{i}\rightarrow\pmb{\varphi}^{(p+q)}_{r_{1,1}}
    \Big) \,\frac{\mathcal{Q}(\mathbf{v}^{(p+q)}_{i} \, | \, \mathbf{n}^{(p+q)}_{1,1})}
    {\mathcal{Q}(\mathbf{v}^{(p)}_{i} \, | \, \mathbf{n}^{(p)}_{1,1})}
    p_{n_{1,1}}^{(q)}\Big(
    \mathbf{n}^{(p+q)}_{1,1} \, \Bigm| \, \mathbf{n}^{(p+q-1)}_{1,1}, \mathbf{v_{i_{1,1}}}^{(p+q-1)}
    \Big) \, p_{n_{1,1}}^{(q)}\Big(
    \mathbf{n}^{(p+q-1 : p+1)}_{1,1} \, \Bigm| \, \mathbf{n}^{(p)}_{1,1}, \mathbf{v_{i_{1,1}}}^{(p)}
    \Big)
    \\
    &\qquad\cdot
    \prod_{j=2}^{m_1}
    \Bigg[
    \Dot{\mathcal{S}}^{(q)}\Big(
    \pmb{\varphi}^{(p)}_{i_{1,j}},
    \mathbf{n}^{(p+q-1)}_{1,j},\dots,\mathbf{n}^{(p)}_{1,j}
    \mid
    \pmb{\varphi}^{(p)}_{r_{1,j-1}},
    \mathbf{n}^{(p+q-1)}_{1,j-1},\dots,\mathbf{n}^{(p)}_{1,j-1}
    \Big)\, \left\langle\Dot{\mathcal{M}}_L\left(
            \pmb{\varphi}^{(p+q)}_{i_{1,j}}
            \right)\right\rangle \, \mathcal{K_L}\Big(
    \pmb{\varphi}^{(p+q)}_{i_{1,j}}\rightarrow\pmb{\varphi}^{(p+q)}_{r_{1,j}}
    \Big)
    \\
    &\qquad\qquad\cdot
    \frac{\mathcal{Q}(\mathbf{v}^{(p+q)}_{i_{1,j}} \, | \, \mathbf{n}^{(p+q)}_{1,j})}
    {\mathcal{Q}(\mathbf{v}^{(p)}_{i_{1,j}} \, | \, \mathbf{n}^{(p)}_{1,j})} \, p_{n_{1,j}}^{(q)}\Big(
    \mathbf{n}^{(p+q)}_{1,j} \, \Bigm| \, \mathbf{n}^{(p+q-1)}_{1,j}, \mathbf{v_{i_{1,j}}}^{(p+q-1)}
    \Big)
    \Bigg]
    \mathcal{S}^{(q)}\Big(
    \pmb{\varphi}^{(p)}_{r_1} \, \Bigm| \, \mathbf{n}^{(p+q-1)}_{1,m_1},\dots,\mathbf{n}^{(p)}_{1,m_1}
    \Big)
    \\
    &\qquad\cdot
    \dif\pmb{\varphi}^{(p)}_{r_{1,1}}\cdots\dif\pmb{\varphi}^{(p)}_{r_{1,m_1-1}}
    \dif\pmb{\varphi}^{(p)}_{i_{1,2}}\cdots\dif\pmb{\varphi}^{(p)}_{i_{1,m_1}}
    \dif \mathbf{n}^{(p)}_{1,1}\cdots\dif \mathbf{n}^{(p)}_{1,m_1}
    \cdots
    \dif \mathbf{n}^{(p+q)}_{1,1}\cdots\dif \mathbf{n}^{(p+q)}_{1,m_1}
    \Bigg\}
    \Bigg]
    \\
    &\qquad\cdot
    \frac{\mathcal{Q}(\mathbf{v}^{(p)}_{i} \, | \, \mathbf{n}^{(p)}_{1})}
    {\mathcal{Q}(\mathbf{v}^{(0)}_{i} \, | \, \mathbf{n}^{(0)}_{1})}
    p_{n_1}^{(p)}\left(\mathbf{n}^{(p:1)}_{1} \, \Bigm| \, \mathbf{n}^{(0)}_{1}, \mathbf{v_{i_1}}^{(0)}\right)
    \prod_{k=2}^{n}
    \Bigg[
    \Dot{\mathcal{S}}^{(p)}\Big(
    \pmb{\varphi}^{(0)}_{i_k},
    \mathbf{n}^{(p-1)}_k,\dots,\mathbf{n}^{(0)}_{k}
    \mid
    \pmb{\varphi}^{(0)}_{r_{k-1}},
    \mathbf{n}^{(p-1)}_{k-1},\dots,\mathbf{n}^{(0)}_{k-1}
    \Big) \left\langle\Dot{\mathcal{M}}^{(q)}\left(
            \pmb{\varphi}^{(p)}_{i_k}
            \right)\right\rangle
    \\
    &\qquad\cdot
    \Bigg[\frac{1}{\left\langle\Dot{\mathcal{M}}^{(q)}\left(
            \pmb{\varphi}^{(p)}_{i_k}
            \right)\right\rangle}
    \sum_{m_k=1}^{\infty}
    \Bigg\{
    \underset{\pmb{\varphi}^{(p)}_{r_{k,j}},\pmb{\varphi}^{(p)}_{i_{k,j}}}{\int\cdots\int}
    \underset{\mathbf{n}^{(p)}_{k,j},\dots,\mathbf{n}^{(p+q)}_{k,j}}{\int\cdots\int}
    \Dot{\mathcal{M}}^{(q)}\left(
        \pmb{\varphi}^{(p)}_{i_k}
        \,\Big|\,
        \mathbf{n}^{(p+q-1)}_{k},\dots,\mathbf{n}^{(p+1)}_{k},\mathbf{n}^{(p)}_{k}
    \right) \, \left\langle\Dot{\mathcal{M}}_L\left(
            \pmb{\varphi}^{(p+q)}_{i_{k,1}}
            \right)\right\rangle \\
    & \qquad \cdot
    \mathcal{K_L}\Big(
    \pmb{\varphi}^{(p+q)}_{i_k}\rightarrow\pmb{\varphi}^{(p+q)}_{r_{k,1}}
    \Big) \, \frac{\mathcal{Q}(\mathbf{v}^{(p+q)}_{i_k} \, | \, \mathbf{n}^{(p+q)}_{k,1})}
    {\mathcal{Q}(\mathbf{v}^{(p)}_{i_k} \, | \, \mathbf{n}^{(p)}_{k,1})} \, p_{n_{k,1}}^{(q)}\Big(
    \mathbf{n}^{(p+q)}_{k,1} \, \Bigm| \, \mathbf{n}^{(p+q-1)}_{k,1}, \mathbf{v_{i_{k,1}}}^{(p+q-1)}
    \Big) \, p_{n_{k,1}}^{(q)}\Big(
    \mathbf{n}^{(p+q-1:p+1)}_{k,1} \, \Bigm| \, \mathbf{n}^{(p)}_{k,1}, \mathbf{v_{i_{k,1}}}^{(p)}
    \Big)
    \\
    &\qquad\cdot
    \prod_{j=2}^{m_k}
    \Bigg[
    \Dot{\mathcal{S}}^{(q)}\Big(
    \pmb{\varphi}^{(p)}_{i_{k,j}},
    \mathbf{n}^{(p+q-1)}_{k,j},\dots,\mathbf{n}^{(p)}_{k,j}
    \mid
    \pmb{\varphi}^{(p)}_{r_{k,j-1}},
    \mathbf{n}^{(p+q-1)}_{k,j-1},\dots,\mathbf{n}^{(p)}_{k,j-1}
    \Big)\,\left\langle\Dot{\mathcal{M}}_L\left(
            \pmb{\varphi}^{(p+q)}_{i_{k,j}}
            \right)\right\rangle\, \mathcal{K_L}\Big(
    \pmb{\varphi}^{(p+q)}_{i_{k,j}}\rightarrow\pmb{\varphi}^{(p+q)}_{r_{k,j}}
    \Big)
    \\
    &\qquad\qquad\cdot
    \frac{\mathcal{Q}(\mathbf{v}^{(p+q)}_{i_{k,j}} \, | \, \mathbf{n}^{(p+q)}_{k,j})}
    {\mathcal{Q}(\mathbf{v}^{(p)}_{i_{k,j}} \, | \, \mathbf{n}^{(p)}_{k,j})}\, p_{n_{k,j}}^{(q)}\Big(
    \mathbf{n}^{(p+q)}_{k,j} \, \Bigm| \, \mathbf{n}^{(p+q-1)}_{k,j}, \mathbf{v_{i_{k,j}}}^{(p+q-1)}
    \Big)
    \Bigg]
    \mathcal{S}^{(q)}\Big(
    \pmb{\varphi}^{(p)}_{r_k} \, \Bigm| \, \mathbf{n}^{(p+q-1)}_{k,m_k},\dots,\mathbf{n}^{(p)}_{k,m_k}
    \Big)
    \\
    &\qquad\cdot
    \dif\pmb{\varphi}^{(p)}_{r_{k,1}}\cdots\dif\pmb{\varphi}^{(p)}_{r_{k,m_k-1}}
    \dif\pmb{\varphi}^{(p)}_{i_{k,2}}\cdots\dif\pmb{\varphi}^{(p)}_{i_{k,m_k}}
    \dif \mathbf{n}^{(p)}_{k,1}\cdots\dif \mathbf{n}^{(p)}_{k,m_k}
    \cdots
    \dif \mathbf{n}^{(p+q)}_{k,1}\cdots\dif \mathbf{n}^{(p+q)}_{k,m_k}
    \Bigg\}
    \Bigg]
    \\
    &\qquad\cdot
    \frac{\mathcal{Q}(\mathbf{v}^{(p)}_{i_k} \, | \, \mathbf{n}^{(p)}_{k})}
    {\mathcal{Q}(\mathbf{v}^{(0)}_{i_k} \, | \, \mathbf{n}^{(0)}_{k})}
    p_{n_k}^{(p)}\left(\mathbf{n}^{(p)}_{k} \, \Bigm| \, \mathbf{n}^{(p-1)}_{k}, \mathbf{v_{i_k}}^{(p-1)}\right)
    \Bigg]
    \mathcal{S}^{(p)}\left(
    \pmb{\varphi}^{(0)}_r
    \,\Bigm|\,
    \mathbf{n}^{(p-1)}_n,\dots,\mathbf{n}^{(0)}_n
    \right)
    \\
    &\qquad\cdot
    \dif\pmb{\varphi}^{(0)}_{r_1}\cdots\dif\pmb{\varphi}^{(0)}_{r_{n-1}}
    \dif\pmb{\varphi}^{(0)}_{i_2}\cdots\dif\pmb{\varphi}^{(0)}_{i_n}
    \dif \mathbf{n}^{(1)}_{1}\cdots\dif \mathbf{n}^{(1)}_{n}
    \cdots
    \dif \mathbf{n}^{(p)}_{1}\cdots\dif \mathbf{n}^{(p)}_{n}
    \Bigg\},
\end{aligned}
\end{equation}
where we introduced the short-hand notation of $p\Bigl(a^{(p+q:p)} \, | \, b\Bigr) = p\Bigl(a^{(p+q)}, a^{(p+q-1)}, \dots, a^{(p)} \, | \, b\Bigr)$. We notice that the above expression can be written as a nested sum, i.e.
\begin{equation}
\begin{aligned}
    \left[\Psi^{(p)}_1 \circ \left( \Psi^{(q)}_2 \circ \mathcal{K_L}\right)\right]
    \left(\pmb{\varphi}^{(0)}_i \rightarrow \pmb{\varphi}^{(0)}_r \right)
    =
    \sum_{n=1}^{\infty}
    \left(\sum_{m_1=1}^{\infty}
    \cdots
    \sum_{m_n=1}^{\infty}
    \mathcal{K_K}_{n,m_1,\dots,m_n}
    \left(\pmb{\varphi}^{(0)}_i \rightarrow \pmb{\varphi}^{(0)}_r \right)\right),
\end{aligned}
\end{equation}
where $\mathcal{K_K}_{n,m_1,\dots,m_n}\left(\pmb{\varphi}^{(0)}_i \rightarrow \pmb{\varphi}^{(0)}_r\right)$ represents the multi-reflection kernel contribution with exactly $n$ reflections on $\Psi^{(p)}_1$, and $m_k$ unresolved reflections on $\Psi^{(q)}_2$ during the $k^{th}$ reflection on $\Psi^{(p)}_1$, with $k \in \overline{1,n}$. To prove the theorem, we must rewrite this nested contribution into a single-sum form, i.e.
\begin{equation}
    \left[\Psi^{(p)}_1 \circ \left( \Psi^{(q)}_2 \circ \mathcal{K_L}\right)\right]
    \left(\pmb{\varphi}^{(0)}_i \rightarrow \pmb{\varphi}^{(0)}_r \right)
    = \sum_{n=1}^{\infty} \mathcal{K_{K}}_{n}\left(\pmb{\varphi}^{(0)}_i \rightarrow \pmb{\varphi}^{(0)}_r \right),
\end{equation}
where the kernel contribution with $n$ total reflections must take the same structural form as in \cref{eq:scattering_operator}, but with the composed scale hierarchy running from frame $0$ to frame $p+q$. To do so, we first use the fact that
$p_{n_{k,1}}^{(p)}\Big(\mathbf{n}^{(p)}_{k,1} \, | \, \mathbf{v}^{(0)}_{i_{k,1}}\Big) = \delta\Bigl(\mathbf{n}^{(p)}_{k,1} - \mathbf{n}^{(p)}_{k}\Bigr)$ to collapse the $\mathcal{Q}$ ratios as
\begin{equation}
    \frac{\mathcal{Q}\Bigl(\mathbf{v}^{(p+q)}_{i_{k,j}} \, \Big| \, \mathbf{n}^{(p+q)}_{k,j}\Bigr)}
    {\mathcal{Q}\Bigl(\mathbf{v}^{(p)}_{i_{k,j}} \, \Big| \, \mathbf{n}^{(p)}_{k,j}\Bigr)} \,
    \frac{\mathcal{Q}\Bigl(\mathbf{v}^{(p)}_{i_k} \, \Big| \, \mathbf{n}^{(p)}_{k}\Bigr)}
    {\mathcal{Q}\Bigl(\mathbf{v}^{(0)}_{i_k} \, \Big| \, \mathbf{n}^{(0)}_{k}\Bigr)}
    =
    \frac{\mathcal{Q}\Bigl(\mathbf{v}^{(p+q)}_{i_{k,j}} \, \Big| \, \mathbf{n}^{(p+q)}_{k,j}\Bigr)}
    {\mathcal{Q}\Bigl(\mathbf{v}^{(p)}_{i_k} \, \Big| \, \mathbf{n}^{(p)}_{k}\Bigr)} \,
    \frac{\mathcal{Q}\Bigl(\mathbf{v}^{(p)}_{i_k} \, \Big| \, \mathbf{n}^{(p)}_{k}\Bigr)}
    {\mathcal{Q}\Bigl(\mathbf{v}^{(0)}_{i_k} \, \Big| \, \mathbf{n}^{(0)}_{k}\Bigr)}
    =
    \frac{\mathcal{Q}\Bigl(\mathbf{v}^{(p+q)}_{i_{k,j}} \, \Big| \, \mathbf{n}^{(p+q)}_{k,j}\Bigr)}
    {\mathcal{Q}\Bigl(\mathbf{v}^{(0)}_{i_k} \, \Big| \, \mathbf{n}^{(0)}_{k}\Bigr)}.
\end{equation}
Furthermore, we immediately see that the normal-averaged masking functions simplify, leaving only the outer factor in the expression. Next, after re-indexing by the total number of local reflections $N = M_n$, where $M_k = \sum_{s=1}^{k} m_s$, we see that the nested summands in \cref{eq:scattering_operator_proof_1} contain all possible transition paths by which a particle initially in state $\pmb{\varphi}_i^{(0)}$ may reach the final state $\pmb{\varphi}_r^{(0)}$ after $N$ reflections. Indeed, between any two successive flattened collision points, $l-1$ and $l$, there are exactly two possible transitions: either the particle remains within the unresolved scale hierarchy of $\Psi^{(q)}_2$, or it exits $\Psi^{(q)}_2$, propagates through the scale hierarchy of $\Psi^{(p)}_1$, and re-enters $\Psi^{(q)}_2$ at the subsequent collision point. Hence, we define the local normal PDFs at reflection $l$ as
\begin{equation}
    \label{eq:flattened_normal_pdf}
    p_{n_l}^{(p+q)}\Big(
    \mathbf{n}^{(p+q)}_{l}
    \,\Bigm|\, 
    \mathbf{n}^{(p+q-1)}_{l}, \mathbf{v}^{(p+q-1)}_{i_l}
    \Big)
    =
    \begin{cases}
        p_{n_{k,1}}^{(q)}\Big(
        \mathbf{n}^{(p+q)}_{k,1}
        \,\Bigm|\, 
        \mathbf{n}^{(p+q-1)}_{k,1}, \mathbf{v}^{(p+q-1)}_{i_{k,1}}
        \Big), 
        & l = M_{k-1} + 1,
        \\[1em]
        p_{n_{k,l - M_{k-1}}}^{(q)}\Big(
        \mathbf{n}^{(p+q)}_{k,l - M_{k-1}}
        \,\Bigm|\, 
        \mathbf{n}^{(p+q-1)}_{k,l-M_{k-1}}, \mathbf{v}^{(p+q-1)}_{i_{k,l - M_{k-1}}}
        \Big),
        & M_{k-1} + 1 < l \le M_k.
    \end{cases}
\end{equation}
We further define 
\begin{multline}
    p_{n_l}^{(p+q-1)}\Bigl(
        \mathbf{n}^{(p+q-1:p+1)}_l \, \Big| \, \mathbf{n}^{(p)}_l, \mathbf{v_{i_l}}^{(p)}
        \Bigr) \, p_{n_l}^{(p)}\Bigl(
        \mathbf{n}^{(p)}_l \, \Big| \, \mathbf{n}^{(p-1)}_l, \mathbf{v_{i_l}}^{(p-1)}
        \Bigr) = \\ p_{n_{k,1}}^{(p+q-1)}\Bigl(
        \mathbf{n}^{(p+q-1:p+1)}_{k,1} \, \Big| \, \mathbf{n}^{(p)}_{k,1}, \mathbf{v_{i_l}}^{(p)}
        \Bigr) \, p_{n_k}^{(p)}\Bigl(
        \mathbf{n}^{(p)}_k \, \Big| \, \mathbf{n}^{(p-1)}_k, \mathbf{v_{i_l}}^{(p-1)}
        \Bigr)
\end{multline}
and the incident and reflected states, $\pmb{\varphi}^{(p+q)}_{i_l}$ and $\pmb{\varphi}^{(p+q)}_{r_l}$, as
\begin{equation}
    \pmb{\varphi}^{(p+q)}_{i_l} =
    \begin{cases}
        \pmb{\varphi}^{(p+q)}_{i_{k,1}}, & l = M_{k-1} + 1, \\
        \pmb{\varphi}^{(p+q)}_{i_{k,l-M_{k-1}}}, & M_{k-1} + 1 < l \leq M_k,
    \end{cases}
\end{equation}
\begin{equation}
    \pmb{\varphi}^{(p+q)}_{r_l} =
    \begin{cases}
        \pmb{\varphi}^{(p+q)}_{r_{k,1}}, & l = M_{k-1} + 1, \\
        \pmb{\varphi}^{(p+q)}_{r_{k,l-M_{k-1}}}, & M_{k-1} + 1 < l \leq M_k.
    \end{cases}
\end{equation}
Making these substitutions in \cref{eq:scattering_operator_proof_1}, we obtain
\begin{equation} \label{eq:scattering_operator_proof_2}
    \begin{aligned}
        &\left[\Psi^{(p)}_1 \circ \left( \Psi^{(q)}_2 \circ \mathcal{K_L}\right)\right]
        \left(\pmb{\varphi}^{(0)}_i \rightarrow \pmb{\varphi}^{(0)}_r \right)
        =
        \\
        &\qquad
        \frac{1}{\left\langle\Dot{\mathcal{M}}^{(p+q)}\left(
            \pmb{\varphi}^{(0)}_i
            \right)\right\rangle}
        \sum_{N=1}^{\infty}
        \Bigg\{
        \underset{\pmb{\varphi}^{(0)}_{r_l},\pmb{\varphi}^{(0)}_{i_l}}{\int\cdots\int}
        \underset{\mathbf{n}^{(1)}_l,\dots,\mathbf{n}^{(p+q)}_l}{\int\cdots\int}
        \Dot{\mathcal{M}}\left(
            \pmb{\varphi}^{(0)}_i
            \,\Big|\,
            \mathbf{n}^{(p+q-1)}_1,\dots,\mathbf{n}^{(0)}_1
        \right)
        \left\langle\Dot{\mathcal{M}}_L\left(
            \pmb{\varphi}^{(p+q)}_{i}
            \right)\right\rangle\, \mathcal{K_L}
        \left(
        \pmb{\varphi}^{(p+q)}_i\rightarrow\pmb{\varphi}^{(p+q)}_{r_1}
        \right)
        \\
        &\qquad\cdot
        \frac{\mathcal{Q}(\mathbf{v}^{(p+q)}_{i} \, | \, \mathbf{n}^{(p+q)}_1)}
        {\mathcal{Q}(\mathbf{v}^{(0)}_{i} \, | \, \mathbf{n}^{(0)}_1)} \, p_{n_1}^{(p+q)}\left(\mathbf{n}^{(p+q:1)}_1 \, \Bigm| \, \mathbf{n}^{(0)}_1, \mathbf{v_{i_1}}^{(0)}\right) \, \prod_{l=2}^{N}
        \Bigg[
        \Dot{\mathcal{S}}\left(
        \pmb{\varphi}^{(0)}_{i_l},
        \mathbf{n}^{(p+q-1)}_l,\dots,\mathbf{n}^{(0)}_l
        \mid
        \pmb{\varphi}^{(0)}_{r_{l-1}},
        \mathbf{n}^{(p+q-1)}_{l-1},\dots,\mathbf{n}^{(0)}_{l-1}
        \right)
        \\
        &\qquad\cdot
         \left\langle\Dot{\mathcal{M}}_L\left(
            \pmb{\varphi}^{(p+q)}_{i_{l}}
            \right)\right\rangle\,\mathcal{K_L}
        \left(
        \pmb{\varphi}^{(p+q)}_{i_l}\rightarrow\pmb{\varphi}^{(p+q)}_{r_l}
        \right) \, \frac{\mathcal{Q}(\mathbf{v}^{(p+q)}_{i_l} \, | \, \mathbf{n}^{(p+q)}_{l})}
        {\mathcal{Q}(\mathbf{v}^{(0)}_{i_l} \, | \, \mathbf{n}^{(0)}_{l})}\, p_{n_l}^{(p+q)}\left(\mathbf{n}^{(p+q)}_{l} \, \Bigm| \, \mathbf{n}^{(p+q-1)}_{l}, \mathbf{v_{i_l}}^{(p+q-1)}\right)
        \Bigg] 
        \\
        &\qquad\cdot 
        \mathcal{S}\left(
        \pmb{\varphi}^{(0)}_r
        \,\Big|\,
        \mathbf{n}^{(p+q-1)}_N,\dots,\mathbf{n}^{(0)}_N
        \right)
        \dif\pmb{\varphi}^{(0)}_{r_1}\cdots\dif\pmb{\varphi}^{(0)}_{r_{N-1}}\,
        \dif\pmb{\varphi}^{(0)}_{i_2}\cdots\dif\pmb{\varphi}^{(0)}_{i_N}\,
        \dif \mathbf{n}^{(0)}_{1}\cdots\dif \mathbf{n}^{(0)}_{N}
        \cdots
        \dif \mathbf{n}^{(p+q)}_{1}\cdots\dif \mathbf{n}^{(p+q)}_{N}
        \Bigg\},
    \end{aligned}
\end{equation}
where the composite shadowing function $\mathcal{S}\left(\pmb{\varphi}^{(0)}_{i_l},\mathbf{n}^{(p+q-1)}_l,\dots,\mathbf{n}^{(0)}_l \mid \pmb{\varphi}^{(0)}_{r_{l-1}},\mathbf{n}^{(p+q-1)}_{l-1},\dots,\mathbf{n}^{(0)}_{l-1}\right)$ takes the form
\begin{equation} \label{eq:multi_scale_shadowing_composition_proof}
    \begin{aligned}
        &\mathcal{S}\Bigl(
        \pmb{\varphi}^{(0)}_{i_l},
        \mathbf{n}^{(p+q-1)}_l,\dots,\mathbf{n}^{(0)}_l
        \mid
        \pmb{\varphi}^{(0)}_{r_{l-1}},
        \mathbf{n}^{(p+q-1)}_{l-1},\dots,\mathbf{n}^{(0)}_{l-1}
        \Bigr)
        =
        \\
        &\qquad
        \mathcal{S}^{(q)}\Bigl(
        \pmb{\varphi}^{(p)}_{i_l},
        \mathbf{n}^{(p+q-1)}_l,\dots,\mathbf{n}^{(p)}_l
        \mid
        \pmb{\varphi}^{(p)}_{r_{l-1}},
        \mathbf{n}^{(p+q-1)}_{l-1},\dots,\mathbf{n}^{(p)}_{l-1}
        \Bigr)
        \prod_{s=0}^{p}
        \delta\Bigl(
        \mathbf{n}^{(s)}_l - \mathbf{n}^{(s)}_{l-1}
        \Bigr)
        \\
        &\qquad+
        \mathcal{S}^{(q)}\Bigl(
        \pmb{\varphi}^{(p)}_{r_{l-1}}
        \,\Big|\,
        \mathbf{n}^{(p+q-1)}_{l-1},\dots,\mathbf{n}^{(p)}_{l-1}
        \Bigr)
        \mathcal{S}^{(p)}\Bigl(
        \pmb{\varphi}^{(0)}_{i_l},
        \mathbf{n}^{(p-1)}_l,\dots,\mathbf{n}^{(0)}_l
        \mid
        \pmb{\varphi}^{(0)}_{r_{l-1}},
        \mathbf{n}^{(p-1)}_{l-1},\dots,\mathbf{n}^{(0)}_{l-1}
        \Bigr)
        \\
        &\qquad\cdot
        \mathcal{M}^{(q)}\Bigl(
        \pmb{\varphi}^{(p)}_{i_l}
        \,\Big|\,
        \mathbf{n}^{(p+q-1)}_l,\dots,\mathbf{n}^{(p)}_l
        \Bigr) \,
        p_{n_l}^{(p+q-1)}\Bigl(
        \mathbf{n}^{(p+q-1:p+1)}_l \, \Big| \, \mathbf{n}^{(p)}_l, \mathbf{v}_{i_l}^{(p)}
        \Bigr) \, p_{n_l}^{(p)}\Bigl(
        \mathbf{n}^{(p)}_l \, \Big| \, \mathbf{n}^{(p-1)}_l, \mathbf{v}_{i_l}^{(p-1)}
        \Bigr).
    \end{aligned}
\end{equation}
Our objective is now to show that the above equation is, in fact, $\mathcal{S}^{(p+q)}$, as given by \cref{def:shadowing_multi_scale}. Furthermore, for a scale $(\alpha)$ and reflection $l$, we denote the single-scale one-point shadowing function $\mathcal{S}\left(\pmb{\varphi}^{(\alpha-1)}_{r_{l-1}} \, | \, \mathbf{n}^{(\alpha-1)}_{l-1}\right)$ as $\mathcal{S}^{1P}_\alpha$, the single-scale, two-point shadowing function $\mathcal{S}\left(\pmb{\varphi}^{(\alpha-1)}_{i_l} \, | \,  \pmb{\varphi}^{(\alpha-1)}_{r_{l-1}}, \mathbf{n}^{(\alpha-1)}_{l-1}\right)$ as $\mathcal{S}^{2P}_\alpha$, the single-scale one-point masking function $\mathcal{M}\left(\pmb{\varphi}^{(\alpha-1)}_{i_{l}} \, | \, \mathbf{n}^{(\alpha-1)}_{l-1}\right)$ as $\mathcal{M}^{1P}_\alpha$, $p_{n_l}^{(\alpha)}\left(\mathbf{n}^{(\alpha)}_l \, | \, \mathbf{n}^{(\alpha-1)}_l \right)$ as $p_{n_l}^{(\alpha)}$, and $\delta\left(\mathbf{n}^{(\alpha)}_l - \mathbf{n}^{(\alpha)}_{l-1} \right)$ as $\delta_{n_l}^{(\alpha)}$. Then, we may write
\begin{multline}
    \mathcal{S}^{(q)}\Bigl(
        \pmb{\varphi}^{(p)}_{i_l},
        \mathbf{n}^{(p+q-1)}_l,\dots,\mathbf{n}^{(p)}_l
        \mid
        \pmb{\varphi}^{(p)}_{r_{l-1}},
        \mathbf{n}^{(p+q-1)}_{l-1},\dots,\mathbf{n}^{(p)}_{l-1}
        \Bigr) \\ = \sum_{j=p+1}^{p+q}\left\{\left[ \prod_{\alpha=j+1}^{p+q} \mathcal{S}^{1P}_\alpha\right] \, \mathcal{S}^{2P}_j \left[\prod_{\alpha=j+1}^{p+q} \mathcal{M}^{1P}_\alpha p_{n_l}^{(\alpha-1)} \right] \, \left[\prod_{\alpha=p+1}^{j-1} \delta_{n_l}^{(\alpha)} \right] \right\},
\end{multline}
\begin{multline}
    \mathcal{S}^{(p)}\Bigl(
        \pmb{\varphi}^{(0)}_{i_l},
        \mathbf{n}^{(p-1)}_l,\dots,\mathbf{n}^{(0)}_l
        \mid
        \pmb{\varphi}^{(0)}_{r_{l-1}},
        \mathbf{n}^{(p-1)}_{l-1},\dots,\mathbf{n}^{(0)}_{l-1}
        \Bigr) \\ = \sum_{j=1}^{p}\left\{\left[ \prod_{\alpha=j+1}^{p} \mathcal{S}^{1P}_\alpha\right] \, \mathcal{S}^{2P}_j \left[\prod_{\alpha=j+1}^{p} \mathcal{M}^{1P}_\alpha p_{n_l}^{(\alpha-1)} \right] \, \left[\prod_{\alpha=1}^{j-1} \delta_{n_l}^{(\alpha)} \right] \right\},
\end{multline}
\begin{equation}
    \mathcal{S}^{(q)}\Bigl(
        \pmb{\varphi}^{(p)}_{r_{l-1}}
        \,\Big|\,
        \mathbf{n}^{(p+q-1)}_{l-1},\dots,\mathbf{n}^{(p)}_{l-1}
        \Bigr) = \prod_{\alpha=p+1}^{p+q} \mathcal{S}^{1P}_\alpha \text{ and } \mathcal{M}^{(q)}\Bigl(
        \pmb{\varphi}^{(p)}_{i_{l}}
        \,\Big|\,
        \mathbf{n}^{(p+q-1)}_{l-1},\dots,\mathbf{n}^{(p)}_{l-1}
        \Bigr) = \prod_{\alpha=p+1}^{p+q} \mathcal{M}^{1P}_\alpha.
\end{equation}
Substituting the above equations into \cref{eq:multi_scale_shadowing_composition_proof}, we obtain
\begin{equation}
    \begin{aligned}
        &\mathcal{S}\Bigl(
            \pmb{\varphi}^{(0)}_{i_l},
            \mathbf{n}^{(p+q-1)}_l,\dots,\mathbf{n}^{(0)}_l
            \mid
            \pmb{\varphi}^{(0)}_{r_{l-1}},
            \mathbf{n}^{(p+q-1)}_{l-1},\dots,\mathbf{n}^{(0)}_{l-1}
        \Bigr)
        \\
        &\quad =
        \sum_{j=p+1}^{p+q}
        \Biggl\{
            \Biggl[
                \prod_{\alpha=j+1}^{p+q}
                \mathcal{S}^{1P}_{\alpha}
            \Biggr]
            \mathcal{S}^{2P}_{j}
            \Biggl[
                \prod_{\alpha=j+1}^{p+q}
                \mathcal{M}^{1P}_{\alpha}
                p_{n_l}^{(\alpha-1)}
            \Biggr]
            \Biggl[
                \prod_{\alpha=p+1}^{j-1}
                \delta_{n_l}^{(\alpha)}
            \Biggr]
        \Biggr\}
        \Biggl[
            \prod_{\alpha=1}^{p}
            \delta_{n_l}^{(\alpha)}
        \Biggr]
        \\
        &\qquad\qquad+
        \Biggl[
            \prod_{\alpha=p+1}^{p+q}
            \mathcal{S}^{1P}_{\alpha}
        \Biggr]
        \sum_{j=1}^{p}
        \Biggl\{
            \Biggl[
                \prod_{\alpha=j+1}^{p}
                \mathcal{S}^{1P}_{\alpha}
            \Biggr]
            \mathcal{S}^{2P}_{j}
            \Biggl[
                \prod_{\alpha=j+1}^{p}
                \mathcal{M}^{1P}_{\alpha}
                p_{n_l}^{(\alpha-1)}
            \Biggr]
            \Biggl[
                \prod_{\alpha=1}^{j-1}
                \delta_{n_l}^{(\alpha)}
            \Biggr]
        \Biggr\}
        \Biggl[
            \prod_{\alpha=p+1}^{p+q}
            \mathcal{M}^{1P}_{\alpha}
        \Biggr]
        \Biggl[
            \prod_{\alpha=p+1}^{p+q}
            p_{n_l}^{(\alpha-1)}
        \Biggr]
        \\
        &\quad =
        \sum_{j=p+1}^{p+q}
        \Biggl\{
            \Biggl[
                \prod_{\alpha=j+1}^{p+q}
                \mathcal{S}^{1P}_{\alpha}
            \Biggr]
            \mathcal{S}^{2P}_{j}
            \Biggl[
                \prod_{\alpha=j+1}^{p+q}
                \mathcal{M}^{1P}_{\alpha}
                p_{n_l}^{(\alpha-1)}
            \Biggr]
            \Biggl[
                \prod_{\alpha=1}^{j-1}
                \delta_{n_l}^{(\alpha)}
            \Biggr]
        \Biggr\}
        \\
        &\qquad\qquad+
        \sum_{j=1}^{p}
        \Biggl\{
            \Biggl[
                \prod_{\alpha=j+1}^{p+q}
                \mathcal{S}^{1P}_{\alpha}
            \Biggr]
            \mathcal{S}^{2P}_{j}
            \Biggl[
                \prod_{\alpha=j+1}^{p+q}
                \mathcal{M}^{1P}_{\alpha}
                p_{n_l}^{(\alpha-1)}
            \Biggr]
            \Biggl[
                \prod_{\alpha=1}^{j-1}
                \delta_{n_l}^{(\alpha)}
            \Biggr]
        \Biggr\}
        \\
        &\quad =
        \sum_{j=1}^{p+q}
        \Biggl\{
            \Biggl[
                \prod_{\alpha=j+1}^{p+q}
                \mathcal{S}^{1P}_{\alpha}
            \Biggr]
            \mathcal{S}^{2P}_{j}
            \Biggl[
                \prod_{\alpha=j+1}^{p+q}
                \mathcal{M}^{1P}_{\alpha}
                p_{n_l}^{(\alpha-1)}
            \Biggr]
            \Biggl[
                \prod_{\alpha=1}^{j-1}
                \delta_{n_l}^{(\alpha)}
            \Biggr]
        \Biggr\}
        \\
        &\quad =
        \mathcal{S}^{(p+q)}\Bigl(
            \pmb{\varphi}^{(0)}_{i_l},
            \mathbf{n}^{(p+q-1)}_l,\dots,\mathbf{n}^{(0)}_l
            \mid
            \pmb{\varphi}^{(0)}_{r_{l-1}},
            \mathbf{n}^{(p+q-1)}_{l-1},\dots,\mathbf{n}^{(0)}_{l-1}
        \Bigr).
    \end{aligned}
\end{equation}
Further, it is trivial to show that the composite shadowing and masking functions become
\begin{equation}
    \mathcal{M}\left(
            \pmb{\varphi}^{(0)}_i
            \,\Big|\,
            \mathbf{n}^{(p+q-1)}_1,\dots,\mathbf{n}^{(0)}_1
        \right) = \prod_{\alpha=1}^{p} \mathcal{M}^{1P}_\alpha \, \prod_{\alpha=p+1}^{p+q} \mathcal{M}^{1P}_\alpha = \prod_{\alpha=1}^{p+q} \mathcal{M}^{1P}_\alpha = \mathcal{M}^{(p+q)}\left(
            \pmb{\varphi}^{(0)}_i
            \,\Big|\,
            \mathbf{n}^{(p+q-1)}_1,\dots,\mathbf{n}^{(0)}_1
        \right)
\end{equation}
and
\begin{equation}
    \mathcal{S}\left(
            \pmb{\varphi}^{(0)}_{r}
            \,\Big|\,
            \mathbf{n}^{(p+q-1)}_1,\dots,\mathbf{n}^{(0)}_1
        \right) = \prod_{\alpha=1}^{p} \mathcal{S}^{1P}_\alpha \, \prod_{\alpha=p+1}^{p+q} \mathcal{S}^{1P}_\alpha = \prod_{\alpha=1}^{p+q} \mathcal{S}^{1P}_\alpha = \mathcal{S}^{(p+q)}\left(
            \pmb{\varphi}^{(0)}_r
            \,\Big|\,
            \mathbf{n}^{(p+q-1)}_1,\dots,\mathbf{n}^{(0)}_1
        \right),
\end{equation}
respectively. Hence, \cref{eq:scattering_operator_proof_2} becomes
\begin{equation} \label{eq:scattering_operator_proof_3}
    \begin{aligned}
        &\left[\Psi^{(p)}_1 \circ \left( \Psi^{(q)}_2 \circ \mathcal{K_L}\right)\right]
        \left(\pmb{\varphi}^{(0)}_i \rightarrow \pmb{\varphi}^{(0)}_r \right)
        =
        \\
        &\qquad
        =\frac{1}{\left\langle\Dot{\mathcal{M}}^{(p+q)}\left(
            \pmb{\varphi}^{(0)}_i
            \right)\right\rangle}
        \sum_{N=1}^{\infty}
        \Bigg\{
        \underset{\pmb{\varphi}^{(0)}_{r_l},\pmb{\varphi}^{(0)}_{i_l}}{\int\cdots\int}
        \underset{\mathbf{n}^{(0)}_l,\dots,\mathbf{n}^{(p+q)}_l}{\int\cdots\int}
        \Dot{\mathcal{M}}^{(p+q)}\left(
            \pmb{\varphi}^{(0)}_i
            \,\Big|\,
            \mathbf{n}^{(p+q-1)}_1,\dots,\mathbf{n}^{(0)}_1
        \right) \, \left\langle\Dot{\mathcal{M}}_L\left(
            \pmb{\varphi}^{(p+q)}_{i}
            \right)\right\rangle
        \\
        &\qquad\cdot
        \mathcal{K_L}
        \left(
        \pmb{\varphi}^{(p+q)}_i\rightarrow\pmb{\varphi}^{(p+q)}_{r_1}
        \right) \, \frac{\mathcal{Q}(\mathbf{v}^{(p+q)}_{i} \, | \, \mathbf{n}^{(p+q)}_1)}
        {\mathcal{Q}(\mathbf{v}^{(0)}_{i} \, | \, \mathbf{n}^{(0)}_1)} \, p_{n_1}^{(q)}\left(\mathbf{n}^{(p+q:p+1)}_1 \, \Bigm| \, \mathbf{n}^{(p)}_1, \mathbf{v_{i_1}}^{(p)}\right)
        p_{n_1}^{(p)}\left(\mathbf{n}^{(p:1)}_1 \, \Bigm| \, \mathbf{n}^{(0)}_1, \mathbf{v_{i_1}}^{(0)}\right)
        \\
        &\qquad\cdot
        \prod_{l=2}^{N}
        \Bigg[
        \Dot{\mathcal{S}}^{(p+q)}\left(
        \pmb{\varphi}^{(0)}_{i_l},
        \mathbf{n}^{(p+q-1)}_l,\dots,\mathbf{n}^{(0)}_l
        \mid
        \pmb{\varphi}^{(0)}_{r_{l-1}},
        \mathbf{n}^{(p+q-1)}_{l-1},\dots,\mathbf{n}^{(0)}_{l-1}
        \right) \, \left\langle\Dot{\mathcal{M}}_L\left(
            \pmb{\varphi}^{(p+q)}_{i_{l}}
            \right)\right\rangle\, \mathcal{K_L}
        \left(
        \pmb{\varphi}^{(p+q)}_{i_l}\rightarrow\pmb{\varphi}^{(p+q)}_{r_l}
        \right)
        \\
        &\qquad\cdot \frac{\mathcal{Q}(\mathbf{v}^{(p+q)}_{i_l} \, | \, \mathbf{n}^{(p+q)}_{l})}
        {\mathcal{Q}(\mathbf{v}^{(0)}_{i_l} \, | \, \mathbf{n}^{(0)}_{l})}
        p_{n_l}^{(q)}\left(\mathbf{n}^{(p+q)}_{l} \, \Bigm| \, \mathbf{n}^{(p+q-1)}_{l}, \mathbf{v_{i_l}}^{(p+q-1)}\right)
        \Bigg] \,
        \mathcal{S}^{(p+q)}\left(
        \pmb{\varphi}^{(0)}_r
        \,\Big|\,
        \mathbf{n}^{(p+q-1)}_N,\dots,\mathbf{n}^{(0)}_N
        \right)
        \\
        &\qquad\cdot
        \dif\pmb{\varphi}^{(0)}_{r_1}\cdots\dif\pmb{\varphi}^{(0)}_{r_{N-1}}\,
        \dif\pmb{\varphi}^{(0)}_{i_2}\cdots\dif\pmb{\varphi}^{(0)}_{i_N}\,
        \dif \mathbf{n}^{(1)}_{1}\cdots\dif \mathbf{n}^{(1)}_{N}
        \cdots
        \dif \mathbf{n}^{(p+q)}_{1}\cdots\dif \mathbf{n}^{(p+q)}_{N}
        \Bigg\} \\
        & \qquad =
            \frac{1}{\left\langle\Dot{\mathcal{M}}^{(p+q)}\left(
            \pmb{\varphi}^{(0)}_i
            \right)\right\rangle}
            \sum_{N=1}^{\infty}
            \Biggl\{
            \int \cdots \int
            \Dot{\mathcal{M}}^{(p+q)}\left(
            \pmb{\varphi}^{(0)}_i
            \,\Big|\,
            \mathbf{n}^{(p+q-1)}_{1},\dots,\mathbf{n}^{(1)}_{1},\mathbf{n}^{(0)}_{1}
            \right) \, p_{n_l}^{(p+q-1)}\left(\mathbf{n}^{(p+q-1:1)}_1 \, | \, \mathbf{n}^{(0)}_1, \mathbf{v}^{(0)}_{i_1}\right)
            \\
            &\qquad\cdot
            \mathcal{K_K}\left(
            \pmb{\varphi}^{(0)}_i \rightarrow \pmb{\varphi}^{(0)}_{r_1}
            \right) \, \prod_{l=2}^{N}
            \Biggl[
            \Dot{\mathcal{S}}^{(p+q)}\Bigl(
            \pmb{\varphi}^{(0)}_{i_{l}},\,
            \mathbf{n}^{(p+q-1)}_l,\dots,\mathbf{n}^{(0)}_l
            \,\Big|\,
            \pmb{\varphi}^{(0)}_{r_{l-1}},\,
            \mathbf{n}^{(p+q-1)}_{l-1},\dots,\mathbf{n}^{(0)}_{l-1}
            \Bigr) \, \mathcal{K_K}\left(
            \pmb{\varphi}^{(0)}_{i_l} \rightarrow \pmb{\varphi}^{(0)}_{r_l}
            \right)
            \Biggr]
            \\
            &\qquad\cdot
            \mathcal{S}^{(p+q)}\left(
            \pmb{\varphi}^{(0)}_r
            \,\Big|\,
            \mathbf{n}^{(p+q-1)}_{N},\dots,\mathbf{n}^{(1)}_{N},\mathbf{n}^{(0)}_{N}
            \right) \, \dif\pmb{\varphi}^{(0)}_{r_1}\cdots\dif\pmb{\varphi}^{(0)}_{r_{N-1}}\,
            \dif\pmb{\varphi}^{(0)}_{i_2}\cdots\dif\pmb{\varphi}^{(0)}_{i_N} \\
            & \qquad\cdot
            \dif\mathbf{n}^{(1)}_1\cdots\dif\mathbf{n}^{(1)}_N \, \dots \, \dif\mathbf{n}^{(p+q-1)}_1\cdots\dif\mathbf{n}^{(p+q-1)}_N
            \Biggr\} \\
        &\qquad = \left[\Psi^{(p+q)} \circ \mathcal{K_L}\right]\left(\pmb{\varphi}^{(0)}_i \rightarrow \pmb{\varphi}^{(0)}_r \right),
    \end{aligned}
\end{equation}
where $\Psi^{(p+q)}$ is a new surface with $p+q$ scales in the sense of \cref{def:surface_multiscale}. To complete the proof, we must show that $\Psi^{(p+q)} \circ \mathcal{K_L} \in \mathcal{T}^s$, i.e. that it satisfies impermeability, reciprocity, normalisation and non-negativity in the sense of \cref{def:impermeability,def:reciprocity,def:normalisation,def:nonnegativity}, provided that $\mathcal{K_L} \in \mathcal{T}^p$. From the structure of \cref{eq:scattering_operator}, the operator $\circ$ is a direct generalisation of the multi-reflection kernel in \cref{eq:multi_reflection_kernel}. The only difference is that the shadowing and masking functions may depend on the complete hierarchy of surface normals from $\mathbf{n}^{(0)}$ to $\mathbf{n}^{(p+q-1)}$. Therefore, \cref{lemma:reciprocity_multi_refl,lemma:normalisation_multi_refl,lemma:nonnegativity_multi_refl} apply directly. Furthermore, the restrictions imposed on the shadowing and masking functions in \cref{lemma:reciprocity_multi_refl,lemma:normalisation_multi_refl} carry over directly to the multi-scale setting, where they take the form of \cref{eq:shadowing_multi_scale_condition_1,eq:shadowing_multi_scale_condition_2,eq:shadowing_multi_scale_condition_3}. Hence, we must show that $\mathcal{S}^{(p+q)}$ complies with these conditions. We first focus on \cref{eq:shadowing_multi_scale_condition_1}. Substituting the definition of the multi-scale shadowing function from \cref{def:shadowing_multi_scale}, we obtain
\begin{equation}
\begin{aligned}
        & p^{(p+q-1)}_{n_{l-1}}\Bigl(
        \mathbf{n}^{(p+q-1:1)}_{l-1}
        \,\Big|\,
        \mathbf{n}^{(0)}_{l-1}, \mathbf{v}^{(0)}_{i_{l-1}}
        \Bigr)
        \, \mathcal{S}^{(p+q)}\left(
        \pmb{\varphi}^{(0)}_{i_l},
        \mathbf{n}^{(p+q-1)}_l,\dots,\mathbf{n}^{(0)}_l
        \mid
        \pmb{\varphi}^{(0)}_{r_{l-1}},
        \mathbf{n}^{(p+q-1)}_{l-1},\dots,\mathbf{n}^{(0)}_{l-1}
        \right) = \\
        &\qquad =
        p^{(p+q-1)}_{n_{l-1}}\Bigl(
        \mathbf{n}^{(p+q-1:1)}_{l-1}
        \,\Big|\,
        \mathbf{n}^{(0)}_{l-1}, \mathbf{v}^{(0)}_{i_{l-1}}
        \Bigr)
        \, \sum_{j=1}^{p+q}
        \Biggl\{
        \Biggl[
        \prod_{\alpha=j+1}^{p+q}
        \mathcal{S}_{\alpha}\Bigl(
        \pmb{\varphi}^{(\alpha-1)}_{r_{l-1}} \, | \, \mathbf{n}^{(\alpha-1)}_{l-1}
        \Bigr)
        \Biggr]
        \mathcal{S}_{j}\Bigl(
        \pmb{\varphi}^{(j-1)}_{i_l}
        \,\Big|\,
        \pmb{\varphi}^{(j-1)}_{r_{l-1}}
        \Bigr)
        \\
        &\qquad\qquad\cdot
        \Biggl[
        \prod_{\alpha=j+1}^{p+q}
        \mathcal{M}_{\alpha}\Bigl(
        \pmb{\varphi}^{(\alpha-1)}_{i_l} \, | \, \mathbf{n}^{(\alpha-1)}_{l}
        \Bigr)
        \,
        p_{n_l}^{(\alpha-1)}\Bigl(
        \mathbf{n}^{(\alpha-1)}_{l}
        \,\Big|\,
        \mathbf{n}^{(\alpha-2)}_{l}, \mathbf{v}^{(\alpha-2)}_{i_{l}}
        \Bigr)
        \Biggr]
        \Biggl[
        \prod_{\alpha=1}^{j-1}
        \delta\Bigl(
        \mathbf{n}^{(\alpha)}_{l}
        -
        \mathbf{n}^{(\alpha)}_{l-1}
        \Bigr) \,\delta\Bigl(
        \mathbf{v}^{(\alpha)}_{i_l}
        -
        \mathbf{v}^{(\alpha)}_{i_{l-1}}
        \Bigr)
        \Biggr]
        \Biggr\} \\
        &\qquad =
        \prod_{\alpha=2}^{p+q} p_{n_{l-1}}^{(\alpha-1)}\Bigl(
        \mathbf{n}^{(\alpha-1)}_{l-1}
        \,\Big|\,
        \mathbf{n}^{(\alpha-2)}_{l-1}, \mathbf{v}^{(\alpha-2)}_{i_{l-1}}
        \Bigr)
        \, \sum_{j=1}^{p+q}
        \Biggl\{
        \Biggl[
        \prod_{\alpha=j+1}^{p+q}
        \mathcal{S}_{\alpha}\Bigl(
        \pmb{\varphi}^{(\alpha-1)}_{r_{l-1}} \, | \, \mathbf{n}^{(\alpha-1)}_{l-1}
        \Bigr)
        \Biggr]
        \mathcal{S}_{j}\Bigl(
        \pmb{\varphi}^{(j-1)}_{i_l}
        \,\Big|\,
        \pmb{\varphi}^{(j-1)}_{r_{l-1}}
        \Bigr)
        \\
        &\qquad\qquad\cdot
        \Biggl[
        \prod_{\alpha=j+1}^{p+q}
        \mathcal{M}_{\alpha}\Bigl(
        \pmb{\varphi}^{(\alpha-1)}_{i_l} \, | \, \mathbf{n}^{(\alpha-1)}_{l}
        \Bigr)
        \,
        p_{n_l}^{(\alpha-1)}\Bigl(
        \mathbf{n}^{(\alpha-1)}_{l}
        \,\Big|\,
        \mathbf{n}^{(\alpha-2)}_{l}, \mathbf{v}^{(\alpha-2)}_{i_{l}}
        \Bigr)
        \Biggr]
        \Biggl[
        \prod_{\alpha=1}^{j-1}
        \delta\Bigl(
        \mathbf{n}^{(\alpha)}_{l}
        -
        \mathbf{n}^{(\alpha)}_{l-1}
        \Bigr) \,\delta\Bigl(
        \mathbf{v}^{(\alpha)}_{i_l}
        -
        \mathbf{v}^{(\alpha)}_{i_{l-1}}
        \Bigr)
        \Biggr]
        \Biggr\} \\
        &\qquad =
        \sum_{j=1}^{p+q}
        \Biggl\{
        \Biggl[\prod_{\alpha=1}^{j-1}p_{n_{l-1}}^{(\alpha)}\Bigl(
        \mathbf{n}^{(\alpha)}_{l-1}
        \,\Big|\,
        \mathbf{n}^{(\alpha-1)}_{l-1}, \mathbf{v}^{(\alpha-1)}_{i_{l-1}}
        \Bigr)\Biggr]
        \Biggl[
        \prod_{\alpha=j+1}^{p+q}
        \mathcal{S}_{\alpha}\Bigl(
        \pmb{\varphi}^{(\alpha-1)}_{r_{l-1}} \, | \, \mathbf{n}^{(\alpha-1)}_{l-1}
        \Bigr) \, p_{n_{l-1}}^{(\alpha-1)}\Bigl(
        \mathbf{n}^{(\alpha-1)}_{l-1}
        \,\Big|\,
        \mathbf{n}^{(\alpha-2)}_{l-1}, \mathbf{v}^{(\alpha-2)}_{i_{l-1}}
        \Bigr)
        \Biggr]
        \\
        &\qquad\qquad\cdot
        \mathcal{S}_{j}\Bigl(
        \pmb{\varphi}^{(j-1)}_{i_l}
        \,\Big|\,
        \pmb{\varphi}^{(j-1)}_{r_{l-1}}
        \Bigr)\,\Biggl[
        \prod_{\alpha=j+1}^{p+q}
        \mathcal{M}_{\alpha}\Bigl(
        \pmb{\varphi}^{(\alpha-1)}_{i_l} \, | \, \mathbf{n}^{(\alpha-1)}_{l}
        \Bigr)
        \,
        p_{n_l}^{(\alpha-1)}\Bigl(
        \mathbf{n}^{(\alpha-1)}_{l}
        \,\Big|\,
        \mathbf{n}^{(\alpha-2)}_{l}, \mathbf{v}^{(\alpha-2)}_{i_{l}}
        \Bigr)
        \Biggr] \\
        & \qquad\qquad \cdot 
        \Biggl[
        \prod_{\alpha=1}^{j-1}
        \delta\Bigl(
        \mathbf{n}^{(\alpha)}_{l}
        -
        \mathbf{n}^{(\alpha)}_{l-1}
        \Bigr) \,\delta\Bigl(
        \mathbf{v}^{(\alpha)}_{i_l}
        -
        \mathbf{v}^{(\alpha)}_{i_{l-1}}
        \Bigr)
        \Biggr]
        \Biggr\} \\
        &\qquad =
        \sum_{j=1}^{p+q}
        \Biggl\{
        \Biggl[\prod_{\alpha=1}^{p+q-1}p_{n_{l}}^{(\alpha)}\Bigl(
        \mathbf{n}^{(\alpha)}_{l}
        \,\Big|\,
        \mathbf{n}^{(\alpha-1)}_{l}, \mathbf{v}^{(\alpha-1)}_{i_{l}}
        \Bigr)\Biggr]
        \Biggl[
        \prod_{\alpha=j+1}^{p+q}
        \mathcal{M}_{\alpha}\Bigl(
        -\pmb{\varphi}^{(\alpha-1)}_{r_{l-1}} \, | \, \mathbf{n}^{(\alpha-1)}_{l-1}
        \Bigr) \, p_{n_{l-1}}^{(\alpha-1)}\Bigl(
        \mathbf{n}^{(\alpha-1)}_{l-1}
        \,\Big|\,
        \mathbf{n}^{(\alpha-2)}_{l-1}, \mathbf{v}^{(\alpha-2)}_{i_{l-1}}
        \Bigr)
        \Biggr]
        \\
        &\qquad\qquad\cdot
        \mathcal{S}_{j}\Bigl(
        -\pmb{\varphi}^{(j-1)}_{r_{l-1}}
        \,\Big|\,
        -\pmb{\varphi}^{(j-1)}_{i_l}
        \Bigr)\,\Biggl[
        \prod_{\alpha=j+1}^{p+q}
        \mathcal{S}_{\alpha}\Bigl(
        -\pmb{\varphi}^{(\alpha-1)}_{i_l} \, | \, \mathbf{n}^{(\alpha-1)}_{l}
        \Bigr)\, \Biggl[
        \prod_{\alpha=1}^{j-1}
        \delta\Bigl(
        \mathbf{n}^{(\alpha)}_{l}
        -
        \mathbf{n}^{(\alpha)}_{l-1}
        \Bigr) \,\delta\Bigl(
        \mathbf{v}^{(\alpha)}_{i_l}
        -
        \mathbf{v}^{(\alpha)}_{i_{l-1}}
        \Bigr)
        \Biggr]
        \Biggr\} \\
        & \qquad = p^{(p+q-1)}_{n_{l}}\Bigl(
        \mathbf{n}^{(p+q-1:1)}_{l}
        \,\Big|\,
        \mathbf{n}^{(0)}_{l}, \mathbf{v}^{(0)}_{i_{l}}
        \Bigr)
        \, \mathcal{S}^{(p+q)}\left(
        -\pmb{\varphi}^{(0)}_{r_{l-1}},
        \mathbf{n}^{(p+q-1)}_{l-1},\dots,\mathbf{n}^{(0)}_{l-1}
        \mid
        -\pmb{\varphi}^{(0)}_{i_l},
        \mathbf{n}^{(p+q-1)}_l,\dots,\mathbf{n}^{(0)}_l
        \right).
\end{aligned}
\end{equation}
Next, we prove compliance with \cref{eq:shadowing_multi_scale_condition_2}. For this, we substitute \cref{eq:shadowing_multiscale} into \cref{eq:shadowing_multi_scale_condition_2}, as
\begin{equation}
\begin{aligned}
    &\int_{\mathbb{R}}
    \prod_{\alpha=1}^{p+q-1}\int_{\mathbb{S}^2}
    \mathcal{S}^{(p+q)}\left(
        \pmb{\varphi}^{(0)}_{i_l},
        \mathbf{n}^{(p+q-1)}_l,\dots,\mathbf{n}^{(0)}_l
        \mid
        \pmb{\varphi}^{(0)}_{r_{l-1}},
        \mathbf{n}^{(p+q-1)}_{l-1},\dots,\mathbf{n}^{(0)}_{l-1}
    \right)
    f^{(p+q)}_{\varphi}\left(
        \pmb{\varphi}^{(0)}_{i_l}
    \right)
    \prod_{\alpha=1}^{p+q-1}\dif\mathbf{n}^{(\alpha)}_l\,
    \dif t_{i_l}
    \\
    &\qquad =
    \int_{\mathbb{R}}
    \prod_{\alpha=1}^{p+q-1}\int_{\mathbb{S}^2}
    \Dot{\mathcal{S}}^{(p+q)}\left(
        \pmb{\varphi}^{(0)}_{i_l},
        \mathbf{n}^{(p+q-1)}_l,\dots,\mathbf{n}^{(0)}_l
        \mid
        \pmb{\varphi}^{(0)}_{r_{l-1}},
        \mathbf{n}^{(p+q-1)}_{l-1},\dots,\mathbf{n}^{(0)}_{l-1}
    \right)
    \prod_{\alpha=1}^{p+q-1}\dif\mathbf{n}^{(\alpha)}_l\,
    \dif t_{i_l}
    \\
    &\qquad =
    \prod_{\alpha=1}^{p+q-1}\int_{\mathbb{S}^2}
    \left.
    \mathcal{S}^{(p+q)}\left(
        \pmb{\varphi}^{(0)}_{i_l},
        \mathbf{n}^{(p+q-1)}_l,\dots,\mathbf{n}^{(0)}_l
        \mid
        \pmb{\varphi}^{(0)}_{r_{l-1}},
        \mathbf{n}^{(p+q-1)}_{l-1},\dots,\mathbf{n}^{(0)}_{l-1}
    \right)
    \right|_{t_{i_l}=0}^{t_{i_l}=\infty}
    \prod_{\alpha=1}^{p+q-1}\dif\mathbf{n}^{(\alpha)}_l
    \\
    &\qquad =
    \prod_{\alpha=1}^{p+q-1}\int_{\mathbb{S}^2}
    \sum_{j=1}^{p+q}
    \Biggl\{
    \Biggl[
        \prod_{\alpha=j+1}^{p+q}
        \mathcal{S}_{\alpha}\Bigl(
            \pmb{\varphi}^{(\alpha-1)}_{r_{l-1}}
            \,\Big|\,
            \mathbf{n}^{(\alpha-1)}_{l-1}
        \Bigr)
    \Biggr]
    \left.
    \mathcal{S}_{j}\Bigl(
        \pmb{\varphi}^{(j-1)}_{i_l}
        \,\Big|\,
        \pmb{\varphi}^{(j-1)}_{r_{l-1}}
    \Bigr)
    \right|_{t_{i_l}=0}^{t_{i_l}=\infty}
    \\
    &\qquad\qquad\cdot
    \Biggl[
        \prod_{\alpha=j+1}^{p+q}
        \left.
        \mathcal{M}_{\alpha}\Bigl(
            \pmb{\varphi}^{(\alpha-1)}_{i_l}
            \,\Big|\,
            \mathbf{n}^{(\alpha-1)}_{l}
        \Bigr)
        \right|_{t_{i_l}=0}^{t_{i_l}=\infty}
        p_{n_l}^{(\alpha-1)}\Bigl(
            \mathbf{n}^{(\alpha-1)}_{l}
            \,\Big|\,
            \mathbf{n}^{(\alpha-2)}_{l},
            \mathbf{v}^{(\alpha-2)}_{i_l}
        \Bigr)
    \Biggr]
    \\
    &\qquad\qquad\cdot
    \Biggl[
        \prod_{\alpha=1}^{j-1}
        \delta\Bigl(
            \mathbf{n}^{(\alpha)}_{l}
            -
            \mathbf{n}^{(\alpha)}_{l-1}
        \Bigr)
        \delta\Bigl(
            \mathbf{v}^{(\alpha)}_{i_l}
            -
            \mathbf{v}^{(\alpha)}_{i_{l-1}}
        \Bigr)
    \Biggr]
    \Biggr\}
    \prod_{\alpha=1}^{p+q-1}\dif\mathbf{n}^{(\alpha)}_l
    \\
    &\qquad =
    \sum_{j=1}^{p+q}
    \Biggl\{
    \Biggl[
        \prod_{\alpha=j+1}^{p+q}
        \mathcal{S}_{\alpha}\Bigl(
            \pmb{\varphi}^{(\alpha-1)}_{r_{l-1}}
            \,\Big|\,
            \mathbf{n}^{(\alpha-1)}_{l-1}
        \Bigr)
    \Biggr]
    \Biggl[
        1-
        \mathcal{S}_{j}\Bigl(
            \pmb{\varphi}^{(j-1)}_{r_{l-1}}
            \,\Big|\,
            \mathbf{n}^{(j-1)}_{l-1}
        \Bigr)
    \Biggr]
    \Biggr\}
    \\
    &\qquad =
    1-
    \prod_{\alpha=1}^{p+q}
    \mathcal{S}_{\alpha}\Bigl(
        \pmb{\varphi}^{(\alpha-1)}_{r_{l-1}}
        \,\Big|\,
        \mathbf{n}^{(\alpha-1)}_{l-1}
    \Bigr)
    \\
    &\qquad =
    1-
    \mathcal{S}^{(p+q)}\left(
        \pmb{\varphi}^{(0)}_{r_{l-1}}
        \,\Big|\,
        \mathbf{n}^{(p+q-1)}_{l-1},\dots,
        \mathbf{n}^{(1)}_{l-1},\mathbf{n}^{(0)}_{l-1}
    \right).
\end{aligned}
\end{equation}
Finally, we must verify that \cref{eq:shadowing_multi_scale_condition_3} holds for the composed surface, provided that the corresponding single-scale condition in \cref{eq:shadowing_single_scale_condition_3} holds at each scale. Substituting the multi-scale one-point shadowing function from \cref{eq:shadowing_multiscale} into \cref{eq:shadowing_multi_scale_condition_3}, and expanding the kernel $\mathcal{K_K}$ in terms of the local kernel $\mathcal{K_L}$, gives
\begin{equation}
\begin{aligned}
    &\int_{\mathbb{S}^2}\dots\int_{\mathbb{S}^2}
    \int_{\mathbb{R}^3}\int_{\mathbb{R}^3}\int_{\mathbb{R}}
    \mathcal{K_K}\left(
        \pmb{\varphi}^{(0)}_{i_l} \rightarrow \pmb{\varphi}^{(0)}_{r_l}
    \right)
    \mathcal{S}^{(p+q)}\left(
        \pmb{\varphi}^{(0)}_{r_l}
        \mid
        \mathbf{n}^{(p+q-1)}_l,\dots,\mathbf{n}^{(0)}_l
    \right)
    \\
    &\qquad\cdot
    \prod_{\alpha=2}^{p+q}
    p_{n_l}^{(\alpha-1)}\left(
        \mathbf{n}_l^{(\alpha-1)}
        \mid
        \mathbf{n}_l^{(\alpha-2)},\mathbf{v}^{(\alpha-2)}_{i_l}
    \right)
    \dif\pmb{\varphi}^{(0)}_{r_l}
    \prod_{\alpha=1}^{p+q-1}\dif\mathbf{n}^{(\alpha)}_l
    \\
    &\qquad =
    \int_{\mathbb{S}^2}\dots\int_{\mathbb{S}^2}
    \int_{\mathbb{R}^3}\int_{\mathbb{R}^3}\int_{\mathbb{R}}
    \left\langle
        \Dot{\mathcal{M}}_L\left(
            \pmb{\varphi}^{(p+q)}_i
        \right)
    \right\rangle
    \mathcal{K_L}\left(
        \pmb{\varphi}^{(p+q)}_{i_l}
        \rightarrow
        \pmb{\varphi}^{(p+q)}_{r_l}
    \right)
    \frac{
        \mathcal{Q}\left(
            \mathbf{v}^{(p+q)}_{i_l}
            \mid
            \mathbf{n}^{(p+q)}_l
        \right)
    }{
        \mathcal{Q}\left(
            \mathbf{v}^{(0)}_{i_l}
            \mid
            \mathbf{n}^{(0)}_l
        \right)
    }
    \\
    &\qquad\cdot
    \prod_{\alpha=1}^{p+q}
    \mathcal{S}_{\alpha}\Bigl(
        \pmb{\varphi}^{(\alpha-1)}_{r_l}
        \mid
        \mathbf{n}^{(\alpha-1)}_l
    \Bigr)
    \prod_{\alpha=1}^{p+q}
    p_{n_l}^{(\alpha)}\left(
        \mathbf{n}_l^{(\alpha)}
        \mid
        \mathbf{n}_l^{(\alpha-1)},\mathbf{v}^{(\alpha-1)}_{i_l}
    \right)
    \dif\pmb{\varphi}^{(0)}_{r_l}
    \prod_{\alpha=1}^{p+q}\dif\mathbf{n}^{(\alpha)}_l
    \\
    &\qquad =
    \int_{\mathbb{R}^3}\int_{\mathbb{R}^3}\int_{\mathbb{R}}
    \Biggl\{
    \int_{\mathbb{S}^2}\dots\int_{\mathbb{S}^2}
    \left\langle
        \Dot{\mathcal{M}}_L\left(
            \pmb{\varphi}^{(p+q)}_i
        \right)
    \right\rangle
    \mathcal{K_L}\left(
        \pmb{\varphi}^{(p+q)}_{i_l}
        \rightarrow
        \pmb{\varphi}^{(p+q)}_{r_l}
    \right)
    \frac{
        \mathcal{Q}\left(
            \mathbf{v}^{(p+q)}_{i_l}
            \mid
            \mathbf{n}^{(p+q)}_l
        \right)
    }{
        \mathcal{Q}\left(
            \mathbf{v}^{(0)}_{i_l}
            \mid
            \mathbf{n}^{(0)}_l
        \right)
    }
    \\
    &\qquad\qquad\cdot
    \prod_{\alpha=2}^{p+q}
    \mathcal{S}_{\alpha}\Bigl(
        \pmb{\varphi}^{(\alpha-1)}_{r_l}
        \mid
        \mathbf{n}^{(\alpha-1)}_l
    \Bigr)
    \prod_{\alpha=1}^{p+q}
    p_{n_l}^{(\alpha)}\left(
        \mathbf{n}_l^{(\alpha)}
        \mid
        \mathbf{n}_l^{(\alpha-1)},\mathbf{v}^{(\alpha-1)}_{i_l}
    \right)
    \prod_{\alpha=1}^{p+q}\dif\mathbf{n}^{(\alpha)}_l
    \Biggr\}
    \mathcal{S}_{1}\Bigl(
        \pmb{\varphi}^{(0)}_{r_l}
        \mid
        \mathbf{n}^{(0)}_l
    \Bigr)
    \dif\pmb{\varphi}^{(0)}_{r_l}
    \\
    &\qquad =
    \int_{\mathbb{R}^3}\int_{\mathbb{R}^3}\int_{\mathbb{R}}
    \Biggl\{
    \int_{\mathbb{S}^2}\dots\int_{\mathbb{S}^2}
    \left\langle
        \Dot{\mathcal{M}}_L\left(
            \pmb{\varphi}^{(p+q)}_i
        \right)
    \right\rangle
    \mathcal{K_L}\left(
        \pmb{\varphi}^{(p+q)}_{i_l}
        \rightarrow
        \pmb{\varphi}^{(p+q)}_{r_l}
    \right)
    \frac{
        \mathcal{Q}\left(
            \mathbf{v}^{(p+q)}_{i_l}
            \mid
            \mathbf{n}^{(p+q)}_l
        \right)
    }{
        \mathcal{Q}\left(
            \mathbf{v}^{(0)}_{i_l}
            \mid
            \mathbf{n}^{(0)}_l
        \right)
    }
    \\
    &\qquad\qquad\cdot
    \mathcal{S}^{(2:p+q)}
    \Bigl(
        \pmb{\varphi}^{(p+q)}_{r_l}
        \mid
        \mathbf{n}^{(1:p+q-1)}_l
    \Bigr) \,
    p_{n_l}^{(1:p+q)}
    \left(
        \mathbf{n}_l^{(1:p+q)}
        \mid
        \mathbf{n}_l^{(0)},\mathbf{v}^{(0)}_{i_l}
    \right)
    \prod_{\alpha=1}^{p+q}\dif\mathbf{n}^{(\alpha)}_l
    \Biggr\}
    \mathcal{S}_{1}\Bigl(
        \pmb{\varphi}^{(0)}_{r_l}
        \mid
        \mathbf{n}^{(0)}_l
    \Bigr)
    \dif\pmb{\varphi}^{(0)}_{r_l}.
\end{aligned}
\end{equation}
In the last equality, the factors associated with scales $2,\dots,p+q$ have been collected into the shorthand quantities $\mathcal{S}^{(2:p+q)}$ and $p_{n_l}^{(1:p+q)}$ while the outermost one-point shadowing factor $\mathcal{S}_1$ has been kept explicit. We may therefore apply \cref{eq:shadowing_single_scale_condition_3} to the bracketed expression, interpreting $\mathcal{S}^{(2:p+q)}
    \Bigl(
        \pmb{\varphi}^{(p+q)}_{r_l}
        \mid
        \mathbf{n}^{(1:p+q-1)}_l
    \Bigr)$ as the effective local one-point shadowing function seen from scale $1$, and assuming $\left\langle
        \Dot{\mathcal{M}}_L\left(
            \pmb{\varphi}^{(p+q)}_i
        \right)
    \right\rangle = 1$ since $\Psi^{(p+q)}$ is locally smooth in the last scale, in the sense of \cref{def:local_smoothness}. Hence, there exists $\epsilon_1 > 0$ such that
\begin{equation}
    \begin{aligned}
            &\int_{\mathbb{R}^3}\int_{\mathbb{R}^3}\int_{\mathbb{R}}
    \Biggl\{
    \int_{\mathbb{S}^2}\dots\int_{\mathbb{S}^2}
    \left\langle
        \Dot{\mathcal{M}}_L\left(
            \pmb{\varphi}^{(p+q)}_i
        \right)
    \right\rangle
    \mathcal{K_L}\left(
        \pmb{\varphi}^{(p+q)}_{i_l}
        \rightarrow
        \pmb{\varphi}^{(p+q)}_{r_l}
    \right)
    \frac{
        \mathcal{Q}\left(
            \mathbf{v}^{(p+q)}_{i_l}
            \mid
            \mathbf{n}^{(p+q)}_l
        \right)
    }{
        \mathcal{Q}\left(
            \mathbf{v}^{(0)}_{i_l}
            \mid
            \mathbf{n}^{(0)}_l
        \right)
    }
    \\
    &\qquad\qquad\cdot
    \mathcal{S}^{(2:p+q)}
    \Bigl(
        \pmb{\varphi}^{(p+q)}_{r_l}
        \mid
        \mathbf{n}^{(1:p+q-1)}_l
    \Bigr) \,
    p_{n_l}^{(1:p+q)}
    \left(
        \mathbf{n}_l^{(1:p+q)}
        \mid
        \mathbf{n}_l^{(0)},\mathbf{v}^{(0)}_{i_l}
    \right)
    \prod_{\alpha=1}^{p+q}\dif\mathbf{n}^{(\alpha)}_l
    \Biggr\}
    \mathcal{S}_{1}\Bigl(
        \pmb{\varphi}^{(0)}_{r_l}
        \mid
        \mathbf{n}^{(0)}_l
    \Bigr)
    \dif\pmb{\varphi}^{(0)}_{r_l} > \epsilon_1.
    \end{aligned}
\end{equation}
We have now proven all three conditions outlined in \cref{eq:shadowing_multi_scale_condition_1,eq:shadowing_multi_scale_condition_2,eq:shadowing_multi_scale_condition_3}, and hence, shown that $\Psi^{(p+q)} \circ \mathcal{K_L}$ is a scattering kernel in the sense of \cref{def:scattering_kernel_set}, provided that $\mathcal{K_L}$ is a pointwise scattering kernel in the sense of \cref{def:point_wise_scattering_kernel_set}. This concludes the proof.
\end{proof}

Finally, we show that, for surfaces with a height representation in the sense of \cref{def:surface_multiscale_height}, the composed surface coincides with the sum of the components in the global frame. 

\begin{theorem} \label{theorem:scattering_kernel_summation}
    Let $\circ$ denote the scattering operator in the sense of \cref{def:scattering_operator}, and let $\Psi^{(p)}_1, \Psi^{(q)}_2 \in \mathcal{P}^h$ be two multi-scale surfaces \textbf{with height representation} in the sense of \cref{def:surface_multiscale_height}, with scales $p,q\in\mathbb{Z}^+$. Furthermore, let $\mathcal{K_L}: \mathbb{R}^3 \times \mathbb{R}^3 \times \mathbb{R}
    \times \mathbb{R}^3 \times \mathbb{R}^3 \times \mathbb{R} \rightarrow [0, \infty)$ be a Lebesgue-integrable function defined in the most local reference frame associated with the composed surface, in the sense of \cref{def:local_frame}. Then, 
    \begin{equation} \label{eq:scattering_operator_summation}
    \boxed{
        \left[\Psi^{(p)}_1 \circ \left(\Psi^{(q)}_2 \circ \mathcal{K_L}\right)\right]\left(
        \mathbf{r}^{(0)}_i \rightarrow \mathbf{r}^{(0)}_r,
        \mathbf{v}^{(0)}_i \rightarrow \mathbf{v}^{(0)}_r,
        t_i \rightarrow t_r
        \right)
        =
        \left[\left(\Psi_1^{(p)} + \Psi_2^{(q)} \right) \circ \mathcal{K_L}\right]\left(
        \mathbf{r}^{(0)}_i \rightarrow \mathbf{r}^{(0)}_r,
        \mathbf{v}^{(0)}_i \rightarrow \mathbf{v}^{(0)}_r,
        t_i \rightarrow t_r
        \right),
    }
    \end{equation}
     $\forall \, \Psi^{(p)}_1, \Psi^{(q)}_2 \in \mathcal{P}$ and $ \forall \, \left(\mathbf{r}^{(0)}_i,\mathbf{v}^{(0)}_i,t_i,\mathbf{r}^{(0)}_r,\mathbf{v}^{(0)}_r,t_r\right)\in \mathbb{R}^3 \times \mathbb{R}^3 \times \mathbb{R} \times \mathbb{R}^3 \times \mathbb{R}^3 \times \mathbb{R}$. Furthermore, $\left[\left(\Psi_1^{(p)} + \Psi_2^{(q)} \right) \circ \mathcal{K_L}\right]\left(
    \mathbf{r}^{(0)}_i \rightarrow \mathbf{r}^{(0)}_r,
    \mathbf{v}^{(0)}_i \rightarrow \mathbf{v}^{(0)}_r,
    t_i \rightarrow t_r
    \right)$ is a scattering kernel in the sense of \cref{def:reciprocity,def:normalisation,def:nonnegativity,def:impermeability} provided that $\mathcal{K_L}$ is a pointwise scattering kernel in the sense of \cref{def:pointwise_reciprocity,def:normalisation,def:nonnegativity}, and the two-point and one-point shadowing functions $\mathcal{S}^{(p+q)}\left(\pmb{\varphi}^{(0)}_{i_k}, \mathbf{n}^{(p+q-1)}_k,\dots,\mathbf{n}^{(0)}_k \mid \pmb{\varphi}^{(0)}_{r_{k-1}}, \mathbf{n}^{(p+q-1)}_{k-1},\dots,\mathbf{n}^{(0)}_{k-1}\right)$ and $\mathcal{S}^{(p+q)}\left(\pmb{\varphi}^{(0)}_{r} \mid \mathbf{n}^{(p+q-1)}_n,\dots,\mathbf{n}^{(0)}_n\right)$ associated with $\Psi^{(p+q)} = \Psi_1^{(p)} + \Psi_2^{(q)}$ satisfy \cref{eq:shadowing_multi_scale_condition_1,eq:shadowing_multi_scale_condition_2,eq:shadowing_multi_scale_condition_3}. Hence, $\circ$ defines \textbf{an action} of the abelian group $(\mathcal{P}^h, +)$ on $\mathcal{T}$.
\end{theorem}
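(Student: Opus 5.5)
The plan is to reduce the statement to \cref{theorem:scattering_kernel_composition} and then identify the composite surface $\Psi^{(p+q)}$ produced there with the height-sum $\Psi_1^{(p)}+\Psi_2^{(q)}$ of \cref{def:surface_multiscale_height}.

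\textbf{Step 1: apply the composition theorem.} By \cref{theorem:scattering_kernel_composition}, there is a multi-scale surface $\Psi^{(p+q)}$ with
\[
\Psi^{(p)}_1\circ(\Psi^{(q)}_2\circ\mathcal{K_L}) = \Psi^{(p+q)}\circ\mathcal{K_L}.
\]
Its normal hierarchy is the concatenation of the conditional normal PDFs of $\Psi^{(p)}_1$ (frames $F_0,\dots,F_p$) and those of $\Psi^{(q)}_2$ (frames $F_p,\dots,F_{p+q}$). Its multi-scale shadowing and masking functions are $\mathcal{S}^{(p+q)}$ and $\mathcal{M}^{(p+q)}$; the proof of that theorem showed these are the products and sums built from the single-scale factors. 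It therefore suffices to show that $\Psi^{(p+q)}$ and $\Psi_1^{(p)}+\Psi_2^{(q)}$ agree in every ingredient that enters \cref{eq:scattering_operator}. These ingredients are the marginal normal PDF $p_n^{(p+q)}$, the conditional chain $p_{n_k}$, and the single-scale shadowing and masking factors. This agreement gives equality of $\circ$-images.

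\textbf{Step 2: match the normal hierarchies.} I would use the chain of equalities in \cref{def:surface_multiscale_height}. When every component admits a height representation in the common frame $F_0$, the integrated product of conditional normal PDFs equals $p_{s_m}$ evaluated at the summed slope field $\sum_k\Dot{\mathbf{\xi}}_k^{(0)}$. I would apply this with the $p+q$ components ordered as those of $\Psi_1^{(p)}$ followed by those of $\Psi_2^{(q)}$. This shows that the normal PDF of $\Psi^{(p+q)}$ equals the slope PDF of the surface with height $\xi_1^{(0)}+\xi_2^{(0)}$. By the definition of $+$, that surface is $\Psi_1^{(p)}+\Psi_2^{(q)}$. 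The conditional factors $p_{n_k}(\mathbf{n}^{(k)}\mid\mathbf{n}^{(k-1)})$ agree as well, because each is the slope increment $\Dot{\mathbf{\xi}}_k^{(0)}$ added to the accumulated coarser slope. The shadowing and masking functions are indexed by the same single-scale surfaces $\Psi_k$ in the same order, so they coincide term by term with those of the sum, using the reduced form \cref{eq:shadowing_multiscale}.

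\textbf{Step 3: admissibility.} The admissibility claim then follows verbatim from the second half of \cref{theorem:scattering_kernel_composition}, together with \cref{lemma:reciprocity_multi_refl,lemma:normalisation_multi_refl,lemma:nonnegativity_multi_refl}. This is because we assume that $\mathcal{S}^{(p+q)}$ for $\Psi_1^{(p)}+\Psi_2^{(q)}$ satisfies \cref{eq:shadowing_multi_scale_condition_1,eq:shadowing_multi_scale_condition_2,eq:shadowing_multi_scale_condition_3}.

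\textbf{Step 4: the group action.} I would verify the two axioms.
\begin{itemize}
\item \emph{Identity.} For $0_{\mathcal{P}}$ the normal PDF is $\delta(\mathbf{n}-\mathbf{n_G})$. The visible flux reduces to $|\mathbf{v}\cdot\mathbf{n_G}|$, masking and shadowing equal one, and the two-point shadowing gives no re-collision. Hence only the $n=1$ term survives, and with the $\mathcal{Q}$ ratio equal to $1$ we get $0_{\mathcal{P}}\circ\mathcal{K_L}=\mathcal{K_L}$.
\item \emph{Compatibility.} $\Psi_1\circ(\Psi_2\circ\mathcal{K})=(\Psi_1+\Psi_2)\circ\mathcal{K}$ is exactly the boxed identity.
\end{itemize}
Commutativity of $+$ is not needed for the action axioms, though it yields the corollary that the two orderings of composition agree.

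\textbf{Main obstacle.} I expect the hardest step to be Step 2. The composite surface from \cref{theorem:scattering_kernel_composition} places $\Psi_2$'s hierarchy in the local frame $F_p$ of $\Psi_1$. The height-sum, by contrast, expresses all slopes in $F_0$. One must argue that adding slopes in $F_0$ reproduces the frame-rotated conditional PDFs. I would handle this by writing each conditional PDF as a slope PDF via the Jacobian change of variables already used in \cref{def:surface_multiscale_height}. I would then invoke the scale-separation assumption $R_1\gg\dots\gg R_{p+q}$, under which the fine-scale slope statistics are taken as independent of the coarse normal, so that the tilt to $F_p$ acts as a shift in slope space.
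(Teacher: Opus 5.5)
This is correct and is essentially the paper's proof. Like the paper, you invoke \cref{theorem:scattering_kernel_composition}, identify its $\Psi^{(p+q)}$ with $\Psi_1^{(p)}+\Psi_2^{(q)}$ by listing the slope components of $\Psi_1^{(p)}$ before those of $\Psi_2^{(q)}$ so that the conditional normal PDFs and the product-form shadowing and masking functions coincide, inherit admissibility from that theorem, and get the action from $0_{\mathcal{P}}\circ\mathcal{K}=\mathcal{K}$ plus the boxed identity; your check of the identity element is more explicit than the paper's one-line assertion.

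The paper settles your ``main obstacle'' by bare re-indexing under \cref{def:surface_multiscale_height} and defers the slope-bias caveat to its Discussion. Note that if you lean on $R_1\gg\dots\gg R_{p+q}$ in the concatenated order there, you only get compatibility for scale-ordered pairs. Your remark that commutativity of $+$ makes the two orderings agree would then no longer follow from your own argument, a tension the paper's Discussion also concedes.
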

\begin{proof}
    By \cref{theorem:scattering_kernel_composition}, $\exists \, \Psi^{(p+q)} \in \mathcal{P}$, such that
    \begin{equation}
        \left[\Psi^{(p)}_1 \circ \left(\Psi^{(q)}_2 \circ \mathcal{K_L}\right)\right]\left(
        \mathbf{r}^{(0)}_i \rightarrow \mathbf{r}^{(0)}_r,
        \mathbf{v}^{(0)}_i \rightarrow \mathbf{v}^{(0)}_r,
        t_i \rightarrow t_r
        \right)
        =
        \left[\Psi^{(p+q)} \circ \mathcal{K_L}\right]\left(
        \mathbf{r}^{(0)}_i \rightarrow \mathbf{r}^{(0)}_r,
        \mathbf{v}^{(0)}_i \rightarrow \mathbf{v}^{(0)}_r,
        t_i \rightarrow t_r
        \right).
    \end{equation}
    Furthermore, $\Psi^{(p+q)} \circ \mathcal{K_L}$ is a scattering kernel in the sense of \cref{def:scattering_kernel_set}, provided that $\mathcal{K_L}$ is a pointwise scattering kernel in the sense of \cref{def:point_wise_scattering_kernel_set}. Hence, we only have to prove that the joint PDF of the normals of $\Psi^{(p+q)}$ coincides with that of a surface obtained by summation of the two components, i.e. $\Psi_1^{(p)} + \Psi_2^{(q)}$.     By \cref{def:surface_multiscale_height}, since
    $\Psi^{(p)}_1,\Psi^{(q)}_2 \in \mathcal{P}^h$, there exist height
    representations
    \begin{equation}
        \xi^{(0)}_1
        =
        \sum_{\alpha=1}^{p}
        \xi^{(0)}_{1,\alpha},
        \qquad
        \xi^{(0)}_2
        =
        \sum_{\beta=1}^{q}
        \xi^{(0)}_{2,\beta},
    \end{equation}
    in the same global frame $F_0$. Hence, by the addition operation in
    \cref{def:surface_multiscale_height},
    \begin{equation}
        \Psi^{(p)}_1+\Psi^{(q)}_2
        =
        \left\{
        (x,y,z,t)\in \mathbb{R}^3\times\mathbb{R}
        \;\middle|\;
        z=\xi^{(0)}_1(x,y,t)+\xi^{(0)}_2(x,y,t)
        \right\}.
    \end{equation}
    Therefore,
    \begin{equation}
        \xi^{(0)}_1+\xi^{(0)}_2
        =
        \sum_{\alpha=1}^{p}\xi^{(0)}_{1,\alpha}
        +
        \sum_{\beta=1}^{q}\xi^{(0)}_{2,\beta}.
    \end{equation}
    Define the re-indexed height components
    \begin{equation}
        \xi^{(0)}_{+,\alpha}
        =
        \begin{cases}
            \xi^{(0)}_{1,\alpha},
            & 1\leq \alpha \leq p,
            \\[0.3em]
            \xi^{(0)}_{2,\alpha-p},
            & p+1\leq \alpha \leq p+q.
        \end{cases}
    \end{equation}
    Then,
    \begin{equation}
        \xi^{(0)}_1+\xi^{(0)}_2
        =
        \sum_{\alpha=1}^{p+q}
        \xi^{(0)}_{+,\alpha}.
    \end{equation}
    Thus, $\Psi^{(p)}_1+\Psi^{(q)}_2$ is a multi-scale surface with
    $p+q$ height components. Next, we show that this surface is precisely the surface
    $\Psi^{(p+q)}$ obtained in \cref{theorem:scattering_kernel_composition}.
    By \cref{def:surface_multiscale_height}, the slope field of
    $\Psi^{(p)}_1+\Psi^{(q)}_2$ is
    \begin{equation}
        \Dot{\mathbf{\xi}}^{(0)}_+
        =
        \sum_{\alpha=1}^{p+q}
        \Dot{\mathbf{\xi}}^{(0)}_{+,\alpha}
        =
        \sum_{\alpha=1}^{p}
        \Dot{\mathbf{\xi}}^{(0)}_{1,\alpha}
        +
        \sum_{\beta=1}^{q}
        \Dot{\mathbf{\xi}}^{(0)}_{2,\beta}.
    \end{equation}
    Hence, its induced normal distribution satisfies
    \begin{equation}
    \begin{aligned}
        &p_{n,+}^{(p+q)}
        \Bigl(
        \mathbf{n}^{(p+q)},\mathbf{n}^{(p+q-1)},\dots,\mathbf{n}^{(1)}
        \,\Big|\,
        \mathbf{n}^{(0)},\mathbf{v}^{(0)}_i
        \Bigr)
        \\
        &\quad =
        \prod_{\alpha=1}^{p+q}
        p_{n,+}^{(\alpha)}
        \Bigl(
        \mathbf{n}^{(\alpha)}
        \,\Big|\,
        \mathbf{n}^{(\alpha-1)},\mathbf{v}^{(\alpha-1)}_i
        \Bigr).
    \end{aligned}
    \end{equation}
    By the definition of the re-indexed height components, the conditional
    normal PDFs are
    \begin{equation}
        p_{n,+}^{(\alpha)}
        \Bigl(
        \mathbf{n}^{(\alpha)}
        \,\Big|\,
        \mathbf{n}^{(\alpha-1)},\mathbf{v}^{(\alpha-1)}_i
        \Bigr)
        =
        p_{n_1}^{(\alpha)}
        \Bigl(
        \mathbf{n}^{(\alpha)}
        \,\Big|\,
        \mathbf{n}^{(\alpha-1)},\mathbf{v}^{(\alpha-1)}_i
        \Bigr),
        \qquad
        1\leq \alpha \leq p,
    \end{equation}
    and
    \begin{equation}
        p_{n,+}^{(\alpha)}
        \Bigl(
        \mathbf{n}^{(\alpha)}
        \,\Big|\,
        \mathbf{n}^{(\alpha-1)},\mathbf{v}^{(\alpha-1)}_i
        \Bigr)
        =
        p_{n_2}^{(\alpha-p)}
        \Bigl(
        \mathbf{n}^{(\alpha)}
        \,\Big|\,
        \mathbf{n}^{(\alpha-1)},\mathbf{v}^{(\alpha-1)}_i
        \Bigr),
        \qquad
        p+1\leq \alpha \leq p+q.
    \end{equation}
    Consequently,
    \begin{equation}
    \begin{aligned}
        &p_{n,+}^{(p+q)}
        \Bigl(
        \mathbf{n}^{(p+q)},\mathbf{n}^{(p+q-1)},\dots,\mathbf{n}^{(1)}
        \,\Big|\,
        \mathbf{n}^{(0)},\mathbf{v}^{(0)}_i
        \Bigr)
        \\
        &\quad =
        \left[
        \prod_{\alpha=1}^{p}
        p_{n_1}^{(\alpha)}
        \Bigl(
        \mathbf{n}^{(\alpha)}
        \,\Big|\,
        \mathbf{n}^{(\alpha-1)},\mathbf{v}^{(\alpha-1)}_i
        \Bigr)
        \right]
        \left[
        \prod_{\beta=1}^{q}
        p_{n_2}^{(\beta)}
        \Bigl(
        \mathbf{n}^{(p+\beta)}
        \,\Big|\,
        \mathbf{n}^{(p+\beta-1)},\mathbf{v}^{(p+\beta-1)}_i
        \Bigr)
        \right]
        \\
        &\quad =
        p_{n_1}^{(p)}
        \Bigl(
        \mathbf{n}^{(p:1)}
        \,\Big|\,
        \mathbf{n}^{(0)},\mathbf{v}^{(0)}_i
        \Bigr)
        p_{n_2}^{(q)}
        \Bigl(
        \mathbf{n}^{(p+q:p+1)}
        \,\Big|\,
        \mathbf{n}^{(p)},\mathbf{v}^{(p)}_i
        \Bigr).
    \end{aligned}
    \end{equation}
    This is exactly the joint normal PDF obtained in
    \cref{theorem:scattering_kernel_composition} after flattening the
    reflections of $\Psi^{(p)}_1 \circ \left(\Psi^{(q)}_2 \circ
    \mathcal{K_L}\right)$. Hence, the normal hierarchy of the surface
    constructed in \cref{theorem:scattering_kernel_composition} coincides
    with the normal hierarchy induced by the height sum
    $\Psi^{(p)}_1+\Psi^{(q)}_2$. Since the shadowing and masking functions in
    \cref{def:shadowing_multi_scale} are defined with respect to the same
    ordered hierarchy of normals, the re-indexed hierarchy above also gives
    \begin{equation}
        \mathcal{M}^{(p+q)}
        =
        \mathcal{M}^{(p)}_1 \mathcal{M}^{(q)}_2,
        \qquad
        \mathcal{S}^{(p+q)}
        =
        \mathcal{S}^{(p)}_1 \mathcal{S}^{(q)}_2,
    \end{equation}
    for the corresponding one-point functions, and gives the same two-point
    shadowing function as the composite construction in
    \cref{theorem:scattering_kernel_composition}, in \cref{eq:multi_scale_shadowing_composition_proof}. Therefore, the surface
    $\Psi^{(p+q)}$ appearing in
    \cref{theorem:scattering_kernel_composition} is precisely
    \begin{equation}
        \Psi^{(p+q)}
        =
        \Psi^{(p)}_1+\Psi^{(q)}_2.
    \end{equation}
    Substituting this identity into the result of
    \cref{theorem:scattering_kernel_composition} gives
    \begin{equation}
        \left[\Psi^{(p)}_1 \circ
        \left(\Psi^{(q)}_2 \circ \mathcal{K_L}\right)\right]
        \left(
        \mathbf{r}^{(0)}_i \rightarrow \mathbf{r}^{(0)}_r,
        \mathbf{v}^{(0)}_i \rightarrow \mathbf{v}^{(0)}_r,
        t_i \rightarrow t_r
        \right)
        =
        \left[
        \left(\Psi^{(p)}_1+\Psi^{(q)}_2\right)
        \circ
        \mathcal{K_L}
        \right]
        \left(
        \mathbf{r}^{(0)}_i \rightarrow \mathbf{r}^{(0)}_r,
        \mathbf{v}^{(0)}_i \rightarrow \mathbf{v}^{(0)}_r,
        t_i \rightarrow t_r
        \right).
    \end{equation}
    This proves \cref{eq:scattering_operator_summation}. The statement that
    $\left[\left(\Psi^{(p)}_1+\Psi^{(q)}_2\right)\circ\mathcal{K_L}\right]$
    is a scattering kernel follows directly from
    \cref{theorem:scattering_kernel_composition}, because
    $\Psi^{(p)}_1+\Psi^{(q)}_2=\Psi^{(p+q)}$ and the associated shadowing
    functions are assumed to satisfy \cref{eq:shadowing_multi_scale_condition_1,eq:shadowing_multi_scale_condition_2,eq:shadowing_multi_scale_condition_3}. Finally, by \cref{def:surface_multiscale_height}, $(\mathcal{P}^h,+)$ is an abelian group. Moreover, the flat surface
    $0_{\mathcal{P}}$ has zero height, deterministic normal
    $\mathbf{n}^{(0)}=\mathbf{n_G}$, and introduces no additional roughness
    modification. Hence,
    \begin{equation}
        0_{\mathcal{P}}\circ\mathcal{K}
        =
        \mathcal{K},
        \qquad
        \forall \, \mathcal{K}\in\mathcal{T}.
    \end{equation}
    Together with
    \begin{equation}
        \left(\Psi^{(p)}_1+\Psi^{(q)}_2\right)\circ\mathcal{K_L}
        =
        \Psi^{(p)}_1\circ
        \left(\Psi^{(q)}_2\circ\mathcal{K_L}\right),
    \end{equation}
    it shows that $\circ$ defines an action of the abelian group
    $(\mathcal{P}^h,+)$ on $\mathcal{T}$, concluding the proof.
\end{proof}

\section{Study Case: Two-scale Poly-Gaussian Surface} \label{sec:Study_case}

We now apply the multi-scale theory developed in \cref{sec:Multiscale_formalism} to a specific study case: scattering from a multi-scale poly-Gaussian surface. To model each individual scale, we use the poly-Gaussian scattering model developed in \cite{Anton2025}. In this model, a single-scale surface $\Psi$ in the sense of \cref{def:surface_height} is described by a poly-Gaussian random surface, following \cite{Litvak2012}, whose height PDF is given by

\begin{equation} \label{eq:isotropic_height_pdf}
    p_h(\xi) = \int_{-\infty}^{\infty} p_{\gamma}(\gamma) \mathcal{G}\left(\xi, \mu(\gamma), \sigma(\gamma)\right) \, \dif \gamma,
\end{equation}
with $\mathcal{G}\left(\xi, \mu(\gamma), \sigma(\gamma)\right)$ denoting a Gaussian distribution
\begin{equation}
    \mathcal{G}\left(\xi, \mu(\gamma), \sigma(\gamma)\right) = \frac{1}{\sigma(\gamma) \sqrt{2\pi}} \exp\left[- \frac{\left(\xi - \mu(\gamma)\right)^2}{2\sigma(\gamma)^2} \right].
\end{equation}
Here, $\gamma : \mathbb{R} \times \mathbb{R} \rightarrow \mathbb{R}, \gamma = \gamma(x, y)$ represents the control process, while the functions $\mu : \mathbb{R} \rightarrow \mathbb{R}, \mu = \mu(\gamma)$ and $\sigma : \mathbb{R} \rightarrow \mathbb{R}, \sigma = \sigma(\gamma)$ define the mean and standard deviation on $\Psi$ at point $(x, y)$. The PDF of the $\gamma$ process is just the normal distribution, i.e.
\begin{equation}
    p_{\gamma}(\gamma) = \mathcal{G}\left(\gamma, 0, 1\right) = \frac{1}{\sqrt{2\pi}}\exp\left[- \frac{\gamma^2}{2} \right].
\end{equation}
On the other hand, the PDF of the slope profile of $\Psi$, $\Dot{\xi} = \derivative{\xi}{r}$, was derived in \cite{Anton2025} as
\begin{equation} \label{pp_2:eq:slope_PDF}
    p_s(\Dot{\xi}) = \int_{-\infty}^{\infty}\int_{-\infty}^{\infty} p_{\gamma}(\gamma) p_{\Dot{\gamma}}(\Dot{\gamma}) \mathcal{G}\left(\Dot{\xi}, \mu_s(\gamma, \Dot{\gamma}), \sigma_s(\gamma, \Dot{\gamma}) \right)  \, \dif \gamma \dif \Dot{\gamma},
\end{equation}
where the Gaussian distribution $\mathcal{G}\left(\Dot{\xi}, \mu_s(\gamma, \Dot{\gamma}), \sigma_s(\gamma, \Dot{\gamma}) \right)$ is given as
\begin{equation}
    \mathcal{G}\left(\Dot{\xi}, \mu_s(\gamma, \Dot{\gamma}), \sigma_s(\gamma, \Dot{\gamma}) \right) = \frac{1}{\sigma_s \sqrt{2\pi}} \exp\left[ - \frac{\left(\Dot{\xi} - \mu_s(\gamma, \Dot{\gamma}) \right)^2}{2\sigma_s(\gamma, \Dot{\gamma})^2}\right],
\end{equation}
and the PDF of the radial derivative of the $\gamma$ process, $\Dot{\gamma} = \frac{\dif \gamma}{\dif r}$ is given as in \cite{Anton2025}:
\begin{equation}
    p_{\Dot{\gamma}}(\Dot{\gamma}) = \mathcal{G}\left(\Dot{\gamma}, 0, \frac{\sqrt{2}}{R}\right) = \frac{R}{\sqrt{4\pi}}\exp\left[- \frac{R^2 \Dot{\gamma}^2}{4}\right],
\end{equation}
where $R$ denotes the autocorrelation length of $\gamma$, under the assumption of an autocorrelation function $\mathcal{C}_{\gamma} : [0, \infty) \rightarrow \mathbb{R}$, $\mathcal{C}_{\gamma} = \mathcal{C}_{\gamma}(r)$ of the form
\begin{equation}
    \mathcal{C}_{\gamma}(r) = \exp\left[- \frac{r^2}{R^2} \right].
\end{equation}
Furthermore, the slope mean and variance functions, $\mu_s(\gamma, \Dot{\gamma})$ and $\sigma_s(\gamma, \Dot{\gamma})$, are defined as
\begin{equation}
\begin{aligned}
    \mu_s(\gamma, \Dot{\gamma})
    &= \derivative{\mu(\gamma)}{\gamma} \Dot{\gamma}, \\
    \sigma_s(\gamma, \Dot{\gamma})
    &= \sqrt{2\left(\frac{\sigma(\gamma)}{R}\right)^2
        + \Dot{\gamma}^2\left(\derivative{\sigma(\gamma)}{\gamma}\right)^2 }.
\end{aligned}
\end{equation}
We may further generalise the surface model for a multi-scale framework by introducing a constant slope bias to the height profile, defined around a point $(x_0, y_0) \in \mathbb{R}^2 \times \mathbb{R}^2$ as
\begin{equation}
    \xi_b(x, y) = -\tan(\theta_{n_1}) \cos(\theta_{n_2}) \left( x - x_0\right) - \tan(\theta_{n_1}) \sin(\theta_{n_2}) \left(y - y_0\right),
\end{equation}
where $\theta_{n_1} \in [0, \pi)$ and $\theta_{n_2} \in [0, 2\pi)$ represent the bias angles. Then, the height and slope PDFs become
\begin{equation}
\begin{aligned}
    p_h^b(\xi \, | \, x, y) &= \int_{-\infty}^{\infty} p_{\gamma}(\gamma) \mathcal{G}\left(\xi, \mu(\gamma) - \xi_b(x, y), \sigma(\gamma)\right) \, \dif \gamma, \\
    p_{s_x}^b(\Dot{\xi}) &= \int_{-\infty}^{\infty}\int_{-\infty}^{\infty} p_{\gamma}(\gamma) p_{\Dot{\gamma}}(\Dot{\gamma}) \mathcal{G}\left(\Dot{\xi}, \mu_s(\gamma, \Dot{\gamma}) - \tan(\theta_{n_1}) \cos(\theta_{n_2}), \sigma_s(\gamma, \Dot{\gamma}) \right)  \, \dif \gamma \dif \Dot{\gamma}, \\
    p_{s_y}^b(\Dot{\xi}) &= \int_{-\infty}^{\infty}\int_{-\infty}^{\infty} p_{\gamma}(\gamma) p_{\Dot{\gamma}}(\Dot{\gamma}) \mathcal{G}\left(\Dot{\xi}, \mu_s(\gamma, \Dot{\gamma}) - \tan(\theta_{n_1}) \sin(\theta_{n_2}), \sigma_s(\gamma, \Dot{\gamma}) \right)  \, \dif \gamma \dif \Dot{\gamma}.
\end{aligned}
\end{equation}
Based on this, we may now define the PDF of sampling a local surface normal at scale $p$, $\mathbf{n_l}^p$, given normal $\mathbf{n_l}^{p-1}$ at scale $p-1$, for $p \geq 1$, as
\begin{multline} \label{eq:polygaussian_scattering}
     p_n^{(p)}\left(\mathbf{n_l}^{(p)} \, | \, \mathbf{n_l}^{(p-1)} \mathbf{v_i}^{(p-1)}, \sigma, \mu, R \right) = \frac{F_k^2}{4\pi v_z^2A^2}\infint{\infint{\infint{\frac{p_{\gamma}(\gamma)p_{\Dot{\gamma}_x}(\Dot{\gamma}_x)p_{\Dot{\gamma}_y}(\Dot{\gamma}_y)}{\sqrt{\left( \left(\frac{\sigma}{R} \right)^2  + \frac{1}{2} \sigma_{\gamma}^2 \Dot{\gamma}_x^2\right) \left( \left(\frac{\sigma}{R} \right)^2  + \frac{1}{2} \sigma_{\gamma}^2 \Dot{\gamma}_y^2 \right)}} \\ \times \exp \left[- \frac{(v_x + \left(\mu_{\gamma} \Dot{\gamma}_x - \tan(\theta_{n_1}) \cos(\theta_{n_2})\right) v_z)^2}{4 v_z^2 \left( \left(\frac{\sigma}{R} \right)^2  + \frac{1}{2} \sigma_{\gamma}^2 \Dot{\gamma}_x^2\right)} - \frac{(v_y + \left(\mu_{\gamma} \Dot{\gamma}_y - \tan(\theta_{n_1}) \sin(\theta_{n_2})\right) v_z)^2}{4 v_z^2 \left( \left(\frac{\sigma}{R} \right)^2  + \frac{1}{2} \sigma_{\gamma}^2 \Dot{\gamma}_y^2\right)}\right] \, \dif \gamma \dif \Dot{\gamma}_x \dif \Dot{\gamma}_y}}} \sin(\theta_{r_{1}}), 
\end{multline}
where $A$ is a normalisation constant, and the normal $\mathbf{n_l}^{(p)}$ is given by
\begin{equation}
    \mathbf{n_l}^{(p)} = \frac{1}{N}\begin{bmatrix}
        v_x \\ v_y \\ v_z
    \end{bmatrix} = \frac{1}{N} \begin{bmatrix}
       \sin(\theta_{i_1}) \cos(\theta_{i_2}) -  \sin(\theta_{r_1}) \cos(\theta_{r_2}) \\
        \sin(\theta_{i_1}) \sin(\theta_{i_2}) - \sin(\theta_{r_1}) \sin(\theta_{r_2}) \\
        - \cos(\theta_{i_1}) - \cos(\theta_{r_1})
    \end{bmatrix},
\end{equation}
with $N$ being a normalisation constant, $\theta_{i_1}$ and $\theta_{i_2}$ denoting the polar and azimuthal angles of the incident velocity, respectively, and $\theta_{r_1}$ and $\theta_{r_2}$ denoting the corresponding polar and azimuthal angles of the reflected velocity. The normal $\mathbf{n_l}^{(p-1)}$ and incident vector $\mathbf{v_i}$, on the other hand, take the form
\begin{equation}
    \mathbf{n_l}^{(p-1)} = \begin{bmatrix}
        \sin(\theta_{n_1}) \cos(\theta_{n_2}) \\
        \sin(\theta_{n_1}) \sin(\theta_{n_2}) \\
        \cos(\theta_{n_1})
    \end{bmatrix}, \qquad \mathbf{v_i}^{(p-1)} = \begin{bmatrix}
        \sin(\theta_{i_1}) \cos(\theta_{i_2}) \\
        \sin(\theta_{i_1}) \sin(\theta_{i_2}) \\
        \cos(\theta_{i_1})
    \end{bmatrix}.
\end{equation}
Furthermore, the term $F_k$ is given as
\begin{equation}
    F_k = \frac{1 + \cos(\theta_{i_1}) \cos(\theta_{r_1}) - \sin(\theta_{i_1})\sin(\theta_{r_1}) \cos(\theta_{r_2} - \theta_{i_2})}{2 \cos(\theta_{i_1}) \left(\cos(\theta_{i_1}) + \cos(\theta_{r_1})\right)}.
\end{equation}
We may further define the single-scale two-point shadowing function, $\mathcal{S}\left(\mathbf{r_{i_k}}, \mathbf{v_{i_k}}, t_{i_k} \, | \,  \mathbf{r_{i_{k-1}}}, \mathbf{v_{i_{k-1}}}, t_{i_{k-1}}\right)$ based on \cite{Anton2025}, as 
\begin{equation} \label{eq:polygaussian_shadow}
    \mathcal{S}\left(\mathbf{r_{i_k}}, \mathbf{v_{i_k}}, t_{i_k} \, | \,  \mathbf{r_{i_{k-1}}}, \mathbf{v_{i_{k-1}}}, t_{i_{k-1}}\right) = \mathcal{S}(\theta_{r_1}, \theta_{r_2}, \xi_{k} \, | \,  \xi_{k-1}, \theta_{n_1}, \theta_{n_2}) = \left[\frac{\mathcal{F}\left(\xi_{k-1} \, | \, \theta_{n_1}, \theta_{n_2} \right)}{\mathcal{F}\left(\xi_k \, | \, \theta_{n_1}, \theta_{n_2} \right)}\right]^{\mathcal{E}\left( \theta_{r_1}, \theta_{r_2}, \theta_{n_1}, \theta_{n_2}\right)},
\end{equation}
where the base function $\mathcal{F}\left(\xi \, | \, \theta_{n_1}, \theta_{n_2} \right)$ is given by
\begin{equation}
    \mathcal{F}\left(\xi \, | \, \theta_{n_1}, \theta_{n_2} \right) = \frac{1}{2}\infint{p_{\gamma}(\gamma)\left[1 + \erf\left( \frac{\xi - \left(\mu(\gamma) - \xi_b(x, y)\right)}{\sigma(\gamma)\sqrt{2}}\right) \right]\, \dif \gamma},
\end{equation}
and the exponent function $\mathcal{E}\left( \theta_{r_1}, \theta_{r_2}, \theta_{n_1}, \theta_{n_2}\right)$ takes the form
\begin{equation}
\begin{aligned}
        \mathcal{E}\left( \theta_{r_1}, \theta_{r_2}, \theta_{n_1}, \theta_{n_2}\right) &= {\frac{1}{\eta_b} \infint{\infint{p_{\gamma}(\gamma) p_{\Dot{\gamma}}(\Dot{\gamma}) \Delta(\gamma, \Dot{\gamma}) \, \dif \Dot{\gamma} \dif \gamma}}} \\
        \Delta(\gamma, \Dot{\gamma}) & = \frac{\sigma_s(\gamma, \Dot{\gamma})}{\sqrt{2\pi}}\exp\left[ - \frac{(\eta_b - \mu_{\gamma} \Dot{\gamma})^2}{2 \sigma_s(\gamma, \Dot{\gamma})^2}\right] - \frac{1}{2}\left( \eta_b - \mu_{\gamma} \Dot{\gamma}\right)\erfc\left[\frac{\eta_b - \mu_{\gamma} \Dot{\gamma}}{\sigma_s(\gamma, \Dot{\gamma})\sqrt{2}} \right] \\
        \text{with } \eta_b &= \frac{1}{\tan(\theta_{r_1})} - \tan(\theta_{n_1}) \cos(\theta_{n_2}) \cos(\theta_{r_2}) - \tan(\theta_{n_1}) \sin(\theta_{n_2}) \sin(\theta_{r_2}),
\end{aligned}
\end{equation}
where the new height $\xi_k = \xi_{k-1} + \eta_b r$. Finally, we may write the hazard probability as 
\begin{equation}
    f_{\varphi}\left(\mathbf{r_k}, \mathbf{v_k}, t_k\right) = -\frac{\partial }{\partial t}\left[\ln \mathcal{S}(\theta_{r_1}, \theta_{r_2}, \xi_{k}(t) \, | \,  \xi_{k-1}, \theta_{n_1}, \theta_{n_2}) \right] = \frac{v_k}{\sin(\theta_{r_1})} \frac{p_h\left(\xi_k\right)}{\mathcal{F}\left(\xi_{k} \, | \, \theta_{n_1}, \theta_{n_2} \right)} \mathcal{E}\left( \theta_{r_1}, \theta_{r_2}, \theta_{n_1}, \theta_{n_2}\right).
\end{equation}
It is trivial to show that the two-point shadowing function in \cref{eq:polygaussian_shadow} satisfies the three sufficient conditions in \cref{eq:shadowing_condition_1,eq:shadowing_condition_2,eq:shadowing_single_scale_condition_3}. Provided that the chosen local kernel $\mathcal{K_L}$ is a valid point-wise scattering kernel, i.e. $\mathcal{K_L} \in \mathcal{T}^p$, the single-reflection kernel $\mathcal{K_{MK}}$ constructed from the shadowing function and the normal PDF in \cref{eq:polygaussian_scattering} satisfies reciprocity, normalisation, and non-negativity in the sense of \cref{def:reciprocity,def:normalisation,def:nonnegativity}, by \cref{lemma:reciprocity_multi_refl,lemma:normalisation_multi_refl,lemma:nonnegativity_multi_refl}. Furthermore, by \cref{theorem:scattering_kernel_summation}, any multi-scale kernel constructed using the scattering operator associated with $\mathcal{K_{MK}}$, in the sense of \cref{def:scattering_operator}, also belongs to $\mathcal{T}^s$ as defined in \cref{def:scattering_kernel_set}. Here, we therefore aim to verify the physical validity of the multi-scale construction introduced in \cref{def:scattering_operator}. In particular, we examine the assumptions underlying \cref{eq:single_reflection_kernel,eq:normal_time_reversal}, namely that the normals associated with a given scale remain invariant over the distances travelled by particles during their local interactions with the surface. To this end, we use the Cercignani-Lampis-Lord (CLL) kernel of \cite{Lord1995} as the local kernel $\mathcal{K}_L$, with parameters $\alpha_n = 0.8$ and $\sigma_t = 0.1$. These values are taken as representative of satellite materials, following \cite{Jorge2025,Anton2025}. We then consider a two-scale poly-Gaussian surface composed of the single-scale surfaces $\Psi_1$ and $\Psi_2$. Their corresponding $\mu(\gamma)$ and $\sigma(\gamma)$ functions are shown in \cref{fig:polyGaussian_surface_renders}, with autocorrelation lengths $R_1=\SI{0.35}{\micro\meter}$ and $R_2=\SI{0.03}{\micro\meter}$, respectively. The same figure also shows generated square samples of $\Psi_1$ and $\Psi_2$, together with a sample of the resulting multi-scale surface $\Psi_1+\Psi_2$. Finally, we plot the power spectral densities of $\Psi_1$, $\Psi_2$, and the combined surface $\Psi_1+\Psi_2$.

The verification procedure consisted of generating three-dimensional sample geometries for each poly-Gaussian surface and performing Test-Particle Monte Carlo (TPMC) simulations on them. In each case, $N=20000$ oxygen atoms, with molar mass \SI{15.999}{\gram\per\mol}, were scattered from the surface. Two incidence conditions were considered: $(\theta_{i_1},\theta_{i_2})=(\SI{30}{\degree},\SI{0}{\degree})$ and $(\theta_{i_1},\theta_{i_2})=(\SI{60}{\degree},\SI{0}{\degree})$, both with a speed ratio of $S=8$. The surface temperature was kept fixed at $T_S=\SI{300}{\kelvin}$. The TPMC simulations were performed using the GSI\_Toolbox software \cite{GSIToolBox2025}. We then sampled the kernel $\mathcal{K_{MK}}$ $N=20000$ times using the poly-Gaussian functions $\mu(\gamma)$ and $\sigma(\gamma)$ shown in \cref{fig:polyGaussian_surface_renders}, under the same flow and surface conditions. The resulting reflected angular flux and kinetic energy marginals in the $x$-$z$ and $x$-$y$ planes were then compared with those obtained from the TPMC simulations. These comparisons are shown in \cref{fig:polyGaussian_scattering}.

We observe excellent overall agreement between the kernel predictions and the TPMC simulations across all cases considered. For the single-scale surfaces $\Psi_1$ and $\Psi_2$, this agreement is expected, since it is consistent with the accuracy previously reported for the poly-Gaussian scattering model in \cite{Anton2025} for single-scale roughness. Small discrepancies are observed only in the angular-flux histograms for $\Psi_2$ at an incidence angle of \SI{30}{\degree}. In this case, the kernel predicts a slightly more diffuse reflected distribution than the TPMC simulations. Similar discrepancies were reported in \cite{Anton2025} for Gaussian surfaces under comparable conditions, and were attributed to the assumption of independence between the height and slope PDFs entering the two-point shadowing function $\mathcal{S}$ used in the poly-Gaussian model. This discrepancy is therefore the only error source expected to propagate into the multi-scale scattering histograms. The large separation between the two surface scales, $R_1=\SI{0.35}{\micro\meter}$ and $R_2=\SI{0.03}{\micro\meter}$, supports the normal-invariance assumption used in the multi-scale kernel construction of \cref{eq:single_reflection_kernel}. This is indeed what is observed in the bottom panel of \cref{fig:polyGaussian_scattering}. For both incidence angles, \SI{30}{\degree} and \SI{60}{\degree}, the angular histograms predicted by the kernel are slightly more diffuse than the corresponding TPMC results, and a slightly larger degree of backscattering is predicted. 

\begin{figure}[H]
    \centering
    \includegraphics[width=0.99\linewidth]{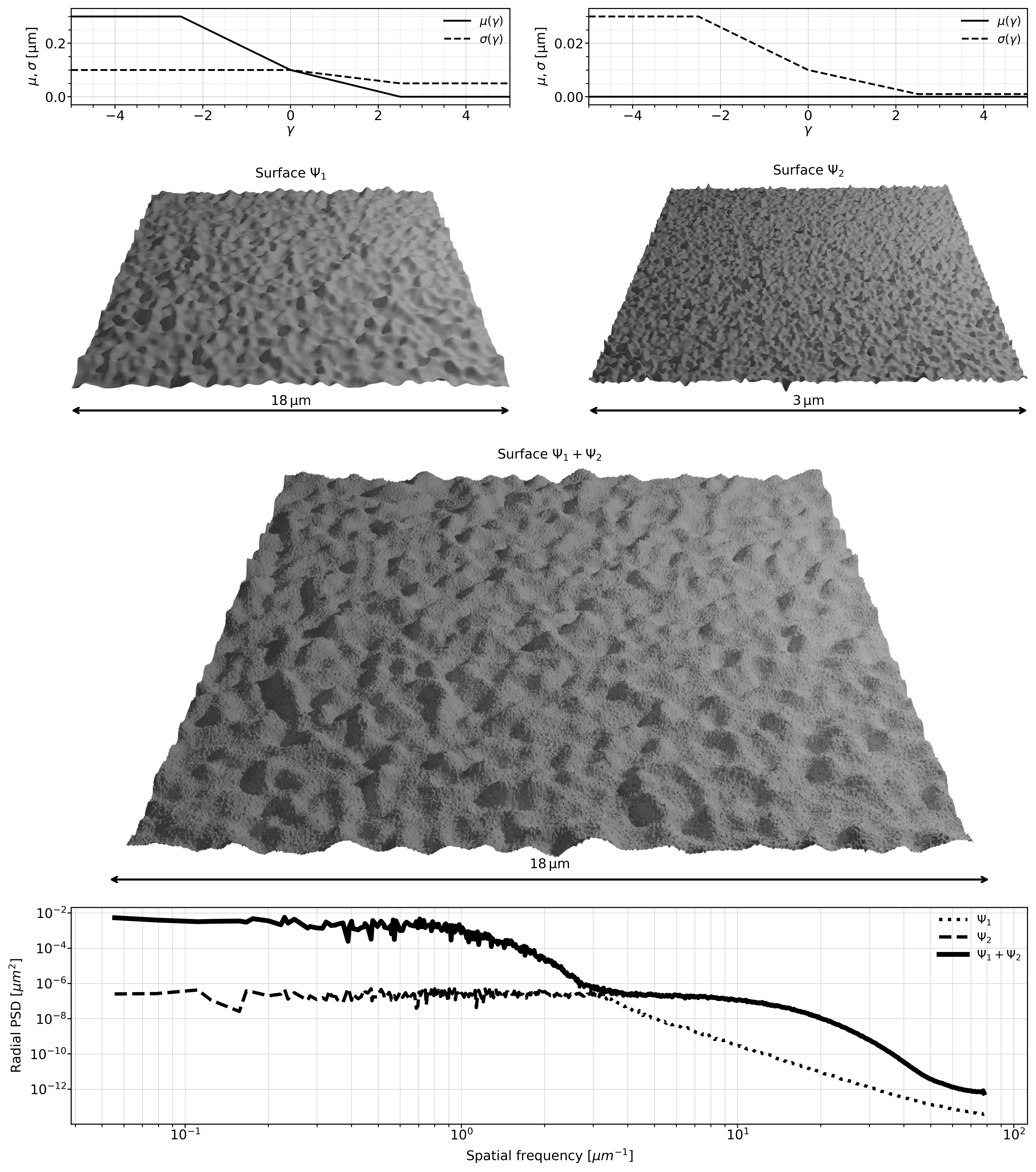}
    \caption{Representation of the multi-scale poly-Gaussian surface considered as a case study. The top panels show the poly-Gaussian functions $\mu(\gamma)$ and $\sigma(\gamma)$ for the component surfaces $\Psi_1$ and $\Psi_2$. The middle panels show realizations of $\Psi_1$, $\Psi_2$, and the combined multi-scale surface $\Psi_1+\Psi_2$. The bottom panel shows the corresponding power spectral densities.}
    \label{fig:polyGaussian_surface_renders}
\end{figure}

\begin{figure}[H]
    \centering
    \includegraphics[width=0.99\linewidth]{figures/scattering_results.jpg}
    \caption{A comparison of the reflected atomic-oxygen particle distributions obtained from the two-scale poly-Gaussian kernel and from TPMC simulations performed on the corresponding generated poly-Gaussian surfaces. The top panel shows the reflected angular flux and angular--energy flux histograms for surface $\Psi_1$, with autocorrelation length $R=\SI{0.35}{\micro\meter}$. The middle panel shows the same quantities for surface $\Psi_2$, with $R=\SI{0.03}{\micro\meter}$. The bottom panel shows the corresponding results for the combined multi-scale surface $\Psi_1+\Psi_2$.}
    \label{fig:polyGaussian_scattering}
\end{figure}

\section{Discussion and Conclusions}\label{sec:Discussion_Conclusions}

The operator $\circ$ defined in \cref{def:scattering_operator} and encoding encoded in \cref{eq:multi_reflection_kernel} provides, to the best of our knowledge, the most general available formulation of gas-particle scattering from an arbitrary homogeneous, moving, isopotential surface for a given local reflection kernel, $\mathcal{K_L}$. It is also the first kernel formulation to provide a formal treatment of the aggregation of multi-scale roughness effects in gas scattering, which is the configuration most frequently encountered on real satellite surfaces \cite{Anton2025, Groh_MISSE2, Jorge2025, Xu2025}. Several previous efforts to model atomic-scale corrugation and macro-scale roughness have yielded similar, though noticeably simplified, formulations. A fixed Gaussian surface model using the Cercignani–Lampis–Lord kernel \cite{Cercignani1971,Lord1995} as a local scattering law was developed in \cite{Liang2018} and proven to satisfy reciprocity. This model was further extended in \cite{Park2025} to account for velocity-dependent corrugations typical of an isopotential surface. Indeed, our formulation reduces to the latter when the local kernel and shadowing function dependency on position and time is removed, i.e. $\mathcal{K_K}(\mathbf{r_i} \rightarrow \mathbf{r_r}, \mathbf{v_i} \rightarrow \mathbf{v_r}, t_i \rightarrow t_r) = \mathcal{K_K}( \mathbf{v_i} \rightarrow \mathbf{v_r})$, $\mathcal{S}(\mathbf{r_i}, \mathbf{v_i}, t_i \, | \, \mathbf{r_r}, \mathbf{v_r}, t_r) = 1$, $f_{\varphi}(\mathbf{r_{i_k}}, \mathbf{v_{i_k}}, t_{i_k}) = \delta(\mathbf{r_{i_k}} - \mathbf{r_{i}})\delta(\mathbf{v_{i_k}} - \mathbf{v_{i}})\delta(t_{i_k} - t_i)$ and $\mathcal{S}(\mathbf{r}, \mathbf{v}, t) = \mathcal{M}(\mathbf{r}, -\mathbf{v}, t) = \mathcal{S}(\mathbf{v})$, and $p_n$ is assumed to follow a Gaussian distribution. If the dependence of $p_n$ on $\mathbf{v}$ is further removed, then one recovers the formulation in \cite{Liang2018}.

Nevertheless, the applicability of the $\circ$ operator relies on several assumptions concerning the two-point shadowing function, $\mathcal{S}(\mathbf{r_i}, \mathbf{v_i}, t_i \, | \, \mathbf{r_r}, \mathbf{v_r}, t_r)$, and the local normal PDF, $p_n(\mathbf{n_L} \, | \, \mathbf{n_G}, \mathbf{v_i})$. In particular, we imposed the three sufficient conditions on the shadowing function given in \cref{eq:shadowing_condition_1}, \cref{eq:shadowing_condition_2}, and \cref{eq:shadowing_single_scale_condition_3}. However, the second is always satisfied by construction, owing to the definition of $f_{\varphi}(\mathbf{r_i}, \mathbf{v_i}, t_i)$ in \cref{def:shadowing_single_scale}. Therefore, only two non-trivial conditions remain for $\mathcal{S}$, namely \cref{eq:shadowing_condition_1} and \cref{eq:shadowing_single_scale_condition_3}. Both conditions are physically natural. The first states that visibility between two points is independent of the direction in which the connecting path is traversed. Equivalently, if a particle can travel unobstructed from one point to another, then the reversed path is also unobstructed. The second condition guarantees that escape is possible along every reachable direction in the open upper hemisphere.

The assumptions associated with the local normal PDF of $\Psi$, $p_n(\mathbf{n_L} \, | \, \mathbf{n_G}, \mathbf{v_i})$, are more restrictive. In particular, the single-reflection kernel in \cref{eq:single_reflection_kernel} assumes that the local surface normal $\mathbf{n_L}$ remains constant during the local gas-surface interaction and is equal to its value at the beginning of the interaction, namely $\mathbf{n_L} = \mathbf{n_L}(t_i)$. This assumption is physically reasonable only when the interaction takes place on a scale much smaller than the roughness scale $R$ of $\Psi$, such that the surface is locally smooth. By contrast, \cref{eq:scattering_operator_summation} allows the actions of the multi-scale surfaces $\Psi_1^{(p)}$ and $\Psi_2^{(q)}$ on $\mathcal{K_L}$ to commute. While this commutation is mathematically permitted in the current framework, it is not physically consistent with the assumptions introduced in \cref{eq:single_reflection_kernel}, \cref{def:surface_multiscale} and \cref{def:shadowing_multi_scale}. A further assumption, stated in \cref{eq:normal_time_reversal} and necessary for pointwise reciprocity, requires the normal PDF $p_n$ to remain unchanged under time reversal of the local particle trajectory. This property follows directly when $\Psi$ is an isopotential surface associated with a conservative field. Such an approximation is appropriate for the Coulomb fields generated by the surface atoms of most materials of interest in satellite aerodynamics \cite{Anton2025, Jorge2025, Xu2023, Xu2025}. 

The final assumption concerns the composition of scattering kernels under surface addition, as described by \cref{eq:scattering_operator_summation}. In particular, this construction requires a conditional normal PDF,
$p_{n}^{(2)}\bigl(\mathbf{n}^{(2)} \mid \mathbf{n}^{(1)},\mathbf{v}^{(1)}_i\bigr)$, to represent the roughness of a smaller-scale component surface $\Psi_2$ in frame $F_1$, which is local to the larger-scale component surface $\Psi_1$ and global to $\Psi_2$. The key point is that the roughness described by $p_{n}^{(2)}$, as described in \cref{def:surface_multiscale_height}, is generally biased with respect to frame $1$, since its mean orientation is determined by $\mathbf{n}^{(1)}$ as expressed in the most global frame $F_0$. Consequently, this roughness cannot, in general, be represented in frame $F_1$ by an unbiased height profile of the form $\xi_2^{(1)}(x^{(1)},y^{(1)},t)$. Such a height representation is well defined only in the common global frame $F_0$, where both component surfaces are expressed through $\xi_1^{(0)}$ and $\xi_2^{(0)}$ and may therefore be summed consistently. For the present framework to remain physically meaningful under surface addition, the statistical model for $\Psi_2$ must therefore allow for a slope-bias term whose direction is determined by $\mathbf{n}^{(1)}$ expressed in frame $F_0$. Under a sufficiently large separation of roughness scales, this bias may be treated as constant over a particle-surface interaction. In \cref{fig:polyGaussian_scattering}, we showed that this approximation remains accurate for a scale ratio $R_1/R_2 \approx 10$, using the poly-Gaussian scattering model developed in \cite{Anton2025}. Future work will investigate how this ratio affects the $L_2$ error norm of the aerodynamic force produced by gas particles scattering from a multi-scale surface.

\backmatter

\section*{Data Availability}

The data that supports the findings of this study are available from the corresponding author upon reasonable request.

\section*{Competing interest}

The authors declare no competing interests

\section*{Acknowledgements}

This work was funded by the Dutch Research Council (NWO) under Grant No. ENW.GO.001.008.





\bibliography{sn-bibliography}


\end{document}